\theoremstyle{plain}
\newtheorem{theorem}{Theorem}
\newtheorem{lemma}{Lemma}
\newtheorem{proposition}{Proposition}
\theoremstyle{definition}
\newtheorem{definition}{Definition}
\theoremstyle{remark}
\newtheorem{remark}{Remark}
\newcommand{\pset}[1]{\mathcal{#1}}
\DeclareMathOperator{\numer}{Num}
\DeclareMathOperator{\denom}{Den}
\DeclareMathOperator{\Det}{Det}
\DeclareMathOperator{\Tr}{Tr}
\DeclareMathOperator{\res}{res}
\begin{document}
\begin{CJK}{UTF8}{gbsn}

\title{A Bertrand duopoly game with differentiated products reconsidered}

\author[a]{Xiaoliang Li}

\author[b]{Bo Li\thanks{Corresponding author: libomaths@163.com}}

\affil[a]{School of Digital Economics, Dongguan City University, Dongguan 523419, China}

\affil[b]{School of Finance, Anhui University of Finance and Economics, Bengbu 233030, China}

\date{}
\maketitle

\begin{abstract}
In this paper, we explore a dynamic Bertrand duopoly game with differentiated products, where firms are boundedly rational and consumers are assumed to possess an underlying CES utility function. We mainly focus on two distinct degrees of product substitutability. Several tools based on symbolic computations such as the triangular decomposition method and the PCAD method are employed in the analytical investigation of the model. The uniqueness of the non-vanishing equilibrium is proved and rigorous conditions for the local stability of this equilibrium are established for the first time. Most importantly, we find that increasing the substitutability degree or decreasing the product differentiation has an effect of destabilization for our Bertrand model, which is in contrast with the relative conclusions for the Cournot models. This finding could be conducive to the revelation of the essential difference between dynamic Cournot and Bertrand oligopolies with differentiated goods. In the special case of identical marginal costs, we derive that lower degrees of product differentiation mean lower prices, higher supplies, lower profits, and lower social welfare. Furthermore, complex dynamics such as periodic orbits and chaos are reported through our numerical simulations. 

\vspace{10pt}
\noindent\emph{Keywords: Bertrand duopoly; differentiated product; symbolic computation; local stability}
\end{abstract}

\section{Introduction}

It is well known that Cournot \cite{Cournot1838R} developed the first formal theory of oligopoly, which is a market supplied by only a few firms. In Cournot's framework, firms are supposed to make decisions on their quantities of outputs and have perfect information on their rivals' strategic behavior. In the strand of Cournot oligopoly models, the market demand function is usually supposed to be linear for simplicity by many economists (e.g., Fisher \cite{Fisher1961T}, McManus and Quandt\cite{McManus1961C}). In the real world, however, a non-linear demand is more likely to exist. Puu \cite{Puu1991C} investigated a Cournot duopoly game under an isoelastic market demand, where the price is simply the reciprocal of the total supply. Afterward, fruitful contributions including \cite{Ahmed2000O, Askar2014T, Bischi2007O, Canovas2018O, Cavalli2015N, Elsadany2017D, Kopel1996S, Li2020N, Li2022C, Li2022A, Matsumoto2022N, Naimzada2009C}, were made in the literature on Cournot games. Related to our study, Zhang and Zhang \cite{Zhang1996S} considered a Cournot game in which each firm produces multiple products and sells them in multiple markets. They obtained sufficient and necessary conditions for the local stability of the Cournot-Nash equilibria.

Several decades later after Cournot's seminal work, Bertrand \cite{Bertrand1883R} proposed a different framework to describe oligopolistic competition, where prices rather than quantities are the strategic variables of the competitors. Singh and Vives \cite{Singh1984P} analyzed the duality of prices and quantities, and found that Cournot (Bertrand) competition with substitutes is the dual of Bertrand (Cournot) competition with complements. López and Naylor \cite{Lopez2004T} compared Cournot and Bertrand equilibria in a downstream differentiated duopoly, and proved that the classic conclusion that profits under Cournot equilibrium exceed those under Bertrand competition could be reversible in the case of imperfect substitutes. Zhang et al.\ \cite{Zhang2009T} considered a Bertrand model formulated under a linear inverse demand, and obtained the existence and stability of the equilibrium. Different from \cite{Zhang2009T}, Fanti et al.\ \cite{Fanti2013T} developed a model with sound microeconomic foundations that determine the demand for differentiated products, and showed that synchronized dynamics and intermittency phenomena may appear. Naimzada and Tramontana \cite{Naimzada2012D} also considered a Cournot-Bertrand duopoly model with product differentiation and emphasized the role of best response dynamics and an adaptive adjustment mechanism for stability. Brianzoni et al.\ \cite{Brianzoni2015D} assumed quadratic costs in the study of the Bertrand duopoly game with horizontal product differentiation and discovered synchronized dynamics. Moreover, Ma and Guo \cite{ma_i2016} studied the impacts of information on the dynamical Bertrand game. They showed that there exists a fixed point independent of the amount of information for a triopoly, and the stable region of adjustment parameter increases with the amount of information for a duopoly.


%

In all the aforementioned Bertrand games, the inverse demand function is supposed to be linear. Instead, Gori and Sodini \cite{Gori2017P} explored the local and global dynamics of a Bertrand duopoly with a nonlinear demand and horizontal product differentiation. Furthermore, Ahmed et al.\ \cite{Ahmed2015O} proposed a dynamic Bertrand duopoly game with differentiated products, where firms are boundedly rational and consumers are assumed to possess an underlying CES utility function. They only employed numerical simulations to investigate the dynamic behavior of their model because the closed form of the equilibrium is extremely difficult to compute. They observed that the Nash equilibrium loses its stability through a period-doubling bifurcation as the speed of adjustment increases. Motivated by \cite{Ahmed2015O}, Agliari et al.\ \cite{Agliari2016N} investigated a Cournot duopoly game with differentiated goods. We should mention that Agliari et al.\ \cite{Agliari2016N} used the same CES utility function as \cite{Ahmed2015O} to derive the demand function of the market. {They discovered that a low degree of product substitutability or a higher degree of product differentiation may destabilize the Cournot game. This finding is in accordance with that of Fanti and Gori \cite{Fanti2012T}, where the authors introduced a Cournot duopoly with a linear demand and heterogeneous players to study the influence of product differentiation on stability and found that a higher degree of product differentiation may destabilize the market equilibrium.} 

In this paper, we re-study the Bertrand duopoly game of Ahmed et al.\ \cite{Ahmed2015O} using several tools based on symbolic computations such as the triangular decomposition method (see, e.g., \cite{Li2010D}) and the PCAD method (see, e.g., \cite{Collins1991P}). It is worth noting that the results of symbolic computations are exact, and thus can provide theoretical foundations for the systematic analysis of economic models. We analytically investigate the local stability and bifurcations of the model. By using several tools based on symbolic computations, the uniqueness of the non-vanishing equilibrium is proved and the rigorous conditions for the local stability of this equilibrium are obtained for the first time. In the special case that the two companies have identical marginal costs, we prove that the model can lose its stability only through a period-doubling bifurcation. The most important finding is that increasing the substitutability degree or decreasing the product differentiation has an effect of destabilizing the unique non-vanishing equilibrium. A possible explanation is that a decrease in product differentiation may result in an increase in market competition intensity and even a price war, which could lead to the destabilization of the equilibrium. It should be noted that our finding is in contrast with the relative conclusions by Agliari et al.\ \cite{Agliari2016N} and by Fanti and Gori \cite{Fanti2012T}. This contradiction contributes to the literature on the connection between Cournot and Bertrand oligopolies and may help reveal the essential difference between them. In the special case of identical marginal costs, we derive the fact that lower degrees of product differentiation can lead to lower prices, higher supplies, lower profits, and lower social welfare. This fact is in line with our economic intuition. Complex dynamics such as periodic orbits and chaos can be observed through our numerical simulations, which also confirm that an increase in the substitutability degree leads to the emergence of instability in the considered model. Furthermore, we discover the existence of a Neimark-Sacker bifurcation directly on the equilibrium, which is a new finding and has not yet been discovered by Ahmed et al.\ \cite{Ahmed2015O}

The rest of this paper is structured as follows. In Section 2, we revisit the construction of the Bertrand duopoly game investigated in our study. We analytically explore the stability and bifurcations of this model for two different substitutability degrees, namely $\alpha=1/2$ and $\alpha=1/3$, in Sections 3 and 4, respectively. The influence of the substitutability degree on the local stability of the equilibrium and related comparative statics are discussed in Section 5. Numerical simulations are provided in Section 6. Concluding remarks are given in Section 7.

\section{Model}

In our study, we consider a market where two firms compete with each other and produce differentiated goods. The prices and quantities of the two goods are denoted by $p_i$ and $q_i$, respectively, with $i=1,2$. Furthermore, it is assumed that the market possesses a continuum of identical consumers with a CES utility function of the form
$$U(q_1,q_2)=q_1^\alpha + q_2^\alpha,$$
where $\alpha$ ($0<\alpha<1$) is called the substitutability degree between the products. Consumers choose their consumptions by maximizing the utility subject to the budget constraint
$$p_1q_1+p_2q_2=1.$$

Consequently, we have the following demand functions (The reader can refer to \cite{Ahmed2015O} for the proof).
$$q_1=\frac{p_2^\beta}{p_1} \frac{1}{p_1^\beta+p_2^\beta},
~~q_2=\frac{p_1^\beta}{p_2} \frac{1}{p_1^\beta+p_2^\beta},$$
where
$\beta=\alpha/(1-\alpha)$. Thus, the inverse demands of the two goods are
\begin{equation}\label{eq:inv-dem}
	p_1=\frac{q_1^{\alpha-1}}{q_1^{\alpha}+q_2^{\alpha}},
~~p_2=\frac{q_2^{\alpha-1}}{q_1^{\alpha}+q_2^{\alpha}}.
\end{equation}

Accordingly, a decrease in $\alpha$ would make the products less substitutable or more differentiated. In particular, if $\alpha=0$, the inverse demands become $p_1=\frac{1}{2\,q_1}$ and $p_2=\frac{1}{2\,q_2}$, which means that the two goods are completely independent. If $\alpha=1$, we obtain the inverse demand $p_1=p_2=\frac{1}{q_1+q_2}$, which is the same as the famous isoelastic demand function introduced by Puu \cite{Puu1991C}. In this case, the prices of the two goods are equal. That is to say, the two commodities are regarded as indistinguishable or identical by consumers.

The cost functions are assumed to be linear, i.e.,
$$C_1(q_1)=c_1q_1,~~~C_2(q_2)=c_2q_2,$$
where
$c_1>0$ and $c_2>0$. Then the profit of firm $i$ ($i=1,2$) should be
\begin{equation}\label{eq:profit}
	\Pi_i(p_i,p_{-i})=p_iq_i-c_iq_i=(p_i-c_i)\frac{p_{-i}^\beta}{p_i} \frac{1}{p_i^\beta+p_{-i}^\beta},
\end{equation}
where $p_{-i}$ denotes the price of the commodity produced by the rival.

Furthermore, the gradient adjustment mechanism is formulated as
$$p_i(t+1)=p_i(t)+k_i\frac{\partial \Pi_i(t)}{\partial p_i(t)},$$
where $k_i>0$ controls the adjustment speed of firm $i$.
It is known that 
$$\frac{\partial \Pi_i}{\partial p_i}=\frac{-p_{-i}^{\beta} p_i^{1+\beta} \beta+\left(p_{-i}^{2 \beta}+p_{-i}^{\beta} p_i^{\beta} \left(1+\beta\right)\right) c_i}{p_i^{2} \left(p_i^{\beta}+p_{-i}^{\beta}\right)^{2}}.$$

In short, the model can be described as the following iteration map.
\begin{equation}\label{eq:iter-map}
\left\{
\begin{split}
&p_1(t+1)=p_1(t)+k_1\frac{-p_2^{\beta}(t) p_1^{1+\beta}(t) \beta+\left(p_2^{2 \beta}(t)+p_2^{\beta}(t) p_1^{\beta}(t) \left(1+\beta\right)\right) c_1}{p_1^{2}(t) \left(p_1^{\beta}(t)+p_2^{\beta}(t)\right)^{2}},\\
&p_2(t+1)=p_2(t)+k_2\frac{-p_1^{\beta}(t) p_2^{1+\beta}(t) \beta+\left(p_1^{2 \beta}(t)+p_1^{\beta}(t) p_2^{\beta}(t) \left(1+\beta\right)\right) c_2}{p_2^{2}(t) \left(p_2^{\beta}(t)+p_1^{\beta}(t)\right)^{2}}.
\end{split}\right.
\end{equation}

This game was first explored by Ahmed et al.\ \cite{Ahmed2015O} only through numerical simulations because no analytical expressions of the Nash equilibria are available. In this paper, we reconsider this game using methods based on symbolic computations and explore the influence of the substitutability degree on the local stability of the equilibrium. One can see that for general $\beta$, it is impossible to analyze the equilibrium point of map \eqref{eq:iter-map}, because the system will have an exponential parameter. For such systems with exponential parameters, existing analytical tools are quite limited. Therefore, similar to \cite{Ahmed2015O}, our study mainly focuses on two specific cases, namely $\beta=1$ and $\beta=1/2$, which are corresponding to $\alpha=1/2$ and $\alpha=1/3$, respectively.

\section{$\alpha=1/2$}

If $\alpha=1/2$, then $\beta=1$.  Hence, map \eqref{eq:iter-map} becomes
\begin{equation}\label{eq:map-12}
\left\{
\begin{split}
&p_1(t+1)=p_1(t)+k_1\frac{-2\,p_2(t) p_1^{2}(t)+\left(p_2^{2}(t)+2\,p_2(t) p_1(t)\right) c_1}{p_1^{2}(t) \left(p_1(t)+p_2(t)\right)^{2}},\\
&p_2(t+1)=p_2(t)+k_2\frac{-2\,p_1(t) p_2^{2}(t)+\left(p_1^{2}(t)+2\, p_1(t) p_2(t)\right) c_2}{p_2^{2}(t) \left(p_2(t)+p_1(t)\right)^{2}}.
\end{split}\right.
\end{equation}

From an economic point of view, it is important to identify the number of non-vanishing equilibria $(p_1,p_2)$ with $p_1>0$ and $p_2>0$.
In order to compute the equilibrium, we set $p_1(t+1)=p_1(t)=p_1$ and $p_2(t+1)=p_2(t)=p_2$. Then the following equations of the equilibrium are acquired. 
\begin{equation}\label{eq:equi-a12}
\left\{
\begin{split}
	-2p_2 p_1^{2}+\left(p_2^{2}+2p_2 p_1\right) c_1=0,\\
	-2p_1 p_2^{2}+\left(p_1^{2}+2p_1 p_2\right) c_2=0.
\end{split}
\right.
\end{equation}



The triangular decomposition method, which can be viewed as an extension of the Gaussian elimination method, permits us to analyze the equilibria of non-linear economic models. Both the method of triangular decomposition and the method of Gaussian elimination can transform a system into triangular forms. However, the triangular decomposition method is feasible for polynomial systems, while the Gaussian elimination method is just for linear systems. Refer to \cite{Aubry1999T, Kalkbrener1993A, Li2010D, Wang2000C, Wu1986B} for more information on triangular decomposition. Specifically, using the triangular decomposition method, we can decompose the solutions of system \eqref{eq:equi-a12} into zeros of the following two triangular polynomial sets.
\begin{equation}\label{eq:ts12}
\begin{split}
	&\pset{T}_{11}=\left[p_1,~p_2\right],\\
	&\pset{T}_{12}=\left[p_1^{3}-4\,c_1p_1^{2}+(4\,c_1^2-2\,c_1c_2)p_1+3\,c_1^2c_2,~c_1p_2-p_1^{2}+2\,c_1p_1\right].		
\end{split}
\end{equation}

The zero of $\pset{T}_{11}$ is corresponding to the origin $(0,0)$. Moreover, the non-vanishing equilibria can be computed from $\pset{T}_{12}$. The first polynomial $p_1^{3}-4\,c_1p_1^{2}+(4\,c_1^2-2\,c_1c_2)p_1+3\,c_1^2c_2$ of $\pset{T}_{12}$ is univariate in $p_1$ and the second polynomial $c_1p_2-p_1^{2}+2\,c_1p_1$ of $\pset{T}_{12}$ has degree $1$ with respect to $p_2$. Consequently, if we solve $p_1$ from the first polynomial, then we can substitute the solution of $p_1$ into the second polynomial and easily obtain $p_2$. As the first polynomial of $\pset{T}_{12}$ has degree $3$ with respect to $p_1$, we know that there are at most $3$ positive real solutions. Their analytical expressions exist but are quite complicated, though.

This is not an easy task to identify the exact number of positive real solutions if the analytical solutions of $\pset{T}_{12}$ are complicated. However, the first author of this paper and his co-worker \cite{Li2014C} proposed an algebraic algorithm to systematically identify multiplicities of equilibria in semi-algebraic economies without obtaining the closed-form solutions. We summarize the computational results for map \eqref{eq:map-12} in Proposition \ref{prop:unique-sol-12}. Interested readers can refer to Section 3 of \cite{Li2014C} for additional details of the algorithm.


\begin{proposition}\label{prop:unique-sol-12}
	Let $\alpha=1/2$. The iteration map \eqref{eq:map-12} possesses one unique equilibrium $(p_1,p_2)$ with $p_1>0$ and $p_2>0$.
\end{proposition}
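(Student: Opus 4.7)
The plan is to exploit the triangular set $\pset{T}_{12}$ in \eqref{eq:ts12}. Its second polynomial is linear in $p_2$ and determines $p_2 = p_1(p_1 - 2c_1)/c_1$ uniquely from $p_1$. Hence counting non-vanishing equilibria amounts to counting real roots $p_1$ of the cubic $f(p_1) := p_1^3 - 4c_1 p_1^2 + (4c_1^2 - 2c_1 c_2)p_1 + 3c_1^2 c_2$ for which both $p_1 > 0$ and $p_2 > 0$. Since $c_1 > 0$, these two positivity conditions collapse to the single inequality $p_1 > 2 c_1$, and the claim reduces to showing that $f$ has exactly one real root in $(2c_1, +\infty)$ for every $c_1, c_2 > 0$.

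The main step is a sign analysis of $f$ at three distinguished locations. Direct substitution yields $f(0) = 3 c_1^2 c_2 > 0$ and $f(2 c_1) = -c_1^2 c_2 < 0$, where the latter comes from an exact cancellation of the $c_1^3$ contributions among the four summands. Combined with the limits $f(p_1) \to -\infty$ as $p_1 \to -\infty$ and $f(p_1) \to +\infty$ as $p_1 \to +\infty$, the intermediate value theorem places at least one real root of $f$ in each of the intervals $(-\infty, 0)$, $(0, 2c_1)$, and $(2c_1, +\infty)$. Because $f$ has degree three, the total number of real roots is at most three, so each interval contains exactly one. The unique root in $(2c_1, +\infty)$, together with the corresponding $p_2 = p_1(p_1 - 2c_1)/c_1 > 0$, yields the only non-vanishing equilibrium.

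The single nontrivial verification is the sign $f(2c_1) = -c_1^2 c_2 < 0$, which rests on the cancellation $8c_1^3 - 16c_1^3 + 8c_1^3 = 0$ among the cubic-in-$c_1$ terms; I would expand this step explicitly since the entire argument hinges on it. Everything else uses only the degree of $f$ and the intermediate value theorem, uniformly on the parameter region $c_1, c_2 > 0$, so no parameter case split is needed. An alternative route, more in the spirit of the authors' symbolic-computation framework, would be to feed system \eqref{eq:equi-a12} directly into the algorithm of \cite{Li2014C}: a discriminant-and-subresultant partition of the parameter space would certify the same count mechanically, at the price of a heavier computation.
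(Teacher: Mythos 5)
Your argument is correct, but it is genuinely different from what the paper does. The paper never analyzes the cubic by hand: it invokes the black-box real-root-classification algorithm of \cite{Li2014C} (discriminant/subresultant machinery applied to the semi-algebraic system built from $\pset{T}_{12}$) and simply reports the resulting count as Proposition \ref{prop:unique-sol-12} --- exactly the "alternative route" you mention at the end. Your proof replaces that mechanical certificate with a three-line intermediate value theorem argument: the checks $f(0)=3c_1^2c_2>0$ and $f(2c_1)=8c_1^3-16c_1^3+8c_1^3-4c_1^2c_2+3c_1^2c_2=-c_1^2c_2<0$ are right, the reduction of $p_1>0,\ p_2>0$ to $p_1>2c_1$ via $p_2=p_1(p_1-2c_1)/c_1$ is right, and the pigeonhole on the three sign changes of a degree-three polynomial correctly isolates exactly one root in $(2c_1,+\infty)$, uniformly in $c_1,c_2>0$. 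What your approach buys is a transparent, human-checkable proof with no appeal to external software; what the paper's approach buys is uniformity --- the same algorithm handles the $\alpha=1/3$ case, where the univariate polynomial in $\pset{T}_{32}$ has degree eight and no comparably clean interval analysis is available. One caveat: your proof inherits its starting point from the paper, namely that $\pset{T}_{12}$ in \eqref{eq:ts12} exactly captures the non-vanishing solutions; this is easy to re-derive directly (for $p_1,p_2>0$ one may divide the first equation by $p_2$, solve $c_1p_2=p_1^2-2c_1p_1$, and substitute into the second), and doing so would also insulate you from the fact that the displayed system \eqref{eq:equi-a12} carries a stray factor of $2$ on its leading terms that is inconsistent with $\pset{T}_{12}$ and with the equilibrium $(3c,3c)$ --- the triangular set, not the displayed system, is the correct object to work from, as you did.
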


To explore the local stability of the equilibrium, the following Jacobian matrix plays an ambitious role.
$$
J=\left[
\begin{matrix}
J_{11} & J_{12}\\
J_{21} & J_{22}	
\end{matrix}
\right],
$$
where
\begin{align*}
&J_{11}=\frac{p_1^{6}+3\,p_1^{5}p_2+3\,p_1^{4}p_2^{2}+\left(p_2^{3}+2\,k_1p_2\right)p_1^{3}-6\,k_1p_2 p_1^{2}c_1-6\,k_1 p_2^{2}p_1c_1-2\,c_1k_1p_2^{3}}{p_1^{3} \left(p_1+p_2\right)^{3}},\\
&J_{12}= \frac{k_1 \left(2\,c_1-p_1+p_2\right)}{\left(p_1+p_2\right)^{3}},\\
&J_{21}= \frac{k_2 \left(2\,c_2+p_1-p_2\right)}{\left(p_1+p_2\right)^{3}},\\
&J_{22}= \frac{p_2^{6}+3\,p_1p_2^{5}+3\,p_1^{2}p_2^{4}+\left(p_1^{3}+2\,k_2p_1\right)p_2^{3}-6\,k_2p_1p_2^{2}c_2-6\,k_2p_1^{2}p_2c_2-2\,c_2k_2 p_1^{3}}{p_2^{3} \left(p_1+p_2\right)^{3}}.
\end{align*}

Then the characteristic polynomial of $J$ is
$$CP(\lambda)=\lambda^2-\Tr(J)\lambda+\Det(J),$$
where $\Tr(J)=J_{11}+J_{22}$ and $\Det(J)=J_{11}J_{22}-J_{12}J_{21}$ are the trace and the determinant of $J$, respectively.
According to the Jury criterion \cite{Jury1976I}, the conditions for the local stability include:
\begin{enumerate}
	\item $CD_1^J\equiv CP(1)= 1-\Tr(J)+\Det(J)>0$,
	\item $CD_2^J\equiv CP(-1)= 1+\Tr(J)+\Det(J)>0$,
	\item $CD_3^J\equiv 1-\Det(J)>0$.
\end{enumerate}

\begin{remark}\label{rm:bif}
Furthermore, it is known that the discrete dynamic system may undergo a fold, period-doubling, or Neimark-Sacker bifurcation when the equilibrium loses its stability at $CD_1^J=0$, $CD_2^J=0$, or $CD_3^J=0$, respectively.
\end{remark}

\subsection{The special case of $c_1=c_2$}

If we set $c_1=c_2=c$ in \eqref{eq:equi-a12}, then the triangular decomposition method permits us to transform the equilibrium equations \eqref{eq:equi-a12} into the following three triangular sets.
\begin{align*}
	&\pset{T}_{21}=[p_1,~p_2],\\
	&\pset{T}_{22}=[p_1-3\,c,~p_2-3\,c],\\
	&\pset{T}_{23}=[p_1^{2}-cp_1-c^2,~p_2+p_1-c ].
\end{align*}
The zero of $\pset{T}_{21}$ is simply $(0,0)$.
From $\pset{T}_{23}$, we obtain two zeros\footnote{These zeros can also be obtained from $\pset{T}_{12}$ in \eqref{eq:ts12} by setting $c_1=c_2=c$.}
$$\left(\left(\frac{\sqrt{5}}{2}+\frac{1}{2}\right) c,\left(-\frac{\sqrt{5}}{2}+\frac{1}{2}\right) c\right),~~~\left(\left(-\frac{\sqrt{5}}{2}+\frac{1}{2}\right) c,\left(\frac{\sqrt{5}}{2}+\frac{1}{2}\right) c\right),$$
which are useless as the component $\left(-\frac{\sqrt{5}}{2}+\frac{1}{2}\right) c$ is negative. Therefore, the only non-vanishing equilibrium is $(3\,c,3\,c)$, which can be obtained from $\pset{T}_{22}$. 



\begin{theorem}\label{thm:stability12-c}
	Let $\alpha=1/2$ and $c_1=c_2=c$. The unique non-vanishing equilibrium $(3\,c,3\,c)$ is locally stable if
$$c^2>\frac{2\, k_1+2\, k_2+\sqrt{4\, k_1^{2}-7\, k_1 k_2+4\, k_2^{2}}}{216}.$$
The system may undergo a period-doubling bifurcation when
$$c^2=\frac{2\, k_1+2\, k_2+\sqrt{4\, k_1^{2}-7\, k_1 k_2+4\, k_2^{2}}}{216}.$$
Furthermore, there exist no other bifurcations of the equilibrium.
\end{theorem}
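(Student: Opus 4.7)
The plan is to substitute the equilibrium $(3c,3c)$ into the Jacobian $J$ of Section~3 and check the three Jury conditions one at a time. Under the symmetry $p_1=p_2=3c$, $c_1=c_2=c$ the entries collapse: I expect $J_{ii} = 1 - k_i/(27c^2)$ and the off-diagonals to be proportional to $k_i/(108 c^2)$, so that $\Tr(J)$ and $\Det(J)$, and hence each $CD_i^J$, reduce to simple rational functions of $c^2$ and $(k_1,k_2)$.

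First I would observe that a direct computation gives $CD_1^J$ as a positive multiple of $k_1 k_2/c^4$, so $CD_1^J > 0$ holds identically and no fold bifurcation can occur. Next, clearing the denominator in $CD_2^J > 0$ produces a quadratic inequality in $c^2$ of the form $15552\, c^4 - 288(k_1+k_2)c^2 + 5 k_1 k_2 > 0$, whose discriminant factors as a positive multiple of the quadratic form $4k_1^2 - 7k_1 k_2 + 4 k_2^2$. Since this form has negative discriminant (viewed as a polynomial in $k_1$), it is strictly positive on the admissible region, the two real roots $c_\pm^2$ exist, and the larger root $c_+^2$ is exactly the threshold appearing in the theorem.

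The main obstacle is the third condition. I compute $CD_3^J > 0 \iff c^2 > c_3^2 := 5 k_1 k_2 / [144(k_1+k_2)]$, and I need to show that on the branch $c^2 > c_+^2$ this automatically holds, while on the other branch $c^2 < c_-^2$ it fails so that that branch contributes no extra stability region. The key inequality is $c_+^2 > c_3^2$, which I would derive from the crude lower bound $c_+^2 > (k_1+k_2)/108$ combined with the AM-GM estimate $(k_1+k_2)^2 \ge 4 k_1 k_2 > (15/4)\, k_1 k_2$. A parallel sign check as $c \to 0^+$ shows $CD_3^J < 0$ on the lower branch, eliminating it from further consideration.

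Putting everything together, local stability holds precisely on $c^2 > c_+^2$, as claimed. At the boundary only $CD_2^J$ vanishes while $CD_1^J$ and $CD_3^J$ remain strictly positive, which by Remark~\ref{rm:bif} produces a period-doubling bifurcation. The facts that $CD_1^J > 0$ everywhere and $CD_3^J > 0$ throughout the stability region rule out fold and Neimark-Sacker bifurcations respectively, giving the final ``no other bifurcations'' claim.
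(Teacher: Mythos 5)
Your proposal follows essentially the same route as the paper: substitute $(3c,3c)$ into $J$, form the three Jury quantities, note that $CD_1^J=\tfrac{5k_1k_2}{3888c^4}>0$ identically, reduce $CD_2^J>0$ to $15552c^4-288(k_1+k_2)c^2+5k_1k_2>0$ whose larger root $c_+^2$ is the stated threshold, and compare with the $CD_3^J$ threshold $c_3^2=\tfrac{5k_1k_2}{144(k_1+k_2)}$. (Your off-diagonal entries $k_i/(108c^2)$ are the correct ones; the paper's displayed matrix shows $k_i/(216c^2)$, but its subsequent $CD_i^J$ formulas are consistent with $108$, so this is a typo there, not an error on your part.) Your explicit derivation of $c_+^2>c_3^2$ from $c_+^2>\tfrac{k_1+k_2}{108}$ together with $(k_1+k_2)^2\ge 4k_1k_2>\tfrac{15}{4}k_1k_2$ is correct and in fact supplies the detail the paper hides behind ``it can be proved that''.

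The one step that is not quite closed is the elimination of the lower branch. Checking the sign of $CD_3^J$ as $c\to 0^+$ only shows it is negative near zero; it does not by itself exclude a sliver $c_3^2<c^2<c_-^2$ on which $CD_2^J>0$ and $CD_3^J>0$ would both hold. What you actually need is $c_-^2\le c_3^2$, and this follows in one line from facts you already have: the product of the two roots of the quadratic is $c_+^2c_-^2=\tfrac{5k_1k_2}{15552}$, so $c_-^2<c_3^2$ is equivalent to $c_+^2>\tfrac{k_1+k_2}{108}$, which is exactly your crude lower bound. With that patch the ordering $c_-^2<c_3^2<c_+^2$ asserted in the paper is established, the stability region is exactly $c^2>c_+^2$, and the bifurcation statements follow from Remark~\ref{rm:bif} as you say.
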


\begin{proof}
Substituting $p_1=3\,c$ and $p_2=3\,c$ into $J$, we obtain that the Jacobian matrix at $(3\,c,3\,c)$ to be
$$J(3\,c,3\,c)=
\left[
\begin{matrix}
\frac{27\, c^{2}-k_1}{27\, c^{2}} & \frac{k_1}{216\, c^{2}} 
\\
 \frac{k_2}{216\, c^{2}} & \frac{27\, c^{2}-k_2}{27\, c^{2}} 
\end{matrix}
\right].
$$
Consequently,
\begin{align*}
	&\Tr(J)=\frac{54\, c^{2}-k_1-k_2}{27\, c^{2}}, \\
	&\Det(J)=\frac{5184\, c^{4}-192\, c^{2} k_1-192\, c^{2} k_2+7\, k_1 k_2}{5184\, c^{4}}.
\end{align*}

One can verify that the first condition for the local stability is always fulfilled since $k_1,k_2,c>0$ and
$$CD_1^J\equiv 1-\Tr(J)+\Det(J)=\frac{5\, k_1 k_2}{3888\, c^{4}}.$$	

The second condition is 
$$CD_2^J\equiv 1+\Tr(J)+\Det(J)=\frac{15552\, c^{4}+\left(-288\, k_1-288\, k_2\right) c^{2}+5\, k_1 k_2}{3888\, c^{4}}>0,$$
which means that
$$
15552\, c^{4}+\left(-288\, k_1-288\, k_2\right) c^{2}+5\, k_1 k_2>0,
$$
i.e.,
$$c^2>\frac{2\, k_1+2\, k_2+\sqrt{4\, k_1^{2}-7\, k_1 k_2+4\, k_2^{2}}}{216}~~\text{or}~~
c^2<\frac{2\, k_1+2\, k_2-\sqrt{4\, k_1^{2}-7\, k_1 k_2+4\, k_2^{2}}}{216}.$$

The third condition is
$$CD_3^J\equiv 1-\Det(J)=\frac{\left(144\, k_1+144\, k_2\right) c^{2}-5\, k_1 k_2}{3888\, c^{4}}>0,$$
which implies that
$$\left(144\, k_1+144\, k_2\right) c^{2}-5\, k_1 k_2>0,$$
i.e.,
$$c^2>\frac{5\, k_1 k_2}{144 \left(k_1+k_2\right)}.$$

Furthermore, it can be proved that 
$$\frac{2\, k_1+2\, k_2-\sqrt{4\, k_1^{2}-7\, k_1 k_2+4\, k_2^{2}}}{216}<
\frac{5\, k_1 k_2}{144\, \left(k_1+k_2\right)}<\frac{2\, k_1+2\, k_2+\sqrt{4\, k_1^{2}-7\, k_1 k_2+4\, k_2^{2}}}{216}.$$
Accordingly, the equilibrium is locally stable if
$$c^2>\frac{2\, k_1+2\, k_2+\sqrt{4\, k_1^{2}-7\, k_1 k_2+4\, k_2^{2}}}{216}.$$
The rest of the proof follows immediately from Remark \ref{rm:bif}.
\end{proof}

Figure \ref{fig:c1c2-12} depicts two 2-dimensional cross-sections of the stability region reported in Theorem \ref{thm:stability12-c}. It is observed that an increase in the marginal cost $c$ or a decrease in the adjustment speeds $k_1,k_2$ has an effect of stabilizing the unique non-vanishing equilibrium.

\begin{figure}[htbp]
  \centering
  \subfloat[$k_2=1/10$]{\includegraphics[width=0.35\textwidth]{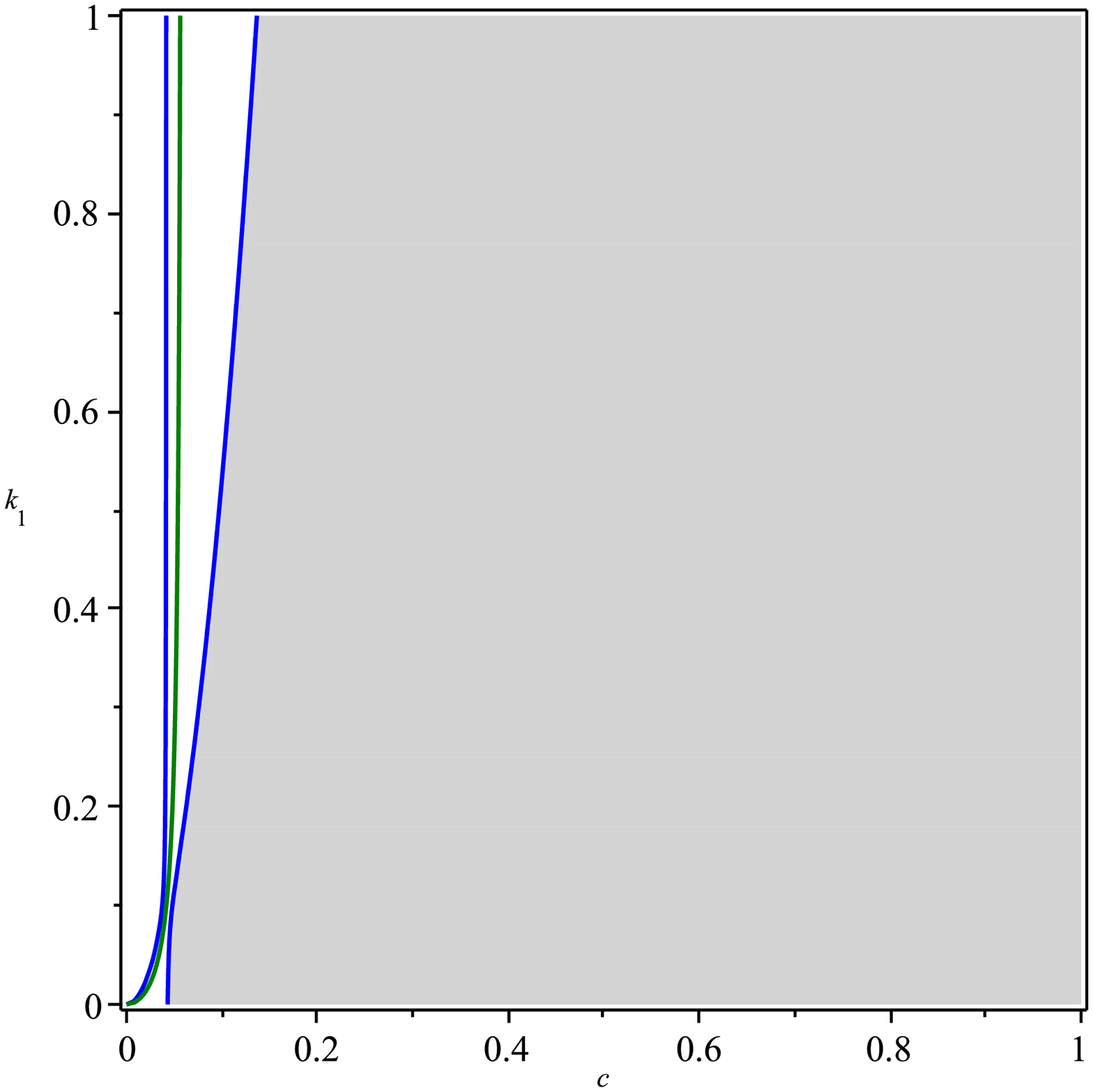}} 
  \subfloat[$c=1/3$]{\includegraphics[width=0.35\textwidth]{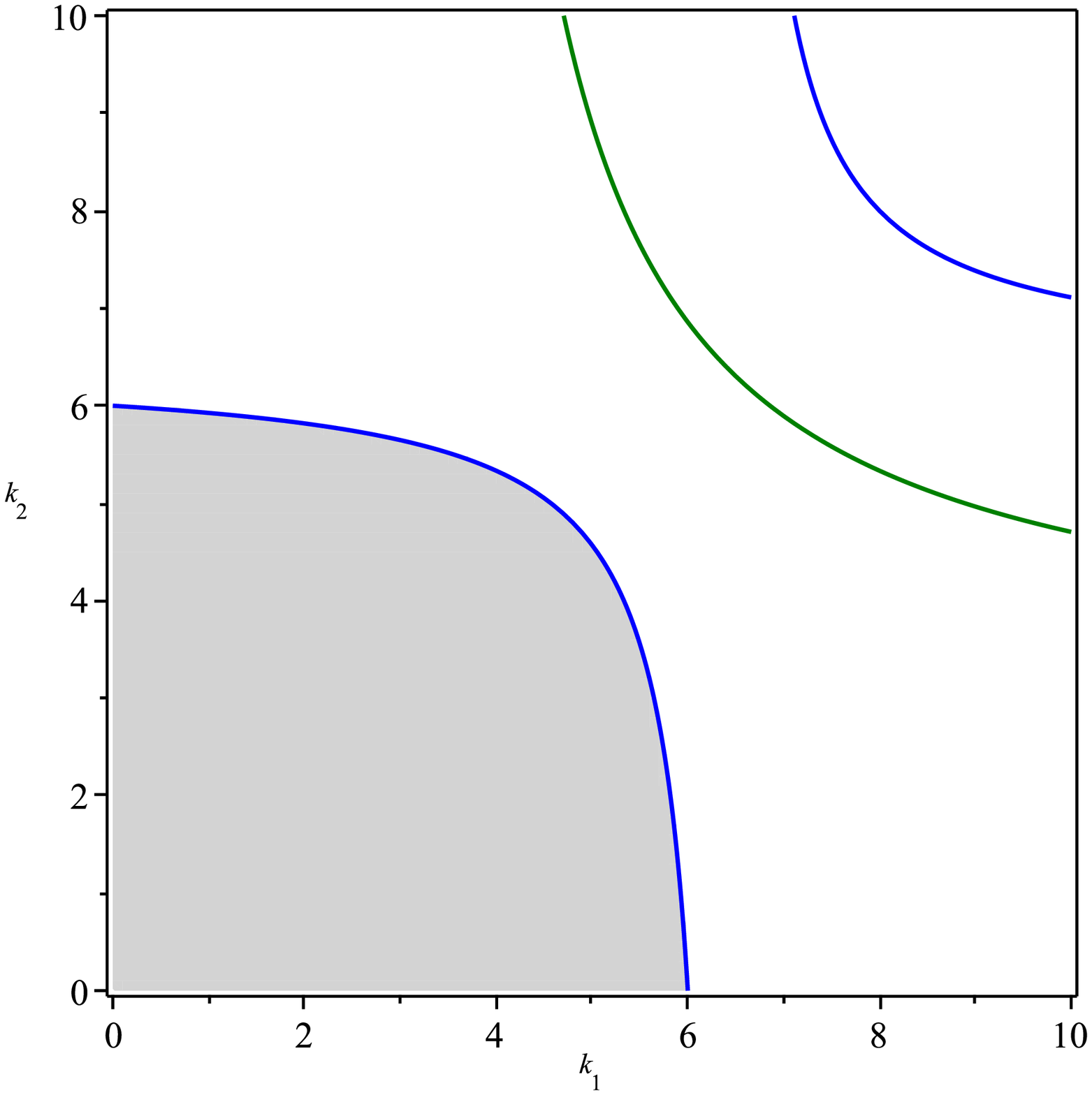}} \\
  \caption{The 2-dimensional cross-sections of the stability region of the considered model with $\alpha=1/2$ and $c_1=c_2=c$. The curves of $CD_2^J=0$ and $CD_3^J=0$ are marked in blue and green, respectively.}
 \label{fig:c1c2-12}
\end{figure}


\subsection{The general case}

If $c_1\neq c_2$, then the analytical expression of the unique non-vanishing equilibrium would be quite complicated. Thus, the proof of Theorem \ref{thm:stability12-c} can not work since it is impossible to substitute the analytical expression of the equilibrium into the Jacobian matrix and obtain a neat result. Concerning the bifurcation analysis, we need to determine the conditions on the parameters that $CD_1^J=0$, $CD_2^J=0$, and $CD_3^J=0$ are satisfied at the non-vanishing equilibrium. For this purpose, the following notation is required.

\begin{definition}
Let $$A=\sum_{i=0}^ma_i\,x^i,~~B=\sum_{j=0}^lb_j\,x^j$$ be two
univariate polynomials in $x$ with coefficients $a_i,b_j$, and $a_m,b_l\neq 0$. The
determinant
\begin{equation*}\label{eq:sylmat}
 \begin{array}{c@{\hspace{-5pt}}l}
 \left|\begin{array}{cccccc}
a_m & a_{m-1}& \cdots   & a_0   &        &       \\
           & \ddots   & \ddots&    \ddots    &\ddots&   \\
         &          & a_m   & a_{m-1}&\cdots& a_0 \\ [5pt]
b_l & b_{l-1}& \cdots   &  b_0 &    &         \\
            & \ddots   &\ddots &   \ddots     &\ddots&       \\
         &   &    b_{l}     & b_{l-1} & \cdots &  b_0
\end{array}\right|
& \begin{array}{l}\left.\rule{0mm}{8mm}\right\}l\\
\\\left.\rule{0mm}{8mm}\right\}m
\end{array}
\end{array}
\end{equation*}
is called the \emph{Sylvester resultant} (or simply
\emph{resultant}) of $A$ and $B$ with respect to $x$, and denoted by $\res(A,B,x)$. 	
\end{definition}

The following lemma reveals the main property of the resultant, which can also be found in \cite{Mishra1993A}.

\begin{lemma}\label{lem:res-com}
Let $A$ and $B$ be two univariate polynomials in $x$.
There exist two polynomials $F$ and $G$ in $x$ such that 
$$FA+GB=\res(A,B,x).$$
Furthermore, $A$ and $B$ have common zeros in the field of complex numbers if and only if $\res(A, B)=0$.
\end{lemma}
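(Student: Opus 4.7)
The plan is to prove both assertions directly from the Sylvester-matrix description of $\res(A,B,x)$. Let $M$ denote the $(m+l)\times(m+l)$ Sylvester matrix in the statement, whose rows encode the coefficients of the polynomials $x^{l-1}A,\, x^{l-2}A,\ldots,A,\,x^{m-1}B,\ldots,B$ in the monomial basis $x^{m+l-1},\, x^{m+l-2},\ldots,1$, so that $\res(A,B,x)=\det(M)$. For the first assertion, I would introduce the column vector $\mathbf{v}(x)=(x^{m+l-1},\, x^{m+l-2},\ldots,x,\,1)^{T}$; rewriting each row of $M$ as a polynomial in this basis shows that $M\mathbf{v}(x)$ is the column whose entries are precisely $x^{l-1}A,\ldots,A,\,x^{m-1}B,\ldots,B$. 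Next I would invoke the adjugate identity $\mathrm{adj}(M)\cdot M=\det(M)\,I_{m+l}$: letting $\mathbf{w}$ be the last row of $\mathrm{adj}(M)$, one obtains $\mathbf{w}\cdot M=(0,\ldots,0,\det M)$, and dotting both sides with $\mathbf{v}(x)$ on the right yields $\det(M)=\mathbf{w}\cdot M\mathbf{v}(x)$, which is a linear combination, with scalar coefficients, of the polynomials $x^{l-1}A,\ldots,A,\,x^{m-1}B,\ldots,B$. Grouping the $A$-terms into a polynomial $F$ (of degree $<l$) and the $B$-terms into a polynomial $G$ (of degree $<m$) produces the required representation $FA+GB=\res(A,B,x)$.

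For the second assertion, the forward direction is immediate: if $\alpha\in\cnum$ satisfies $A(\alpha)=B(\alpha)=0$, then evaluating the identity $FA+GB=\res(A,B,x)$ at $x=\alpha$ forces $\res(A,B,x)=0$, since the right-hand side is a constant independent of $x$. For the converse, assume $\res(A,B,x)=0$, so $M$ is singular; then there exists a nonzero left null vector of $M$, which, when reinterpreted polynomially, produces $F,G$, not both zero, with $\deg F<l$ and $\deg G<m$, satisfying $FA+GB=0$ in $\cnum[x]$. Rewriting this as $FA=-GB$, I argue by contradiction: if $A$ and $B$ had no common zero in $\cnum$, they would be coprime in $\cnum[x]$, whence $A$ divides $G$; but the degree bound $\deg G<m=\deg A$ would then force $G=0$, and consequently $F=0$ (since $A\neq 0$), contradicting the nontriviality of the pair.

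The main obstacle is carrying out the cofactor/adjugate manipulation of the first paragraph cleanly without drowning in row/column indices. Framing it as an application of $\mathrm{adj}(M)\cdot M=\det(M)I$ to the polynomial vector $\mathbf{v}(x)$ is the key organizational device that avoids this pitfall, and it also dovetails naturally with the null-vector argument used in the second paragraph, keeping the whole proof uniformly linear-algebraic.
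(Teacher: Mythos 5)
Your proof is correct. Note, however, that the paper itself offers no proof of this lemma at all: it is stated as a known fact and attributed to the literature (the reference to Mishra's \emph{Algorithmic Algebra}), so there is nothing to compare against line by line. Your argument is the standard self-contained one: the identity $M\mathbf{v}(x)=(x^{l-1}A,\ldots,A,x^{m-1}B,\ldots,B)^{T}$ combined with $\mathrm{adj}(M)\,M=\det(M)I$ yields the B\'ezout-type representation with the correct degree bounds $\deg F<l$, $\deg G<m$; the forward implication of the equivalence is immediate from evaluation; and the converse correctly passes from singularity of $M$ to a nontrivial relation $FA+GB=0$ and then uses coprimality over $\cnum[x]$ (equivalently, algebraic closedness of $\cnum$) together with the degree bounds to reach a contradiction. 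Both halves are sound, including the implicit use of $a_m,b_l\neq 0$ from the paper's definition to guarantee $m=\deg A$ and $A\neq 0$ in the divisibility step.
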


For a triangular set $\pset{T}=[T_1(x),T_2(x,y)]$ and a polynomial $H(x,y)$, we define 
$$\res(H,\pset{T})\equiv \res(\res(H,T_2,y),T_1(x),x).$$
By Lemma \ref{lem:res-com}, if $T_1=0$ and $T_2=0$ (or simply denoted as $\pset{T}=0$), then one knows that $H=0$ implies $\res(H,\pset{T})=0$, which means $\res(H,\pset{T})=0$ is a necessary condition for $H=0$. Consequently, the following proposition is acquired. It should be emphasized that Proposition \ref{prop:bifur-12-k} only reports the results for the case of $k_1=k_2$ because the conditions for $k_1\neq k_2$ are too long to list in this paper due to space limitations. However, readers can see that the idea of the proof also works for $k_1\neq k_2$ and can derive the complete conditions themself.

\begin{proposition}\label{prop:bifur-12-k}
	Let $\alpha=1/2$ and $k_1=k_2=k$. The system may undergo a period-doubling bifurcation when $R_1=0$ and a Neimark-Sacker bifurcation when $R_2=0$, where $R_1$ and $R_2$ are given in Appendix.
\end{proposition}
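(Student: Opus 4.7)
The plan is to derive conditions purely in the parameters $(c_1,c_2,k)$ under which the stability-boundary quantities $CD_2^J=1+\Tr(J)+\Det(J)$ (period-doubling) and $CD_3^J=1-\Det(J)$ (Neimark--Sacker) vanish at the unique non-vanishing equilibrium. The mechanism is the same in both cases: combine the Jury condition at issue with the triangular characterization $\pset{T}_{12}=0$ of the equilibrium, and eliminate the equilibrium coordinates via iterated resultants, justified by Lemma \ref{lem:res-com}.

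First I would substitute $k_1=k_2=k$ into the Jacobian $J$ computed just before Remark \ref{rm:bif}, assemble $CD_2^J$ and $CD_3^J$ as rational functions in $p_1,p_2,c_1,c_2,k$, and clear the denominators (which are products of powers of $p_1,p_2$ and $p_1+p_2$, all nonzero at a positive equilibrium) to obtain polynomial numerators $H_1(p_1,p_2;c_1,c_2,k)$ and $H_2(p_1,p_2;c_1,c_2,k)$. Thus at a non-vanishing equilibrium, $CD_2^J=0 \Longleftrightarrow H_1=0$ and $CD_3^J=0 \Longleftrightarrow H_2=0$.

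Next, because the equilibrium is precisely characterized by the triangular set $\pset{T}_{12}=[T_{12,1}(p_1),\,T_{12,2}(p_1,p_2)]$ in \eqref{eq:ts12}, I eliminate $p_2$ by $\res(H_i,T_{12,2},p_2)$ and then $p_1$ by $\res(\,\cdot\,,T_{12,1},p_1)$, following the definition of $\res(H,\pset{T})$ given right after Lemma \ref{lem:res-com}. That lemma guarantees that the resulting $R_i:=\res(H_i,\pset{T}_{12})$ lies in the ideal generated by $H_i$ and $\pset{T}_{12}$, so $H_i=0$ together with $\pset{T}_{12}=0$ forces $R_i=0$. This yields the necessary parameter conditions $R_1=0$ for period-doubling and $R_2=0$ for Neimark--Sacker, which by Remark \ref{rm:bif} are exactly the bifurcation scenarios claimed. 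The first Jury condition $CD_1^J>0$ is expected to hold throughout the admissible positive-parameter region (as it did in Theorem \ref{thm:stability12-c}), which is why no fold bifurcation enters the statement.

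The main obstacle will be the sheer size of the intermediate resultants and the need to cleanse them of spurious factors. In particular, the iterated resultant routinely picks up extraneous factors from the leading coefficients in $p_1$ or $p_2$ (for instance the factor $c_1$ coming from the leading coefficient of $T_{12,2}$ in $p_2$) and from quantities that are forced to be nonzero on the admissible region (powers of $c_1,c_2,k$, and $(p_1+p_2)$-related content already cleared). I would factor $R_1$ and $R_2$ over $\qnum[c_1,c_2,k]$ and discard each factor that is manifestly nonzero for positive parameters, keeping the single remaining factor as the object named $R_1$ or $R_2$ in the Appendix. A final sanity check is that the elimination does not smuggle in solutions with $p_1=0$, $p_2=0$, or $p_1+p_2=0$; this is automatic because $\pset{T}_{12}$ parametrizes only the non-vanishing equilibrium, the origin being the separate component $\pset{T}_{11}$.
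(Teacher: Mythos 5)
Your proposal is correct and follows essentially the same route as the paper: take the numerators of $CD_2^J$ and $CD_3^J$, eliminate the equilibrium coordinates by the iterated resultant $\res(\cdot,\pset{T}_{12})$, invoke Lemma \ref{lem:res-com} to get $R_i=0$ as a necessary condition, and strip off the factors (powers of $c_1,c_2,k$ and $c_1+c_2$) that cannot vanish for positive parameters. The only cosmetic difference is that the paper also verifies explicitly, via $\res(\numer(CD_1^J),\pset{T}_{12})\neq 0$, that no fold bifurcation can occur, whereas you only remark that $CD_1^J>0$ is expected; since the proposition itself makes no claim about fold bifurcations, this does not affect the argument.
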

\begin{proof}
It should be noted that the resultant is feasible only for polynomials.
For $CD_1^J$, we consider its numerator $\numer(CD_1^J)$. Then one can obtain that
$$
\res(\numer(CD_1^J),\pset{T}_{12})=81\, k^{6} c_1^{18} c_2^{6} \left(c_1+c_2\right) \left(32 c_1^{2}+61 c_1 c_2+32 c_2^{2}\right).$$
Since $c_1>0$, $c_2>0$, and $k>0$, it is impossible that $\res(\numer(CD_1^J),\pset{T}_{12})=0$ or $CD_1^J=0$ provided that $\pset{T}_{12}=0$. Hence, the equilibrium can not lose its stability through a fold bifurcation. Furthermore, we have
\begin{align*}
&\res(\numer(CD_2^J),\pset{T}_{12})=-729\,c_1^{32} c_2^8 (c_1 + c_2) R_1,\\
&\res(\numer(CD_3^J),\pset{T}_{12})=729\,k^3c_1^{32} c_2^8 (c_1 + c_2) R_2,
\end{align*}
which will vanish only if $R_1=0$ and $R_2=0$, respectively.
Consequently, the system may undergo a period-doubling bifurcation when $R_1=0$ and a Neimark-Sacker bifurcation when $R_2=0$.	 

\end{proof}

By Proposition \ref{prop:unique-sol-12}, there exists only one equilibrium $(p_1,p_2)$ with $p_1>0$ and $p_2>0$ although its analytical expression is complicated. To explore the local stability, we need to determine the signs of $CD_1^J$, $CD_2^J$, and $CD_3^J$ at this equilibrium without using its closed form. It should be noted that $CD_1^J$, $CD_2^J$, and $CD_3^J$ are rational functions. Suppose that 
$$CD_i^J=\frac{\numer(CD_i^J)}{\denom(CD_i^J)},$$
where $\numer(\cdot)$ and $\denom(\cdot)$ denote the numerator and the denominator, respectively. Then the sign of $CD_i^J$ is the same as that of $\numer(CD_i^J)\cdot \denom(CD_i^J)$ if $\denom(CD_i^J)\neq 0$. One could compute that
$$\res(\numer(CD_1^J)\cdot \denom(CD_1^J),\pset{T}_{12})=-1594323\,k^6 c_1^{50} c_2^{17} (c_1 + c_2)^6 (32\, c_1^2 + 61\,c_1 c_2 + 32\,c_2^2),$$
$$\res(\numer(CD_2^J)\cdot \denom(CD_2^J),\pset{T}_{12})=129140163\,c_1^{70} c_2^{22} (c_1 + c_2)^6 R_1,$$
and 
$$\res(\numer(CD_3^J)\cdot \denom(CD_3^J),\pset{T}_{12})=-129140163\, k^3 c_1^{70} c_2^{22} (c_1 + c_2)^6 R_2.$$
We should emphasize that the sign of $\res(\numer(CD_i^J)\cdot \denom(CD_i^J),\pset{T}_{12})$ may not be the same as $\numer(CD_i^J)\cdot \denom(CD_i^J)$ or $CD_i^J$. However, it is known that $\res(\numer(CD_i^J)\cdot \denom(CD_i^J),\pset{T}_{12})$ involves only the parameters and its zeros divide the parameter space into several regions. In each region, the sign of $CD_i^J$ is invariant. Consequently, we just need to select one sample point from each region and identify the sign of $CD_i^J$ at the selected sample point. The selection of sample points might be extremely complicated in general and could be automated using, e.g., the PCAD method \cite{Collins1991P}. 

In Table \ref{tab:sample-12}, we list all the selected sample points and the corresponding information on whether the non-vanishing equilibrium is stable, i.e., whether $CD_1^J>0$, $CD_2^J>0$, and $CD_3^J>0$ are simultaneously satisfied. Moreover, Table \ref{tab:sample-12} displays the signs of $R_1$ and $R_2$ at these sample points. One can observe that the equilibrium is stable if $R_1>0$ and $R_2>0$, and vice versa. It should be mentioned that the calculations involved in Table \ref{tab:sample-12} are exact and rigorous. That is, the computational results provide theoretical foundations for a systematic analysis of the local stability. Therefore, we acquire the following theorem.

%
%
%

\begin{theorem}\label{thm:stable-12-k1k2}
If $k_1=k_2=k$, the unique non-vanishing equilibrium $(p_1,p_2)$ with $p_1>0$ and $p_2>0$ is locally stable if
$R_1>0$ and $R_2>0$, where $R_1$ and $R_2$ can be found in Appendix.
\end{theorem}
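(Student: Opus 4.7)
The plan is to verify the three Jury inequalities $CD_1^J>0$, $CD_2^J>0$, $CD_3^J>0$ at the unique positive equilibrium characterized by $\pset{T}_{12}$, without ever writing that equilibrium in closed form. Since each $CD_i^J$ is a rational function of $(p_1,p_2,c_1,c_2,k)$, I would replace it by $\numer(CD_i^J)\cdot\denom(CD_i^J)$, whose sign equals that of $CD_i^J$ as long as the denominator is nonzero, and then eliminate $p_1,p_2$ against $\pset{T}_{12}$ through iterated resultants. Proposition \ref{prop:bifur-12-k} together with the three displayed resultant identities just before the theorem already carries out this elimination.

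For $CD_1^J$, the computed resultant reduces, up to a positive constant and positive monomial factors in $c_1,c_2,k$, to the polynomial $32c_1^2+61c_1c_2+32c_2^2$, which is strictly positive in the admissible region $c_1,c_2,k>0$. Hence $CD_1^J$ cannot vanish along the equilibrium locus, so its sign is constant there; a single numerical sample fixes this sign as positive, and in particular a fold bifurcation is ruled out. For $CD_2^J$ and $CD_3^J$ the corresponding resultants factor as strictly positive terms times $R_1$ and $R_2$, respectively. By the defining property of the resultant, the sign of $CD_i^J$ at the equilibrium can change only as the parameters cross the loci $\{R_1=0\}$ or $\{R_2=0\}$, so these loci partition the open region $\{c_1,c_2,k>0\}$ into finitely many connected cells on which the Jury signs are all invariant.

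On each such cell it then suffices to pick a single sample point and numerically evaluate the equilibrium and the three Jury quantities. I would use the PCAD method \cite{Collins1991P} to produce a guaranteed-exhaustive finite list of sample points adapted to the defining polynomials $R_1,R_2$ together with the extra positive factors appearing in the resultants. Table \ref{tab:sample-12} records the outcome at those sample points and exhibits the sign pattern: stability, i.e.\ $CD_1^J,CD_2^J,CD_3^J>0$ simultaneously, is attained exactly in the cells where $R_1>0$ and $R_2>0$. Combined with the cell-invariance of the signs, this yields the stated sufficient condition throughout the parameter region.

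The principal obstacle is the bookkeeping required to ensure that the PCAD sample list genuinely meets \emph{every} connected cell of the decomposition induced by $R_1,R_2$; missing even one cell would leave the stability claim unverified on that cell. This is precisely the task PCAD is designed to solve, by constructing a cylindrical decomposition compatible with the chosen defining polynomials. A secondary subtlety worth flagging is that the sign of $\res(\numer(CD_i^J)\cdot\denom(CD_i^J),\pset{T}_{12})$ need not coincide with the sign of $CD_i^J$ at the equilibrium — the resultant is used only as a sign-change locus, while the actual sign on each cell is read off from the sample-point evaluation.
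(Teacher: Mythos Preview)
Your proposal is correct and follows essentially the same approach as the paper: compute $\res(\numer(CD_i^J)\cdot\denom(CD_i^J),\pset{T}_{12})$ to obtain sign-change loci in the parameters, use PCAD to select sample points in the resulting cells, and read off from Table~\ref{tab:sample-12} that stability holds precisely when $R_1>0$ and $R_2>0$. Your explicit caveats about PCAD exhaustiveness and the distinction between the resultant's sign and the sign of $CD_i^J$ itself are exactly the points the paper makes in the paragraph preceding the theorem.
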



\begin{table}[htbp]
	\centering 
	\caption{Selected Sample Points in $\{(c_1,c_2,k)\,|\,c_1> 0, c_2>0, k>0\}$ for $\alpha=1/2$}
	\label{tab:sample-12} 
	\begin{tabular}{|l|c|c|c||l|c|c|c|}  
		\hline  
		$(c_1,c_2,k)$ & stable & $R_1$ & $R_2$ &$(c_1,c_2,k)$ &  stable & $R_1$ & $R_2$ \\ \hline
$(1, 1/4, 1)$& yes & $+$ & $+$ & $(1, 5/16, 1)$& yes & $+$ & $+$ \\ \hline
$(1, 1/4, 7)$& no & $-$ & $+$ & $(1, 5/16, 10)$& no & $-$ & $+$ \\ \hline
$(1, 1/4, 29)$& no & $-$ & $-$ & $(1, 5/16, 30)$& no & $-$ & $-$ \\ \hline
$(1, 1/4, 51)$& no & $+$ & $-$ & $(1, 5/16, 51)$& no & $+$ & $-$ \\ \hline

$(1, 1/2, 1)$& yes & $+$ & $+$ & $([1, 7/8, 1)$& yes & $+$ & $+$ \\ \hline
$(1, 1/2, 18)$& no & $-$ & $+$ & $(1, 7/8, 38)$& no & $-$ & $+$ \\ \hline
$(1, 1/2, 35)$& no & $-$ & $-$ &$(1, 7/8, 51)$& no & $-$ & $-$ \\ \hline
$(1, 1/2, 53)$& no & $+$ & $-$ &$(1, 7/8, 65)$& no & $+$ & $-$ \\ \hline

$(1, 9/8, 1)$& yes & $+$ & $+$ & $(1, 2, 1)$& yes & $+$ & $+$ \\ \hline
$(1, 9/8, 49)$& no & $-$ & $+$ & $(1, 2, 70)$& no & $-$ & $+$ \\ \hline
$(1, 9/8, 66)$& no & $-$ & $-$ & $(1, 2, 140)$& no & $-$ & $-$ \\ \hline
$(1, 9/8, 83)$& no & $+$ & $-$ & $(1, 2, 209)$& no & $+$ & $-$ \\ \hline

$(1, 3, 1)$& yes & $+$ & $+$ & $(1, 4, 1)$& yes & $+$ & $+$ \\ \hline
$(1, 3, 91)$& no & $-$ & $+$ & $(1, 4, 112)$& no & $-$ & $+$ \\ \hline
$(1, 3, 272)$& no & $-$ & $-$ & $(1, 4, 462)$& no & $-$ & $-$ \\ \hline
$(1, 3, 453)$& no & $+$ & $-$ & $(1, 4, 811)$& no & $+$ & $-$ \\ \hline
	\end{tabular}
\end{table}

%

\section{$\alpha=1/3$}

If $\alpha=1/3$, then $\beta=1/2$. We have the iteration map 
\begin{equation}\label{eq:map-13}
\left\{
\begin{split}
&p_1(t+1)=p_1(t)+k_1\frac{-p_1(t)\sqrt{p_1(t)p_2(t)}+\left(2\,p_2(t)+3\sqrt{p_1(t) p_2(t)}\right) c_1}{2\,p_1^{2}(t) \left(\sqrt{p_1(t)}+\sqrt{p_2(t)}\right)^{2}},\\
&p_2(t+1)=p_2(t)+k_2\frac{-p_2(t)\sqrt{p_1(t)p_2(t)}+\left(2\,p_1(t)+3\sqrt{p_1(t) p_2(t)}\right) c_2}{2\,p_2^{2}(t) \left(\sqrt{p_1(t)}+\sqrt{p_2(t)}\right)^{2}}.
\end{split}\right.
\end{equation}

By setting $p_1(t+1)=p_1(t)=p_1$ and $p_2(t+1)=p_2(t)=p_2$, one can obtain the equations of the equilibrium
\begin{equation*}
\left\{
\begin{split}
&-p_1\sqrt{p_1p_2}+\left(2\,p_2+3\sqrt{p_1 p_2}\right) c_1=0,\\	
&-p_2\sqrt{p_1p_2}+\left(2\,p_1+3\sqrt{p_1 p_2}\right) c_2=0.
\end{split}\right.
\end{equation*}
Denote $\sqrt{p_1}=x$ and $\sqrt{p_2}=y$. The above equations become
\begin{equation}\label{eq:fixp-13-xy}
\left\{
\begin{split}
&-x^3y+(2\,y^2+3\,xy)c_1=0,\\
&-y^3x+(2\,x^2+3\,xy)c_2=0.
\end{split}
\right.
\end{equation}

Using the triangular decomposition method, we decompose the solutions of system \eqref{eq:fixp-13-xy} into zeros of the following two triangular sets.
\begin{align*}
	&\pset{T}_{31}=\left[x,~y\right],\\
	&\pset{T}_{32}=\left[x^{8}-9\,c_1 x^{6}+27\,c_1^{2} x^{4}+(-27\,c_1^{3}-12\,c_1^{2}c_2) x^{2}+20\,c_1^{3}c_2,~2\,c_1 y-x^{3}+3\,c_1 x \right].
\end{align*}

Evidently, $\pset{T}_{31}$ is corresponding to the origin $(0,0)$.
Therefore, the identification of the number of non-vanishing equilibria can be transformed into the determination of the number of real solutions of the following semi-algebraic system.

\begin{equation*}
\left\{
	\begin{split}
&x^{8}-9\,c_1 x^{6}+27\,c_1^{2} x^{4}+(-27\,c_1^{3}-12\,c_1^{2}c_2) x^{2}+20\,c_1^{3}c_2=0,\\
&2\,c_1 y-x^{3}+3\,c_1 x=0,\\
&x>0,~y>0.
	\end{split}
\right.
\end{equation*}

Using the algebraic approach by Li and Wang \cite{Li2014C}, we know that the above system has one unique real solution for any parameter values of $c_1,c_2>0$, which implies the following proposition.

\begin{proposition}\label{prop:unique-sol-13}
	Let $\alpha=1/3$. The iteration map \eqref{eq:map-13} possesses one unique equilibrium $(p_1,p_2)$ with $p_1>0$ and $p_2>0$.
\end{proposition}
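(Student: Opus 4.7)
The plan is to reduce the existence-and-uniqueness question to counting real roots of a single univariate polynomial subject to a sign constraint, then settle the count by a convexity argument rather than by invoking the Li--Wang algorithm as a black box. The first polynomial of $\pset{T}_{32}$ is a degree-$8$ polynomial in $x$ that depends only on $x^2$, so substituting $u=x^2$ reduces it to the quartic
$$P(u)=u^4-9\,c_1\,u^3+27\,c_1^2\,u^2-(27\,c_1^3+12\,c_1^2\,c_2)\,u+20\,c_1^3\,c_2.$$
Since $x>0$ is equivalent to $u>0$, and since the second polynomial of $\pset{T}_{32}$ yields $y=x(u-3c_1)/(2c_1)$, the requirement $y>0$ (given $x>0$) becomes simply $u>3c_1$. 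Hence the task reduces to showing that $P$ has exactly one real root in $(3c_1,\infty)$.

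For existence, I would evaluate $P$ at the left endpoint; a short computation gives $P(3c_1)=-16\,c_1^3\,c_2<0$, while clearly $P(u)\to+\infty$ as $u\to+\infty$. The intermediate value theorem then provides at least one root in $(3c_1,\infty)$. For uniqueness, the key observation is that
$$P''(u)=12u^2-54c_1u+54c_1^2=6(2u-3c_1)(u-3c_1),$$
so $P''(u)>0$ throughout $(3c_1,\infty)$ and $P$ is strictly convex there. A strictly convex continuous function that is negative at the left endpoint of a half-line and tends to $+\infty$ must cross zero exactly once, since after its (at most unique) minimum it is strictly increasing to $+\infty$. Combined with existence, this yields the unique $u^{*}>3c_1$ with $P(u^{*})=0$; then $x^{*}=\sqrt{u^{*}}$ and $y^{*}=x^{*}(u^{*}-3c_1)/(2c_1)$ are positive and uniquely determined, so $(p_1,p_2)=((x^{*})^2,(y^{*})^2)$ is the unique non-vanishing equilibrium.

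The main obstacle I anticipated was the uniqueness step: Descartes' rule of signs applied to $P$ has sign pattern $+,-,+,-,+$ and only bounds the number of positive real roots by four, so a priori several of them might land in $(3c_1,\infty)$. The decisive lucky fact is that the special arithmetic of the coefficients makes $P''$ factor as $6(2u-3c_1)(u-3c_1)$ with both of its zeros lying in $(0,3c_1]$; this hands us strict convexity on the relevant half-line for free and obviates any Sturm-sequence or discriminant calculation. Without this structural accident—for instance in the analogous step for more general $\alpha$—one would be forced to fall back on the semi-algebraic machinery of Li and Wang as the authors ultimately do.
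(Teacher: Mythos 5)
Your proposal is correct, and it replaces the paper's proof by a genuinely more elementary one. The paper, after obtaining the triangular set $\pset{T}_{32}$, hands the resulting semi-algebraic system (the degree-$8$ univariate polynomial together with $x>0$, $y>0$) to the root-classification algorithm of Li and Wang and simply reports its output; no human-readable reason for uniqueness is given. You instead exploit the structure of $\pset{T}_{32}$ directly: the substitution $u=x^2$ turns the first polynomial into the quartic $P(u)$, the second polynomial gives $y=x(u-3c_1)/(2c_1)$ so that positivity of both $x$ and $y$ is exactly the condition $u>3c_1$, and your two computations check out --- $P(3c_1)=-16\,c_1^3c_2<0$ and $P''(u)=6(2u-3c_1)(u-3c_1)>0$ on $(3c_1,\infty)$ --- so strict convexity plus the negative value at the left endpoint and $P\to+\infty$ force exactly one root in $(3c_1,\infty)$, hence exactly one positive pair $(x^*,y^*)$ and one positive equilibrium $(p_1,p_2)=((x^*)^2,(y^*)^2)$. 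As a consistency check, in the symmetric case $c_1=c_2=c$ your quartic factors as $(u-c)(u-5c)(u^2-3cu+4c^2)$, and indeed only $u=5c$ exceeds $3c$, recovering the paper's equilibrium $(5c,5c)$. What each approach buys: yours is self-contained, verifiable by hand, and explains \emph{why} uniqueness holds (a fortunate factorization of $P''$ whose roots both lie in $(0,3c_1]$), but it leans on an arithmetic accident specific to $\alpha=1/3$; the paper's approach is uniform and automatable, which is why the same machinery also handles $\alpha=1/2$ and the subsequent stability analysis where no such accident is available. Both arguments take the triangular decomposition into $\pset{T}_{31}$ and $\pset{T}_{32}$ as given, so your proof is a drop-in replacement for the invocation of the Li--Wang algorithm, not for the decomposition step itself.
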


To investigate the local stability of the equilibrium, we consider the Jacobian matrix 
$$
M=\left[
\begin{matrix}
M_{11} & M_{12}\\
M_{21} & M_{22}	
\end{matrix}
\right],
$$
where
\begin{align*}
&M_{11}= \frac{12\,p_1^{\frac{9}{2}} \sqrt{p_2}+4\,p_1^{\frac{7}{2}}p_2^{\frac{3}{2}}-15\,c_1k_1 p_1^{\frac{3}{2}} \sqrt{p_2}-8\,c_1k_1 p_2^{\frac{3}{2}} \sqrt{p_1}+3\,k_1 p_1^{\frac{5}{2}} \sqrt{p_2}+4\,p_1^{5}+12\,p_1^{4}p_2-21\,c_1k_1p_1p_2+k_1 p_1^{2}p_2}{4\,p_1^{\frac{7}{2}} \left(\sqrt{p_1}+\sqrt{p_2}\right)^{3}},\\
&M_{12}= \frac{k_1 \left(\sqrt{p_2}\,p_1^{\frac{3}{2}}-p_1^{2}+c_1 \sqrt{p_2} \sqrt{p_1}+3\,p_1c_1\right)}{4\,p_1^{2} \left(\sqrt{p_1}+\sqrt{p_2}\right)^{3} \sqrt{p_2}},\\
&M_{21}= \frac{k_2 \left(\sqrt{p_1}\,p_2^{\frac{3}{2}}+c_2 \sqrt{p_2} \sqrt{p_1}+3\,p_2c_2-p_2^{2}\right)}{4\,p_2^{2} \left(\sqrt{p_1}+\sqrt{p_2}\right)^{3} \sqrt{p_1}},\\
&M_{22}= \frac{4\,p_1^{\frac{3}{2}}p_2^{\frac{7}{2}}+12\,p_2^{\frac{9}{2}} \sqrt{p_1}-8\,c_2k_2p_1^{\frac{3}{2}} \sqrt{p_2}-15\,c_2k_2 p_2^{\frac{3}{2}} \sqrt{p_1}+3\,k_2 p_2^{\frac{5}{2}} \sqrt{p_1}+12\,p_1 \,p_2^{4}+4\,p_2^{5}-21\,c_2k_2p_1p_2+k_2p_1 p_2^{2}}{4p_2^{\frac{7}{2}} \left(\sqrt{p_1}+\sqrt{p_2}\right)^{3}}.\\	
\end{align*}

As in Section 3, we denote
\begin{align*}
&CD_1^M \equiv 1-\Tr(M)+\Det(M),\\
&CD_2^M \equiv 1+\Tr(M)+\Det(M),\\
&CD_3^M \equiv 1-\Det(M).
\end{align*}

%

\subsection{The special case of $c_1=c_2$}

If we set $c_1=c_2=c$, then the triangular decomposition method permits us to transform the equilibrium equations \eqref{eq:fixp-13-xy} into the following triangular sets.
\begin{align*}
	&\pset{T}_{41}=[x,y],\\
	&\pset{T}_{42}=[x^{2}-c,y+x],\\
	&\pset{T}_{43}=[x^{2}-5\,c,y-x],\\
	&\pset{T}_{44}=[x^{4}-3\, c \,x^{2}+4\, c^{2},2\, c y-x^{3}+3\,c x ].
\end{align*}

Obviously, the zeros of $\pset{T}_{41}$ and $\pset{T}_{42}$ are economically uninteresting.
Moreover, all the roots of $x^{4}-3\, c \,x^{2}+4\, c^{2}$ of $\pset{T}_{44}$, i.e.,
$$
\frac{\sqrt{2 \,\sqrt{7}c\, \mathrm{i}+6\, c}}{2},-\frac{\sqrt{2 \,\sqrt{7}c\, \mathrm{i}+6\, c}}{2},\frac{\sqrt{-2 \,\sqrt{7}c\, \mathrm{i}+6\, c}}{2},-\frac{\sqrt{-2 \,\sqrt{7}c\, \mathrm{i}+6\, c}}{2},
$$ 
are imaginary and also not of our concern. There exists only one non-vanishing equilibrium $(p_1,p_2)=(5\,c,5\,c)$, which is corresponding to the branch $\pset{T}_{43}$.

Substituting $p_1=5\,c$ and $p_2=5\,c$ into $M$, we obtain the Jacobian matrix at the equilibrium $(5\,c,5\,c)$ to be
$$M(5\,c,5\,c)=
\left[
\begin{array}{cc}
\frac{500\, c^{2}-3\, k_1}{500\, c^{2}} & \frac{k_1}{1000\, c^{2}} 
\\
 \frac{k_2}{1000\, c^{2}} & \frac{500\, c^{2}-3\, k_2}{500\, c^{2}} 
\end{array}
\right].
$$
Hence, 
\begin{align*}
	&\Tr(M)=\frac{1000\, c^{2}-3\, k_1-3 k_2}{500\, c^{2}}, \\
	&\Det(M)=\frac{200000\, c^{4}-1200\, c^{2} k_1-1200\, c^{2} k_2+7\, k_1 k_2}{200000\, c^{4}}.
\end{align*}

\begin{theorem}\label{thm:stability13-c}
	Let $\alpha=1/3$ and $c_1=c_2=c$. The unique non-vanishing equilibrium $(5\,c,5\,c)$ is locally stable if
$$c^2>\frac{3\, k_1+3\, k_2+\sqrt{9\, k_1^{2}-17\, k_1 k_2+9\, k_2^{2}}}{2000}.$$
The system may undergo a period-doubling bifurcation when
$$c^2=\frac{3\, k_1+3\, k_2+\sqrt{9\, k_1^{2}-17\, k_1 k_2+9\, k_2^{2}}}{2000}.$$
Furthermore, there exist no other bifurcations of the equilibrium.
\end{theorem}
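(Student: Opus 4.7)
The plan is to follow exactly the template of the proof of Theorem \ref{thm:stability12-c}, since the authors have already substituted the equilibrium $(5c,5c)$ into the Jacobian $M$ and produced closed forms for $\Tr(M)$ and $\Det(M)$. The three Jury quantities $CD_1^M, CD_2^M, CD_3^M$ then become explicit rational functions of $c, k_1, k_2$ with common denominator $200000\,c^4$, so the sign analysis reduces to inspecting three polynomials in $c^2, k_1, k_2$.

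First I would compute $CD_1^M = 1-\Tr(M)+\Det(M)$. Clearing denominators one sees that the $c^4$ and $c^2(k_1+k_2)$ contributions cancel and one is left with something proportional to $k_1 k_2$ (e.g. $CD_1^M = 7 k_1 k_2 / (200000\,c^4)$), which is strictly positive under $k_1,k_2,c>0$. This rules out a fold bifurcation (via Remark \ref{rm:bif}) and gives the first Jury condition for free. Next I would compute $CD_2^M=1+\Tr(M)+\Det(M)$; after clearing denominators this is a quadratic in $c^2$ of the form $800000\,c^4 - 2400\,c^2(k_1+k_2)+7 k_1 k_2$. Applying the quadratic formula, the two roots are
\[
c^2 = \frac{3(k_1+k_2)\pm\sqrt{9k_1^2-17 k_1 k_2 + 9 k_2^2}}{2000},
\]
and I would note in passing that the discriminant $9 k_1^2 - 17 k_1 k_2 + 9 k_2^2$ is always positive (its own discriminant in $k_1$ equals $289-324<0$), so both roots are real. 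Hence $CD_2^M>0$ iff $c^2$ lies outside $[\rho_-,\rho_+]$, with $\rho_\pm$ the two roots above.

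Third, I would compute $CD_3^M = 1-\Det(M)$, which equals $[1200\,c^2(k_1+k_2)-7 k_1 k_2]/(200000\,c^4)$, giving the simple threshold $c^2 > 7 k_1 k_2 / \bigl(1200(k_1+k_2)\bigr) =: \sigma$. The crucial technical step is then to prove the sandwich $\rho_- < \sigma < \rho_+$; once this is established, the three Jury conditions are simultaneously satisfied exactly on the region $c^2>\rho_+$, and the boundary $c^2=\rho_+$ is attained by $CD_2^M = 0$, which by Remark \ref{rm:bif} corresponds to a period-doubling bifurcation. I expect this sandwich verification to be the main (though still routine) obstacle: the inequalities can be checked by multiplying out, isolating the surd $\sqrt{9k_1^2-17 k_1 k_2 + 9 k_2^2}$, squaring, and reducing to the factorable polynomial inequality $(k_1-k_2)^2(\text{positive factor})\geq 0$ — essentially the same manipulation used implicitly in the $\alpha=1/2$ case.

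Finally, since $CD_1^M>0$ identically, no fold bifurcation is possible, and the sandwich shows that $CD_3^M>0$ already holds whenever $CD_2^M>0$, so no Neimark--Sacker bifurcation can occur either. Hence period-doubling at $c^2=\rho_+$ is the only bifurcation of the equilibrium, completing the proof via Remark \ref{rm:bif}.
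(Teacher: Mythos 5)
Your proposal is correct and follows essentially the same route as the paper's own proof: verify $CD_1^M=\frac{7k_1k_2}{200000c^4}>0$ identically, reduce $CD_2^M>0$ and $CD_3^M>0$ to the quadratic-in-$c^2$ threshold $\rho_\pm$ and the simple threshold $\sigma$, and establish the sandwich $\rho_-<\sigma<\rho_+$ so that stability holds exactly for $c^2>\rho_+$, with the period-doubling boundary at $CD_2^M=0$ via Remark \ref{rm:bif}. Your added observation that the discriminant $9k_1^2-17k_1k_2+9k_2^2$ is positive definite is a small but welcome extra detail the paper leaves implicit.
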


\begin{proof}
The first condition for the local stability is always fulfilled since
$$CD_1^M\equiv 1-\Tr(M)+\Det(M)=\frac{7\, k_1 k_2}{200000 c^{4}}.$$	

The second condition should be 
$$CD_2^M\equiv 1+\Tr(M)+\Det(M)=\frac{800000\, c^{4}+\left(-2400\, k_1-2400\, k_2\right) c^{2}+7\, k_1 k_2}{200000\, c^{4}}>0,$$
which implies that
$$
{800000\, c^{4}+\left(-2400\, k_1-2400\, k_2\right) c^{2}+7\, k_1 k_2}>0,
$$
i.e.,
$$c^2>\frac{3\, k_1+3\, k_2+\sqrt{9\, k_1^{2}-17\, k_1 k_2+9\, k_2^{2}}}{2000}~~\text{or}~~
c^2<\frac{3\, k_1+3\, k_2-\sqrt{9\, k_1^{2}-17\, k_1 k_2+9\, k_2^{2}}}{2000}.$$

The third condition should be
$$CD_3^M\equiv 1-\Det(M)=\frac{\left(1200\, k_1+1200\, k_2\right) c^{2}-7\, k_1 k_2}{200000\, c^{4}}>0,$$
from which we have
$${\left(1200\, k_1+1200\, k_2\right) c^{2}-7\, k_1 k_2}>0,$$
i.e.,
$$c^2>\frac{7\, k_1 k_2}{1200\, \left(k_1+k_2\right)}.$$

It can be proved that
$$\frac{3\, k_1+3\, k_2-\sqrt{9\, k_1^{2}-17\, k_1 k_2+9\, k_2^{2}}}{2000}<\frac{7\, k_1 k_2}{1200 \left(k_1+k_2\right)}<\frac{3\, k_1+3\, k_2+\sqrt{9\, k_1^{2}-17\, k_1 k_2+9\, k_2^{2}}}{2000}.$$
Therefore, the equilibrium is locally stable if
$$c^2>\frac{3\, k_1+3\, k_2+\sqrt{9\, k_1^{2}-17\, k_1 k_2+9\, k_2^{2}}}{2000}.$$
The rest of the proof follows from Remark \ref{rm:bif}.
\end{proof}

In Figure \ref{fig:c1c2-13}, we show two 2-dimensional cross-sections of the stability region reported in Theorem \ref{thm:stability13-c}. One can see that the equilibrium may lose its stability if the adjustment speeds $k_1,k_2$ are large enough or the marginal cost $c$ is small enough.

%

\begin{figure}[htbp]
  \centering
  \subfloat[$k_2=1/10$]{\includegraphics[width=0.35\textwidth]{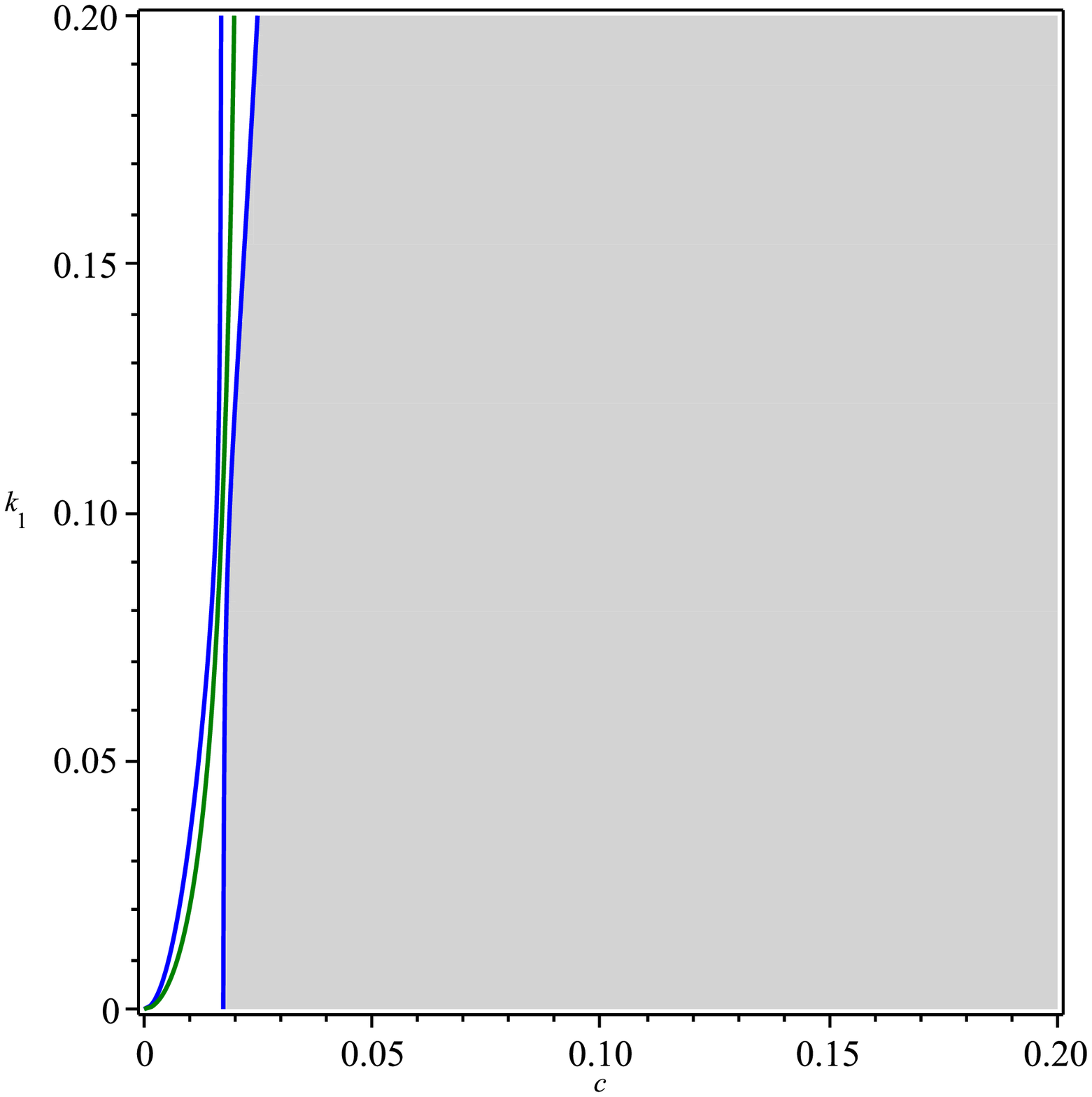}} 
  \subfloat[$c=1/10$]{\includegraphics[width=0.35\textwidth]{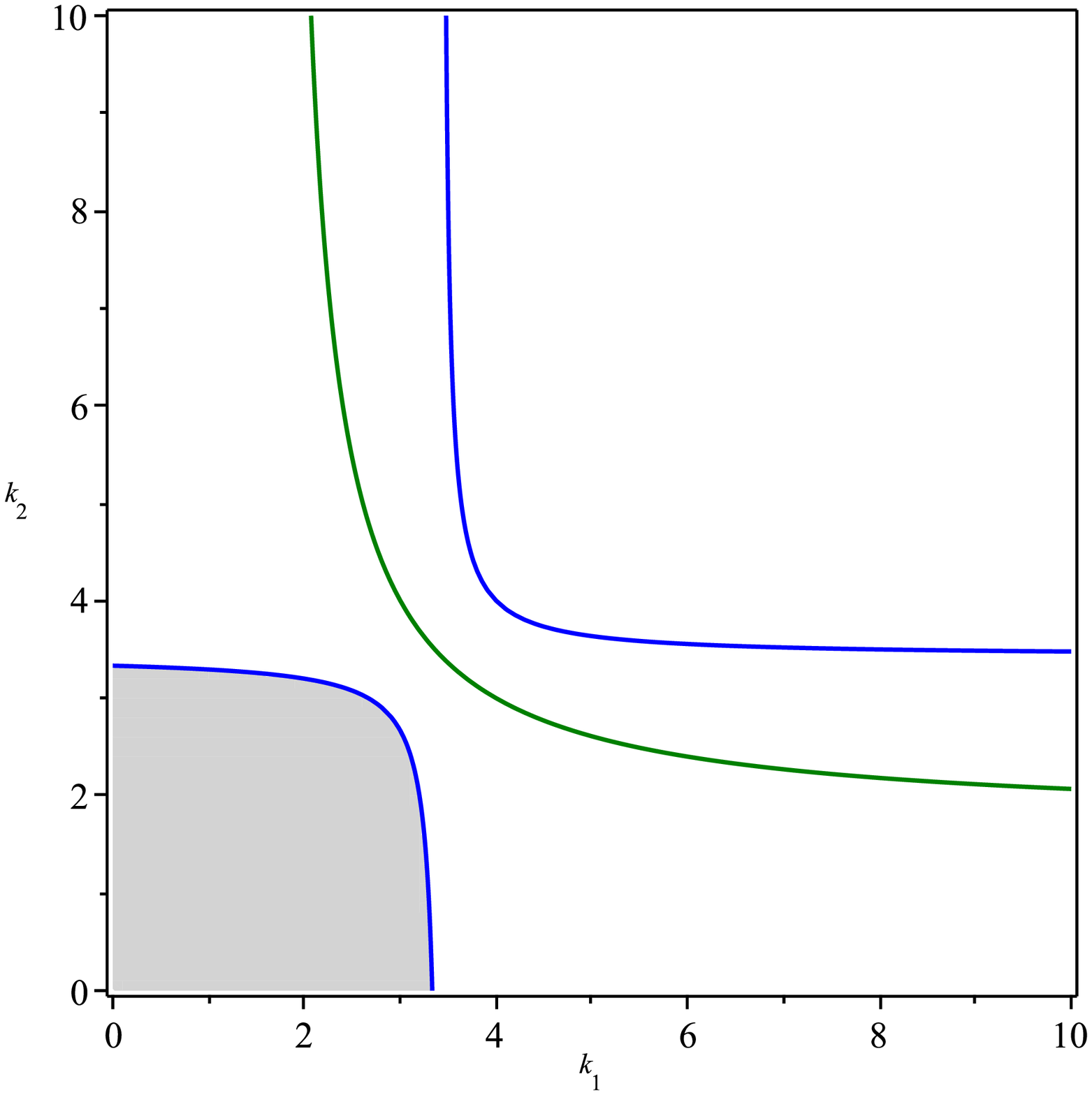}} \\
  \caption{The 2-dimensional cross-sections of the stability region of the considered model with $\alpha=1/3$ and $c_1=c_2=c$. The curves of $CD_2^M=0$ and $CD_3^M=0$ are marked in blue and green, respectively.}
 \label{fig:c1c2-13}
\end{figure}


\subsection{The general case}

As in Section 3.2, we set $k_1=k_2=k$. {We should mention that the method employed in this section also works for the case of $k_1\neq k_2$. However, the conditions for $k_1\neq k_2$ are tedious and not reported in this section due to space limitations. Interested readers can use our method to compute the complete conditions themself.} The case of $c_1=c_2$ has been explored in Section 4.1, hence we suppose that $c_1\neq c_2$ in what follows. The bifurcations are analyzed in the following proposition.

\begin{proposition}\label{prop:bifur-13-k}
	Let $\alpha=1/3$, $k_1=k_2=k$ and $c_1\neq c_2$. The iteration map \eqref{eq:map-13} may undergo a period-doubling bifurcation when $R_3=0$ and a Neimark-Sacker bifurcation when $R_4=0$, where $R_3$ and $R_4$ are given in Appendix.
\end{proposition}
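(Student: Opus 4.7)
The plan is to mirror the argument used in Proposition \ref{prop:bifur-12-k}, adapting it to handle the square roots that appear in the Jacobian $M$. First I would rewrite the three quantities $CD_1^M$, $CD_2^M$, $CD_3^M$ in terms of the variables $x=\sqrt{p_1}$ and $y=\sqrt{p_2}$ introduced before \eqref{eq:fixp-13-xy}, thereby clearing the radicals and turning each $CD_i^M$ into a rational function in $x$, $y$, $c_1$, $c_2$, $k$. At the equilibrium, the pair $(x,y)$ is a zero of the triangular set $\pset{T}_{32}$, so I can use the resultant with respect to $\pset{T}_{32}$ exactly as in the $\alpha=1/2$ case.

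Next I would extract the numerator $\numer(CD_i^M)$ of each $CD_i^M$ and compute the iterated resultants
\[
\res(\numer(CD_i^M),\pset{T}_{32})=\res\bigl(\res(\numer(CD_i^M),\,2c_1y-x^3+3c_1x,\,y),\;x^{8}-9c_1x^{6}+27c_1^{2}x^{4}-(27c_1^{3}+12c_1^{2}c_2)x^{2}+20c_1^{3}c_2,\,x\bigr).
\]
By Lemma \ref{lem:res-com}, vanishing of $\res(\numer(CD_i^M),\pset{T}_{32})$ is a necessary condition for $CD_i^M=0$ at the equilibrium. For $i=1$ I expect the resultant to factor into a product of positive quantities (powers of $c_1$, $c_2$, $k$ and a sum-of-squares-type factor), ruling out a fold bifurcation, exactly analogous to the $32c_1^2+61c_1c_2+32c_2^2$ factor in Proposition \ref{prop:bifur-12-k}. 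For $i=2$ and $i=3$ I expect the non-trivial factors to be precisely $R_3$ and $R_4$, respectively, which are then declared in the appendix. The bifurcation labels (period-doubling at $CD_2^M=0$ and Neimark--Sacker at $CD_3^M=0$) follow directly from Remark \ref{rm:bif}.

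The principal obstacle will be the sheer size of the symbolic computation. The first polynomial of $\pset{T}_{32}$ has degree $8$ in $x$ (as opposed to degree $3$ in $\pset{T}_{12}$), and the entries of $M$ are rational functions with half-integer exponents, so after substituting $x=\sqrt{p_1},\,y=\sqrt{p_2}$ and rationalizing, the numerators $\numer(CD_i^M)$ will have substantially higher total degrees than in Section 3.2. Consequently, the intermediate resultants $\res(\numer(CD_i^M),2c_1y-x^3+3c_1x,y)$ will be dense polynomials in $x$ whose further reduction modulo a degree-$8$ polynomial produces very large expressions; identifying the non-spurious factors $R_3$ and $R_4$ (and verifying that the spurious factors are manifestly sign-definite) is where the computational bookkeeping is most delicate. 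In practice this step is performed in a computer algebra system, and the resulting $R_3,R_4$ are recorded in the appendix because they are too unwieldy to state inline.

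Finally, I would note that the constraint $c_1\neq c_2$ is imposed only to exclude the symmetric branch already handled by Theorem \ref{thm:stability13-c}: on the branch $\pset{T}_{43}$ the equilibrium has a closed form and the bifurcation conditions coincide with those already derived, while on the generic branch corresponding to $\pset{T}_{32}$ no closed form is available and the resultant machinery is the only avenue. This separation matches the structure of Proposition \ref{prop:bifur-12-k} and so completes the proof sketch.
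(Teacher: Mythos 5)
Your proposal follows essentially the same route as the paper: rewrite everything in the variables $x=\sqrt{p_1}$, $y=\sqrt{p_2}$ so the equilibrium lies on $\pset{T}_{32}$, compute $\res(\numer(CD_i^M),\pset{T}_{32})$ for $i=1,2,3$, use Lemma \ref{lem:res-com} to conclude that vanishing of the resultant is necessary for $CD_i^M=0$, observe that the $i=1$ resultant has only sign-definite factors (in the paper, $2187c_1^2-4031c_1c_2+2187c_2^2=2187(c_1-c_2)^2+343c_1c_2>0$ together with $(c_1-c_2)^2\neq 0$ under the hypothesis $c_1\neq c_2$), and read off $R_3$ and $R_4$ as the nontrivial factors for $i=2,3$, with the bifurcation labels supplied by Remark \ref{rm:bif}. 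This matches the paper's proof, so no further comment is needed.
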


\begin{proof}
Computing the resultant of $\numer(CD_1^M)$ with respect to $\pset{T}_{32}$, one obtains
\begin{align*}
\res(\numer(CD_1^M),\pset{T}_{32})=879609302220800000\, k^{16} c_1^{51} c_2^{11} \left(c_1-c_2\right)^{2} \left(2187\, c_1^{2}-4031\, c_1 c_2+2187\, c_2^{2}\right)^{2}.
\end{align*}
It is evident that
\begin{equation*}
2187\,c_1^{2} - 4031\,c_1c_2 + 2187\,c_2^{2}=2187(c_1-c_2)^2+343\,c_1c_2>0.	
\end{equation*}
Therefore, $\res(\numer(CD_1^M),\pset{T}_{32})\neq 0$, which means that $CD_1^M\neq 0$ at the unique non-vanishing equilibrium. Hence, there exist no fold bifurcations in map \eqref{eq:map-13}. Furthermore, we have
\begin{align*}
\res(\numer(CD_2^J),\pset{T}_{32})=99035203142830421991929937920000000\, c_1^{101} c_2^{13}\left(c_1-c_2\right)^{2} R_3^2,
\end{align*}
and 
\begin{align*}
\res(\numer(CD_3^J),\pset{T}_{32})=99035203142830421991929937920000000\,k^8 c_1^{101} c_2^{13} (c_1-c_2)^{10} R_4^2.
\end{align*}
Consequently, a period-doubling bifurcation may occur when $R_3=0$, while a Neimark-Sacker bifurcation may take place when $R_4=0$.	
\end{proof}

To investigate the local stability, we need to consider $\numer(CD_i^J)\cdot \denom(CD_i^J)$ and compute its resultant with respect to $\pset{T}_{32}$. Then it is obtained that
\begin{align*}
&\res(\numer(CD_1^J)\cdot \denom(CD_1^J),\pset{T}_{32})=\\
&\qquad5708990770823839524233143877797980545530986496\cdot 10^{20}\\
&\qquad\cdot k^{16} c_1^{156} c_2^{36} 
(c_1-c_2)^{12} 
(2187\,c_1^{2} 
- 4031\,c_1c_2 
+ 2187\,c_2^{2})^{2},\\

&\res((\numer(CD_2^J)\cdot \denom(CD_2^J),\pset{T}_{32})=\\
&\qquad 6582018229284824168619876730229402019930943462534319453394436096\cdot 10^{24}\\
&\qquad\cdot c_1^{218} c_2^{42} (c_1-c_2)^{10} R_3^2,\\

&\res((\numer(CD_3^J)\cdot \denom(CD_3^J),\pset{T}_{32})=\\
&\qquad6582018229284824168619876730229402019930943462534319453394436096\cdot 10^{24}\\
&\qquad\cdot k^{8} c_1^{218} c_2^{42} (c_1-c_2)^{10} R_4^2.
\end{align*}

These $\res(\numer(CD_i^J)\cdot \denom(CD_i^J),\pset{T}_{32})$ involve only the parameters and their zeros divide the parameter set $\{(c_1,c_2,k)\,|\,c_1> 0, c_2>0, k>0\}$ into several regions. In each region, the signs of $CD_1^M$, $CD_2^M$, and $CD_3^M$ are fixed and can be identified by checking at a selected sample point. In Table \ref{tab:sample-13}, we list the $40$ selected sample points and the signs of $R_3$, $R_4$ at these sample points. Moreover, Table \ref{tab:sample-13} provides the information on whether the non-vanishing equilibrium is stable, i.e., whether the stability conditions $CD_1^M>0$, $CD_2^M>0$ and $CD_3^M>0$ are satisfied simultaneously. Interested readers may check the correctness of Table \ref{tab:sample-13} themselves. Based on a series of computations, we acquire the following theorem.

\begin{theorem}\label{thm:stable-13-k1k2}
Let $k_1=k_2=k$ and $c_1\neq c_2$. The unique non-vanishing equilibrium of map \eqref{eq:map-13} is locally stable if one of the following conditions is satisfied:
\begin{enumerate}
	\item $R_3>0,R_4>0$;
	\item $R_3<0,R_4>0$ and $A_1>0, A_2 < 0, A_3>0$,
\end{enumerate}
where $R_3$, $R_4$, $A_1$, $A_2$, and $A_3$ can be found in Appendix.
\end{theorem}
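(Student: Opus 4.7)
The plan is to mirror the argument used to prove Theorem \ref{thm:stable-12-k1k2}. By Proposition \ref{prop:unique-sol-13}, the unique positive equilibrium is encoded (via the substitution $x=\sqrt{p_1}$, $y=\sqrt{p_2}$) by the triangular set $\pset{T}_{32}$, and by the Jury criterion the equilibrium is locally stable if and only if $CD_1^M>0$, $CD_2^M>0$, and $CD_3^M>0$ simultaneously at this equilibrium. Since the closed form of the equilibrium is too unwieldy to substitute directly into $M$, I would work instead with $\numer(CD_i^M)\cdot\denom(CD_i^M)$, whose sign coincides with that of $CD_i^M$ (provided $\denom(CD_i^M)\neq 0$), and exploit the fact that this product evaluated at the equilibrium can vanish only if $\res\bigl(\numer(CD_i^M)\cdot\denom(CD_i^M),\,\pset{T}_{32}\bigr)=0$ by Lemma \ref{lem:res-com}.

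Next, I would collect the three resultants $\res(\numer(CD_i^M)\cdot\denom(CD_i^M),\pset{T}_{32})$ already displayed just before the statement of the theorem. The explicit factorizations show that, on the positive orthant $\{c_1>0,c_2>0,k>0\}$, the strictly positive pre-factors ($k$-powers, $c_1$- and $c_2$-powers, $(c_1-c_2)^{10}$, and $2187c_1^2-4031c_1c_2+2187c_2^2=2187(c_1-c_2)^2+343c_1c_2$) can be discarded, so only $R_3$ and $R_4$ can change the sign of the first two relevant resultants, while the first resultant never vanishes. Consequently, as one varies the parameters, the sign pattern $(\sgn CD_1^M,\sgn CD_2^M,\sgn CD_3^M)$ at the equilibrium can change only when $R_3=0$ or $R_4=0$, and possibly along additional polynomial hypersurfaces produced by the cylindrical algebraic decomposition; these extra hypersurfaces will be captured by the auxiliary polynomials $A_1,A_2,A_3$ recorded in the Appendix.

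With the sign-determining variety in hand, I would then apply the PCAD method of Collins \cite{Collins1991P} to the polynomials $R_3,R_4,A_1,A_2,A_3$ restricted to $\{c_1>0,c_2>0,k>0\}$ in order to obtain a finite decomposition into connected cells, on each of which every $CD_i^M$ has constant sign. Picking one representative in each cell, as tabulated in Table \ref{tab:sample-13}, I would numerically check the three Jury inequalities. Reading off those cells on which all three inequalities hold and matching them against the signs of $R_3$, $R_4$, $A_1$, $A_2$, $A_3$ then yields exactly the two disjunctive conditions in the statement: either $R_3>0,R_4>0$, or $R_3<0,R_4>0,A_1>0,A_2<0,A_3>0$.

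The main obstacle, as in the $\alpha=1/2$ case but now significantly heavier, is the PCAD step: producing the auxiliary projection factors $A_1,A_2,A_3$, certifying that the resulting cell decomposition is complete, and verifying that every cell is correctly labelled by a sample point. The resultants and their discriminants are polynomials of large degree in $(c_1,c_2,k)$, so the computation must be automated and the exactness of every intermediate result must be preserved. Once the decomposition is produced and validated, the conclusion of the theorem is a purely combinatorial reading-off from the sign table, and the theorem follows immediately.
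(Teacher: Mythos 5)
Your proposal follows essentially the same route as the paper: reduce to the Jury conditions at the equilibrium encoded by $\pset{T}_{32}$, compute the resultants of $\numer(CD_i^M)\cdot\denom(CD_i^M)$ with respect to $\pset{T}_{32}$, discard the manifestly positive factors so that only $R_3$ and $R_4$ (plus auxiliary polynomials) delimit the sign-invariant regions, and then certify each region by exact evaluation at the sample points of Table~\ref{tab:sample-13}. The only cosmetic difference is that you attribute $A_1,A_2,A_3$ to extra PCAD projection factors, whereas the paper draws them from the generalized discriminant list of \cite{Yang2001A}; in both cases their role is the same, namely to separate the stable from the unstable connected components of the (disconnected) set $\{R_3<0,\,R_4>0\}$.
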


\begin{remark}
From Table \ref{tab:sample-13}, we see that the equilibrium is stable if $R_3>0$ and $R_4>0$. Hence, $R_3>0$, $R_4>0$ is a sufficient condition for the local stability. However, this condition is not necessary. For example, at the first sample point $(1,1/4,1/512)$ listed in Table \ref{tab:sample-13}, the equilibrium is locally stable, but one can verify that $R_3<0$ and $R_4>0$ at this point. Thus, the second condition of Theorem \ref{thm:stable-13-k1k2} is needed.

The necessity of the second condition can also be illustrated by Figure \ref{fig:k1k2-fixedc1} (b, d, f), where the regions defined by the first and second conditions are marked in light grey and dark grey, respectively. By economic intuition, we know that for a fixed value of the marginal cost $c_2$, a decrease in the adjustment speed $k$ would be beneficial to the local stability of the equilibrium. That is to say, the dark grey regions defined by the second condition would be more likely to be included in the stability regions.

It is noted that $A_1$, $A_2$, and $A_3$ involved in the second condition are contained in the so-called generalized discriminant list and can be picked out by repeated trials. Concerning the generalized discriminant list, the reader may refer to \cite{Yang2001A} for more details. The polynomials $A_1$, $A_2$, and $A_3$ are needed here since the condition that $R_3<0$ and $R_4>0$ is not a sufficient condition for the local stability. For example, the model is stable at $(1,1/4,1/512)$, where $R_3<0$ and $R_4>0$. But, the model is unstable at $(1,1/4,34)$, where $R_3<0$ and $R_4>0$ are also satisfied. Consequently, additional polynomials are needed to constrict the region defined by $R_3<0$ and $R_4>0$ such that the complete stability conditions can be acquired.
\end{remark}

\begin{table}[htbp]
	\centering 
	\caption{Selected Sample Points in $\{(c_1,c_2,k)\,|\,c_1> 0, c_2>0, k>0\}$ for $\alpha=1/3$ }
	\label{tab:sample-13} 
	\begin{tabular}{|l|c|c|c||l|c|c|c|}  
		\hline  
		$(c_1,c_2,k)$ & stable & $R_3$ & $R_4$ & $(c_1,c_2,k)$ &  stable & $R_3$ & $R_4$ \\ \hline
$(1, 1/4, 1/512)$& yes & $-$ & $+$ & $(1, 3/8, 1/128)$& yes & $-$ & $+$ \\ \hline
$(1, 1/4, 1)$& yes & $+$ & $+$ & $(1, 3/8, 1)$& yes & $+$ & $+$ \\ \hline
$(1, 1/4, 34)$& no & $-$ & $+$ & $(1, 3/8, 64)$& no & $-$ & $+$ \\ \hline
$(1, 1/4, 153)$& no & $-$ & $-$ & $(1, 3/8, 175)$& no & $-$ & $-$ \\ \hline
$(1, 1/4, 273)$& no & $+$ & $-$ & $(1, 3/8, 287)$& no & $+$ & $-$ \\ \hline
 
$(1, 5/8, 1/32)$& yes & $-$ & $+$ & $(1, 7/8, 1/128)$& yes & $-$ & $+$ \\ \hline
$(1, 5/8, 1)$& yes & $+$ & $+$ & $(1, 7/8, 1)$& yes & $+$ & $+$ \\ \hline
$(1, 5/8, 145)$& no & $-$ & $+$ & $(1, 7/8, 244)$& no & $-$ & $+$ \\ \hline
$(1, 5/8, 231)$& no & $-$ & $-$ & $(1, 7/8, 302)$& no & $-$ & $-$ \\ \hline
$(1, 5/8, 317)$& no & $+$ & $-$ & $(1, 7/8, 361)$& no & $+$ & $-$ \\ \hline

$(1, 5/4, 1/32)$& yes & $-$ & $+$ & $(1, 3/2, 1/16)$& yes & $-$ & $+$ \\ \hline
$(1, 5/4, 1)$& yes & $+$ & $+$ & $(1, 3/2, 1)$& yes & $+$ & $+$ \\ \hline
$(1, 5/4, 335)$& no & $-$ & $+$ & $(1, 3/2, 362)$& no & $-$ & $+$ \\ \hline
$(1, 5/4, 436)$& no & $-$ & $-$ & $(1, 3/2, 544)$& no & $-$ & $-$ \\ \hline
$(1, 5/4, 538)$& no & $+$ & $-$ & $(1, 3/2, 726)$& no & $+$ & $-$ \\ \hline

$(1, 2, 1/16)$& yes & $-$ & $+$ & $(1, 3, 1/16)$& yes & $-$ & $+$ \\ \hline
$(1, 2, 1)$& yes & $+$ & $+$ & $(1, 3, 1)$& yes & $+$ & $+$ \\ \hline
$(1, 2, 403)$& no & $-$ & $+$ & $(1, 3, 471)$& no & $-$ & $+$ \\ \hline
$(1, 2, 804)$& no & $-$ & $-$ & $(1, 3, 1503)$& no & $-$ & $-$ \\ \hline
$(1, 2, 1205)$& no & $+$ & $-$ & $(1, 3, 2536)$& no & $+$ & $-$ \\ \hline
	\end{tabular}
\end{table}

\section{Influence of the Substitutability Degree}


Firstly, we analyze the influence of the substitutability degree $\alpha$ on the size of the stability region of the equilibrium. We start by considering the special case of $c_1=c_2$. 

\begin{proposition}\label{prop:reg-compare-23}
	Let $c_1=c_2$. The stability region for $\alpha=1/2$ is a proper subset of that for $\alpha=1/3$. 
\end{proposition}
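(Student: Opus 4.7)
The plan is to reduce the set-containment claim to a pointwise inequality between the two thresholds supplied by Theorems~\ref{thm:stability12-c} and~\ref{thm:stability13-c}. Writing
\[
T_{1/2}(k_1,k_2) := \frac{2(k_1+k_2) + \sqrt{4k_1^{2} - 7k_1k_2 + 4k_2^{2}}}{216}, \qquad
T_{1/3}(k_1,k_2) := \frac{3(k_1+k_2) + \sqrt{9k_1^{2} - 17k_1k_2 + 9k_2^{2}}}{2000},
\]
the stability regions in the parameter space $\{(c,k_1,k_2) \in \mathbb{R}_{>0}^{3}\}$ are $\{c^2 > T_{\alpha}(k_1,k_2)\}$, so the proper-subset claim is equivalent to the strict pointwise inequality $T_{1/3}(k_1,k_2) < T_{1/2}(k_1,k_2)$ for every $k_1,k_2 > 0$. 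Strictness of the containment is then automatic: for each $(k_1,k_2)$ any $c$ with $T_{1/3} < c^2 < T_{1/2}$ gives a point in the larger region but not the smaller.

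To prove the pointwise inequality I would pass to the elementary symmetric variables $s = k_1 + k_2$ and $p = k_1 k_2$, subject to $0 < p \le s^2/4$ by AM--GM. Using $k_1^2 + k_2^2 = s^2 - 2p$, the two radicands collapse to the linear forms $4s^2 - 15p$ and $9s^2 - 35p$, so the only nontrivial dependence on the shape of $(k_1,k_2)$ is through the single parameter $p$, and the problem becomes one-dimensional.

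The core step is to bound the two thresholds from opposite sides of the admissible range of $p$. Because $p > 0$ strictly, $\sqrt{9s^2 - 35p} < 3s$, and hence $T_{1/3} < 3s/1000$. Conversely, $\sqrt{4s^2 - 15p} \ge s/2$ (with equality exactly at $p = s^2/4$, i.e., $k_1 = k_2$), and hence $T_{1/2} \ge 5s/432$. The whole proposition thereby reduces to the purely numerical check $3/1000 < 5/432$, equivalently $1296 < 5000$, and the chain $T_{1/3} < 3s/1000 < 5s/432 \le T_{1/2}$ closes the argument.

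The only point requiring real care is the choice of the two one-sided estimates: squaring and comparing $T_{1/2}$ with $T_{1/3}$ directly produces an unwieldy polynomial inequality in $(k_1,k_2)$, and bounding both radicals in the same direction fails to separate them. The key observation is that the gap between $3/1000$ and $5/432$ is wide enough that the trivial monotonicity bounds on the two radicands in $p$ already decouple cleanly; once the correct direction is picked on each, no further algebraic difficulty remains.
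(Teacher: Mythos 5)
Your proposal is correct, and it reaches the required inequality $T_{1/3}<T_{1/2}$ by a genuinely different route from the paper. The paper's proof squares both thresholds, expands the difference of squares into a sum in which exactly one term (the one carrying $\sqrt{9k_1^2-17k_1k_2+9k_2^2}$) is negative, and then checks that this term is dominated by the positive term carrying $\sqrt{4k_1^2-7k_1k_2+4k_2^2}$, the remaining quadratic terms being manifestly positive. You instead symmetrize via $s=k_1+k_2$, $p=k_1k_2$, note that the radicands become $4s^2-15p$ and $9s^2-35p$, and exploit the constraint $0<p\le s^2/4$ to bound the two thresholds from opposite sides: $T_{1/3}<3s/1000$ (dropping $p$) and $T_{1/2}\ge 5s/432$ (taking $p=s^2/4$), after which only the numeric check $1296<5000$ remains. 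Your bounds are all verified correctly ($4s^2-15\cdot s^2/4=s^2/4$, and both radicands are positive since they equal $4(k_1-k_2)^2+k_1k_2$ and $9(k_1-k_2)^2+k_1k_2$), and your handling of properness --- picking $c$ with $T_{1/3}<c^2<T_{1/2}$ --- is exactly what the strict inequality delivers. What your approach buys is a fully hand-checkable argument that avoids the large expanded polynomial the paper relies on (and which was presumably produced by computer algebra); what the paper's approach buys is directness, requiring no change of variables and no choice of which bound to apply to which radical. One cosmetic caveat: the proper-subset claim is not literally \emph{equivalent} to strict pointwise inequality everywhere (it only requires $\le$ everywhere plus strictness somewhere), but since you prove the stronger strict inequality for all $k_1,k_2>0$, nothing is lost.
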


\begin{proof}
Recall Theorems \ref{thm:stability12-c} and \ref{thm:stability13-c}. We need to prove that 
$$\frac{2\,k_1+2\,k_2+\sqrt{4\, k_1^{2}-7\, k_1 k_2+4\, k_2^{2}}}{216}>\frac{3\,k_1+3\,k_2+\sqrt{9\, k_1^{2}-17\, k_1 k_2+9\, k_2^{2}}}{2000},$$
which is equivalent to 
$$\left(\frac{2\,k_1+2\,k_2+\sqrt{4\, k_1^{2}-7\, k_1 k_2+4\, k_2^{2}}}{216}\right)^2
-\left(\frac{3\,k_1+3\,k_2+\sqrt{9\, k_1^{2}-17\, k_1 k_2+9\, k_2^{2}}}{2000}\right)^2>0.$$

The left-hand side of the above inequality can be simplified into
\begin{align*}
&-\frac{\left(4374\, k_1+4374\, k_2\right) \sqrt{9\, k_1^{2}-17\, k_1 k_2+9\, k_2^{2}}}{2916000000}
+\frac{\left(250000\, k_1+250000\, k_2\right) \sqrt{4\, k_1^{2}-7\, k_1 k_2+4\, k_2^{2}}}{2916000000}\\
&+\frac{243439\, k_1^{2}}{1458000000}+\frac{61771\, k_1 k_2}{2916000000}+\frac{243439\, k_2^{2}}{1458000000}.
\end{align*}
It is easy to check that
$$\frac{\left(4374\, k_1+4374\, k_2\right) \sqrt{9\, k_1^{2}-17\, k_1 k_2+9\, k_2^{2}}}{2916000000}
<\frac{\left(250000\, k_1+250000\, k_2\right) \sqrt{4\, k_1^{2}-7\, k_1 k_2+4\, k_2^{2}}}{2916000000},$$
which completes the proof.

\end{proof}

If $c_1\neq c_2$, however, the conclusion of the above proposition would be incorrect. For example, if we assume $k_1=k_2=k$ and take $(c_1,c_2,k)=(261/65536, 1/2, 79/1024)$, then 
\begin{align*}
&R_1=\frac{588713082686404258452596575293972215811486125608829}{6129982163463555433433388108601236734474956488734408704}>0,\\
&R_2=
\frac{108130364702270905134254005155560019343}{340282366920938463463374607431768211456}>0.
\end{align*}
Hence, $(261/65536, 1/2, 79/1024)$ is in the stability region of the model for $\alpha=1/2$. But, at the same parameter point, namely $(c_1,c_2,k)=(261/65536, 1/2, 79/1024)$, we have
\begin{align*}
&R_3=-\frac{791461358900213183480020700044263844445257635142615074110540187}{26328072917139296674479506920917608079723773850137277813577744384}<0,\\
&R_4=\frac{526438846625624761986017962528229497389068363385599391}{374144419156711147060143317175368453031918731001856}>0,	
\end{align*}
and
\begin{align*}
&A_1=\frac{44864955}{4294967296}>0,~~A_2=-\frac{842240947483983714275440267}{81129638414606681695789005144064}<0,\\
&A_3=-\frac{63936547182666560163845458457577}{649037107316853453566312041152512}	<0.
\end{align*}
This means that the stability conditions of Theorem \ref{thm:stable-13-k1k2} for $\alpha=1/3$ are not satisfied.

On the other hand, one can also find some points where the model is stable for $\alpha=1/3$ but unstable for $\alpha=1/2$. For example, at $(c_1,c_2,k)=( 3/8,  1/2,  827/64)$, we know
\begin{align*}
	R_3=\frac{40079185741889580295152003015}{288230376151711744}>0,~~R_4=\frac{29339436396656781}{17179869184}>0.
\end{align*}
Therefore, $( 3/8,  1/2,  827/64)$ is in the stability region for $\alpha=1/3$.
However,
\begin{align*}
	R_1=-\frac{24200272602071108539}{17592186044416}<0,~~R_2=-\frac{96467864887}{67108864}<0.
\end{align*}
That is to say, $( 3/8,  1/2,  827/64)$ is an unstable parameter point for $\alpha=1/2$. 

Figure \ref{fig:k1k2-fixedk} depicts the 2-dimensional cross-sections of the stability regions for $\alpha=1/2$ and $\alpha=1/3$. For comparison purposes, we place the cross-sections for $\alpha=1/2$ on the left and those for $\alpha=1/3$ on the right. We set $k_1=k_2=k$ and choose three different values of the parameter $k$, i.e., $k=1/2,1,10$, to observe the effect of variation of $k$ on the size of the stability regions. The curves of $R_1=0$ and $R_3=0$ are marked in blue; the curves of $R_2=0$ and $R_3$ are marked in green; the curves of $A_1=0$, $A_2=0$ and $A_3=0$ are marked in red. The stability regions are colored in light grey. From Figure \ref{fig:k1k2-fixedk}, we find that the stability region would shrink if the firms react or adjust their outputs faster both for $\alpha=1/2$ and $\alpha=1/3$. Similarly, in Figure \ref{fig:k1k2-fixedc1}, we assume that  $k_1$ and $k_2$ are identical and choose three different values of $c_1$, i.e., $c_1=1/2,1,10$. The regions of $R_1>0$, $R_2>0$ and those of $R_3>0$, $R_4>0$ are colored in light grey, while the regions defined by $R_3<0$, $R_4>0$, $A_1>0$, $A_2<0$, $A_3>0$ are colored in dark grey. From Figure \ref{fig:k1k2-fixedc1}, we observe that increasing the marginal cost $c_1$ of the first firm could result in the enlargement of the stability region for $\alpha=1/2$ and $\alpha=1/3$.

As aforementioned, in the case of $c_1\neq c_2$ and $k_1=k_2$, it can not be proved that the stability region for $\alpha=1/3$ covers that for $\alpha=1/2$. From Figures \ref{fig:k1k2-fixedk} and \ref{fig:k1k2-fixedc1}, however, it seems that the stability region for $\alpha=1/3$ is larger than that for $\alpha=1/2$. Consequently, for the Bertrand duopoly model considered in this paper, we may conclude that increasing the substitutability degree $\alpha$ has an effect of destabilizing the unique non-vanishing equilibrium in some sense. In other words, product differentiation might make the considered model more stable, which is an important finding from an economic point of view. 
{Shy \cite{Shy1995I} discussed the traditional view on the degree of product differentiation, i.e., a decrease in product differentiation may result in an increase in market competition intensity and even a price war among involved firms. The possible explanation for our finding is that a price war might destabilize the equilibrium of the Bertrand game with differentiated goods.}
It should be noted that our conclusion is in contrast with the one by Agliari et al.\  \cite{Agliari2016N}. Specifically, Agliari et al.\ \cite{Agliari2016N} investigated a Cournot duopoly model with differentiated products and employed the same CES utility function and the same linear cost functions as in our study. However, they discovered that a higher degree of product differentiation or a lower degree of substitutability leads to the destabilization of their model. This contradiction may help reveal the essential difference between the Bertrand and Cournot oligopolies with differentiated goods.

\begin{figure}[htbp]
  \centering
  \subfloat[$\alpha=1/2$, $k=1/2$]{\includegraphics[width=0.35\textwidth]{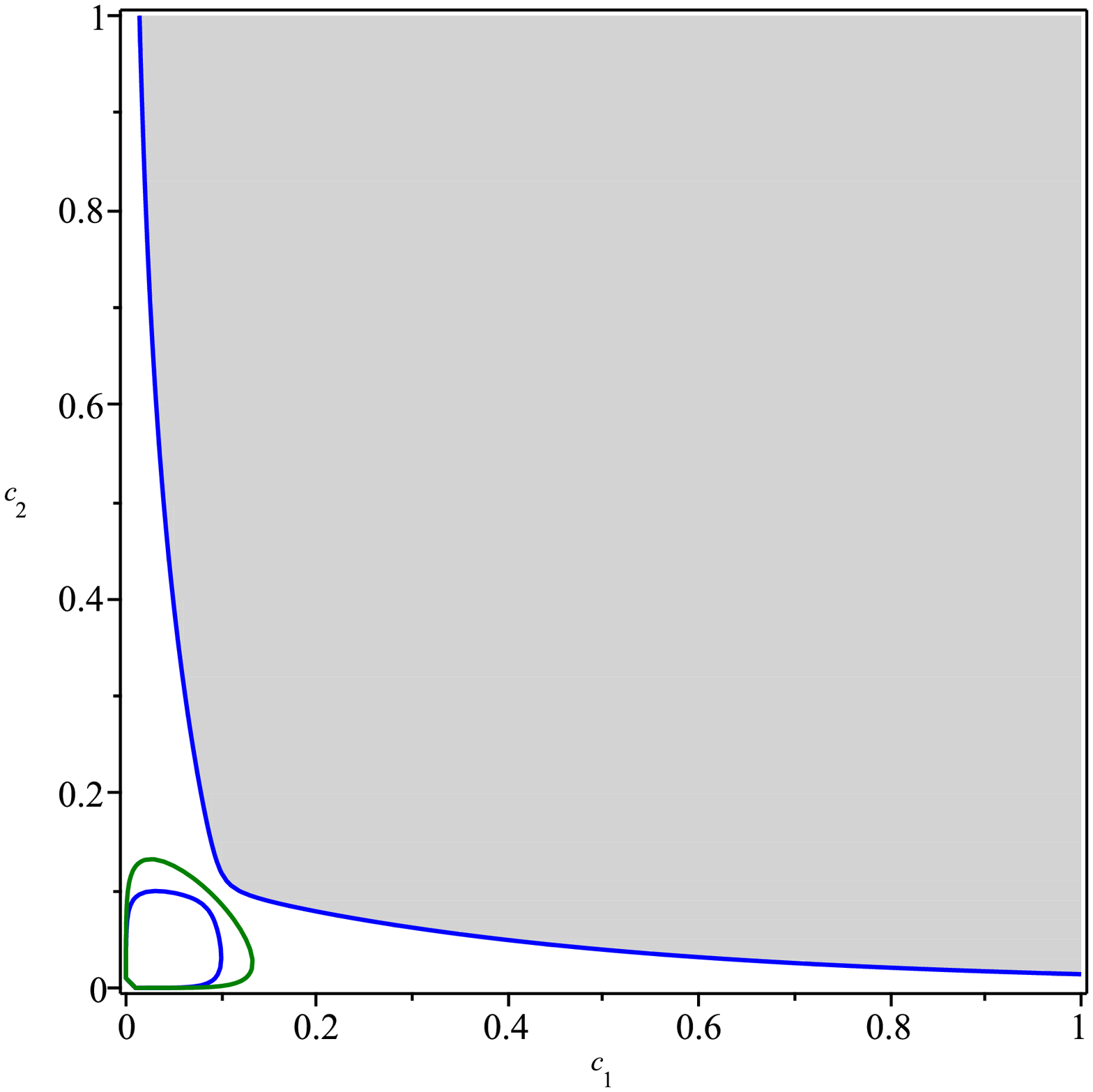}} 
  \subfloat[$\alpha=1/3$, $k=1/2$]{\includegraphics[width=0.35\textwidth]{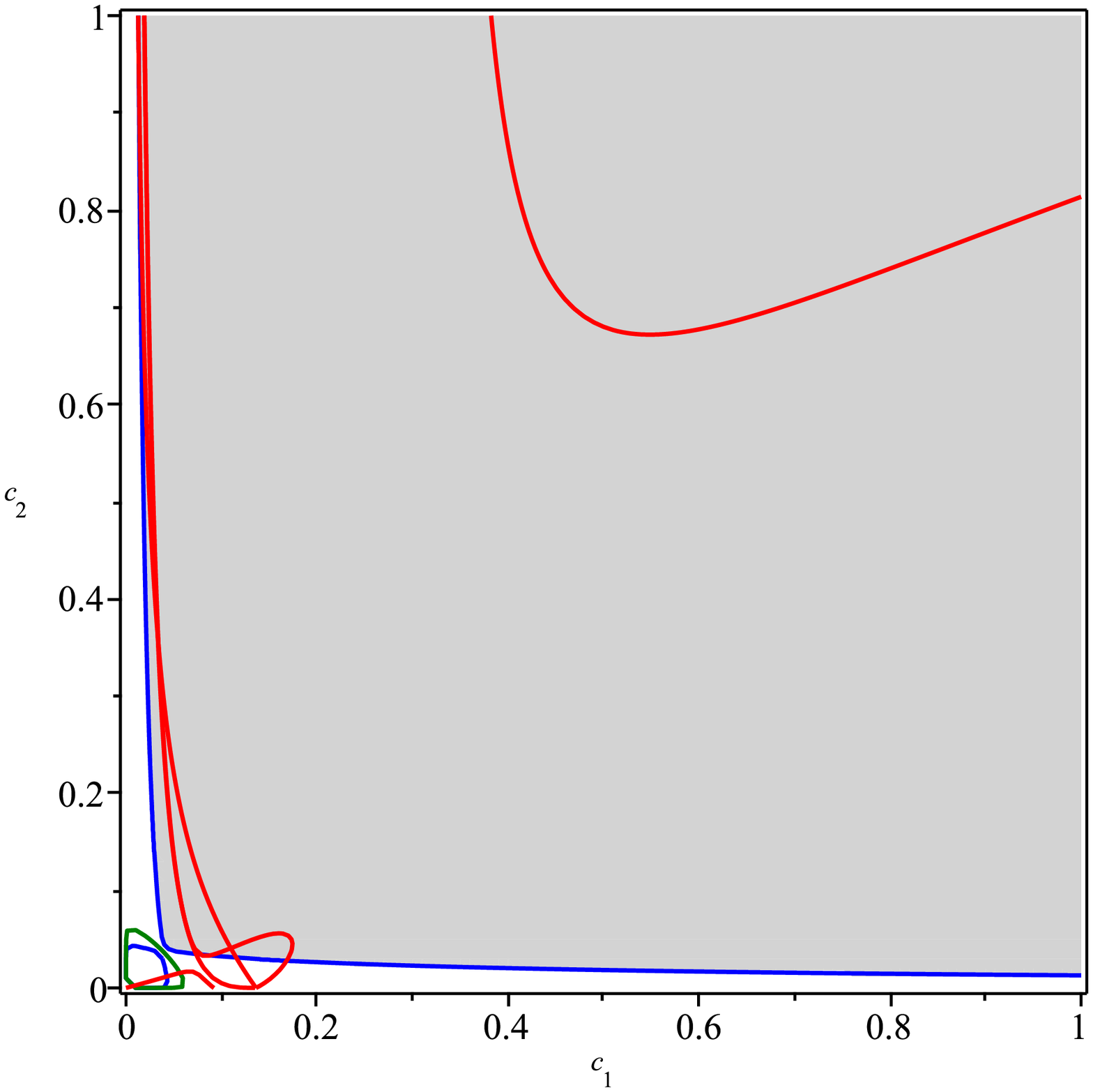}} \\

  \subfloat[$\alpha=1/2$, $k=1$]{\includegraphics[width=0.35\textwidth]{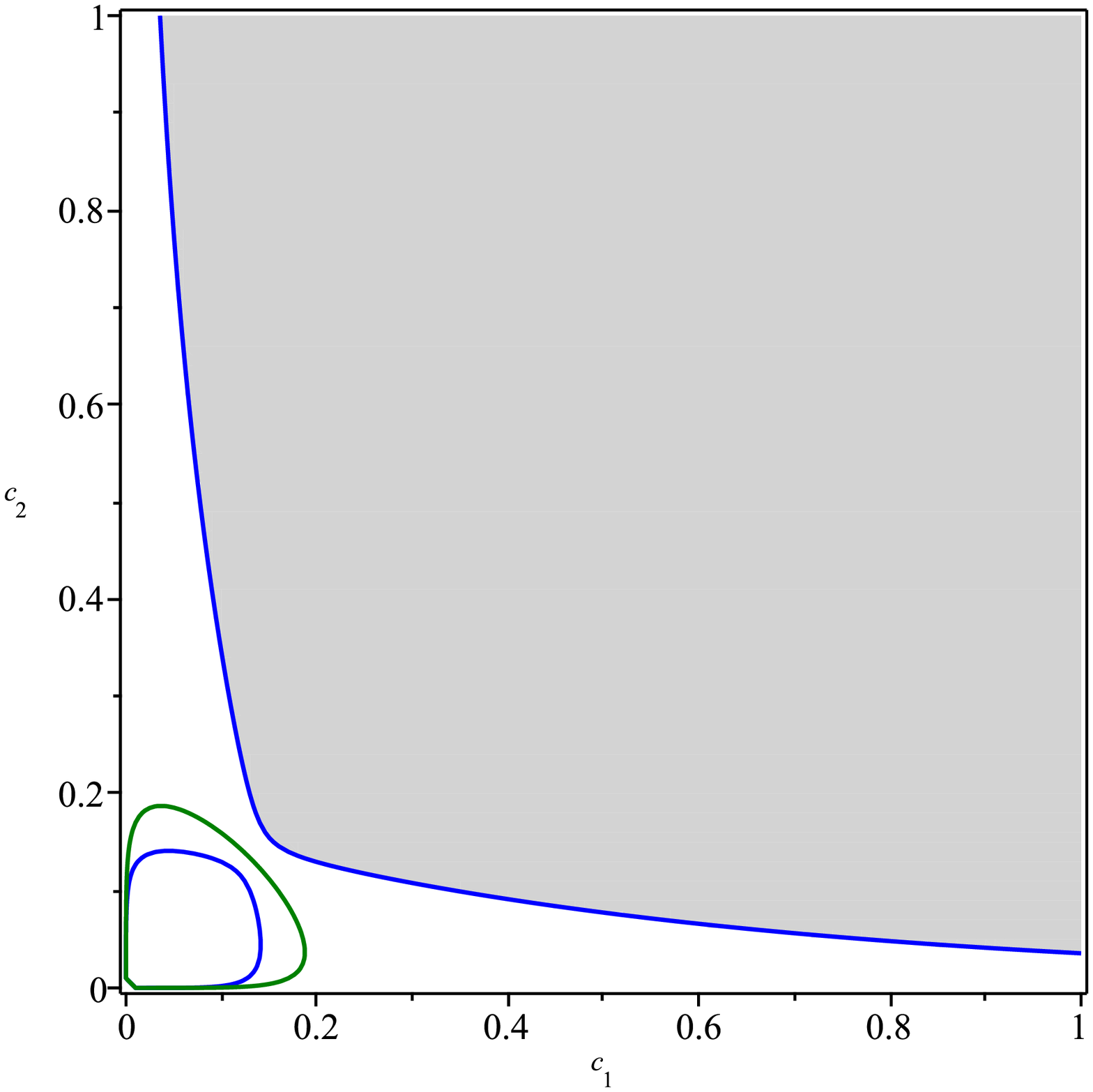}} 
  \subfloat[$\alpha=1/3$, $k=1$]{\includegraphics[width=0.35\textwidth]{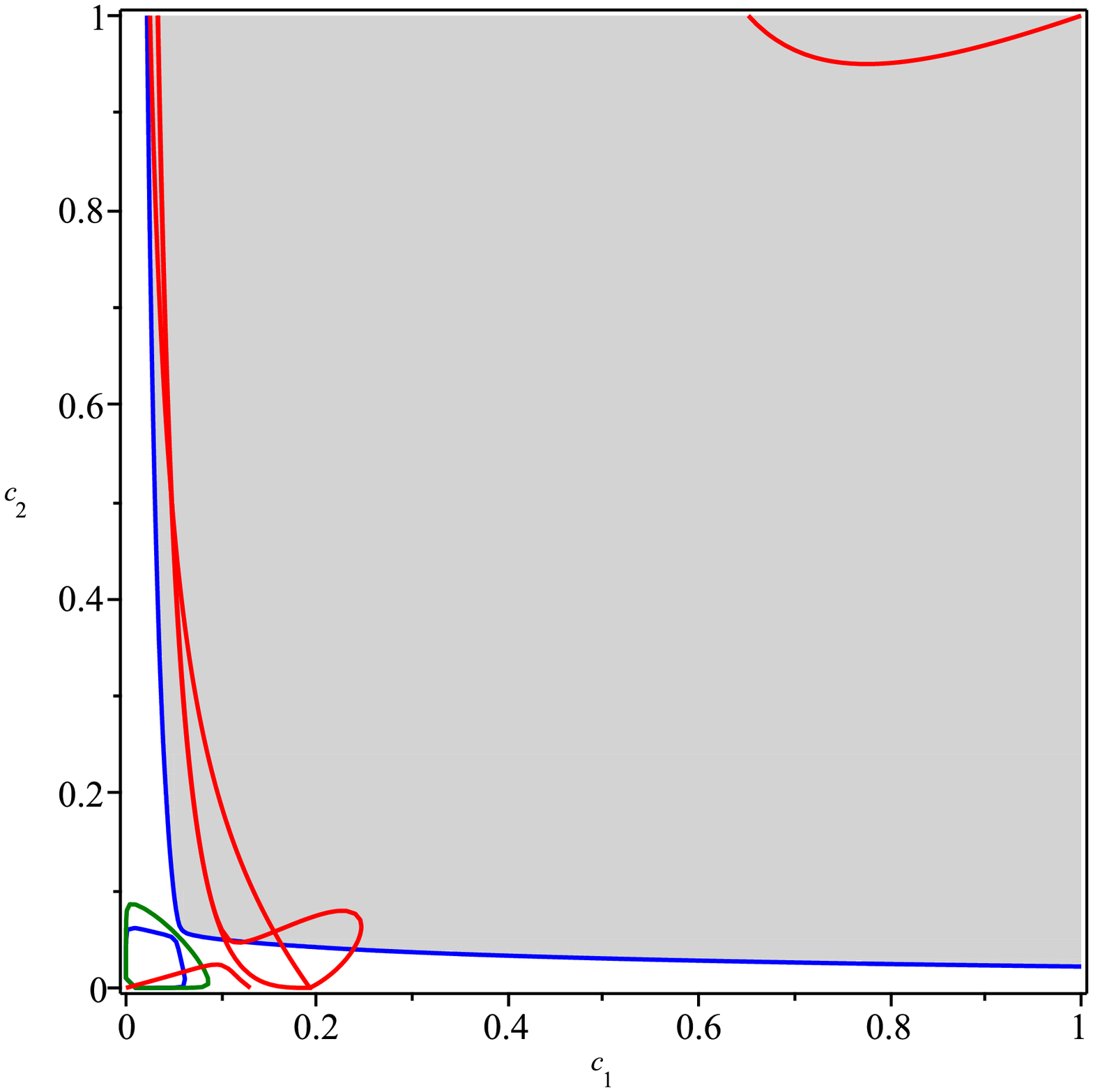}} \\

  \subfloat[$\alpha=1/2$, $k=10$]{\includegraphics[width=0.35\textwidth]{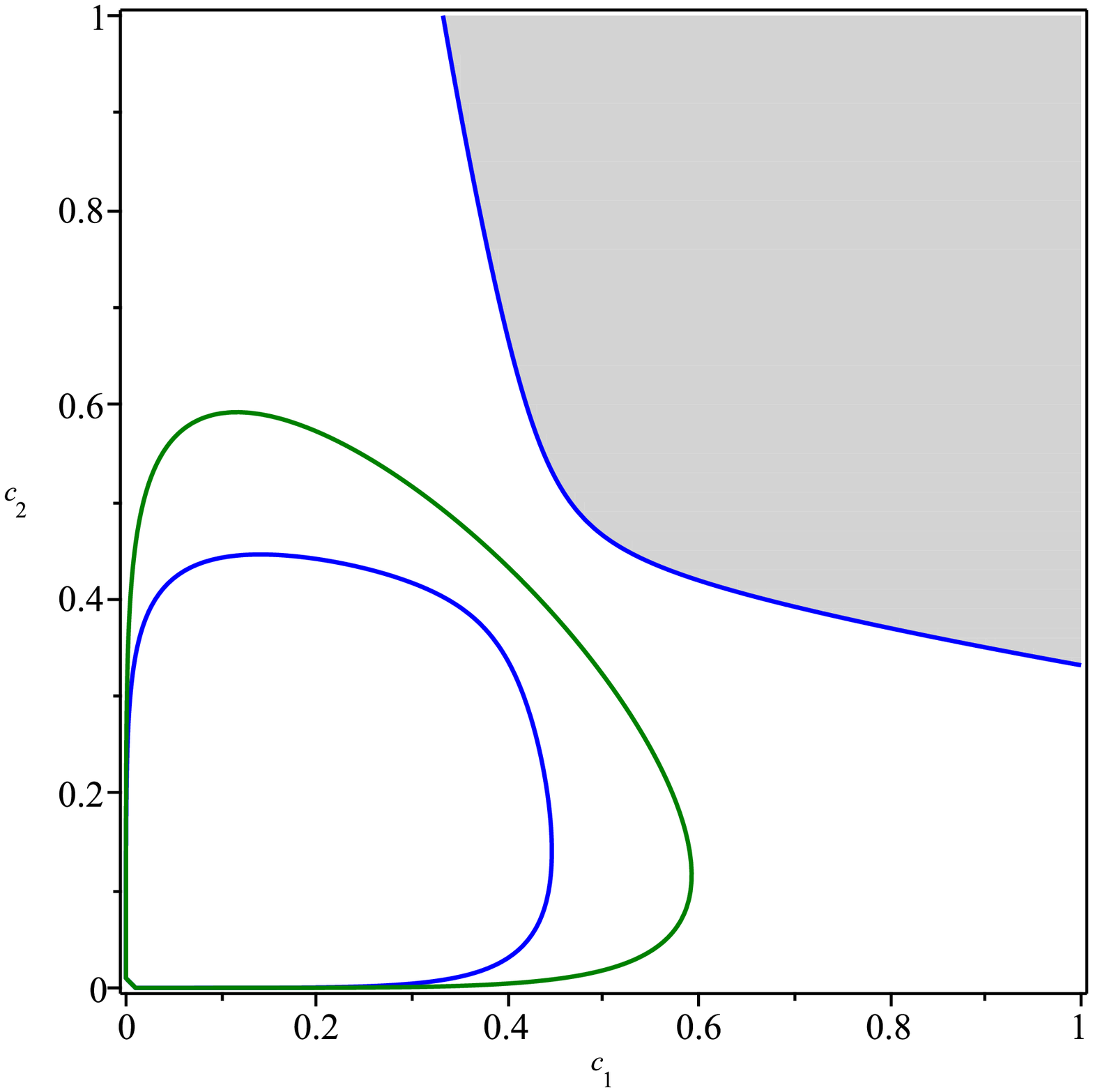}} 
  \subfloat[$\alpha=1/3$, $k=10$]{\includegraphics[width=0.35\textwidth]{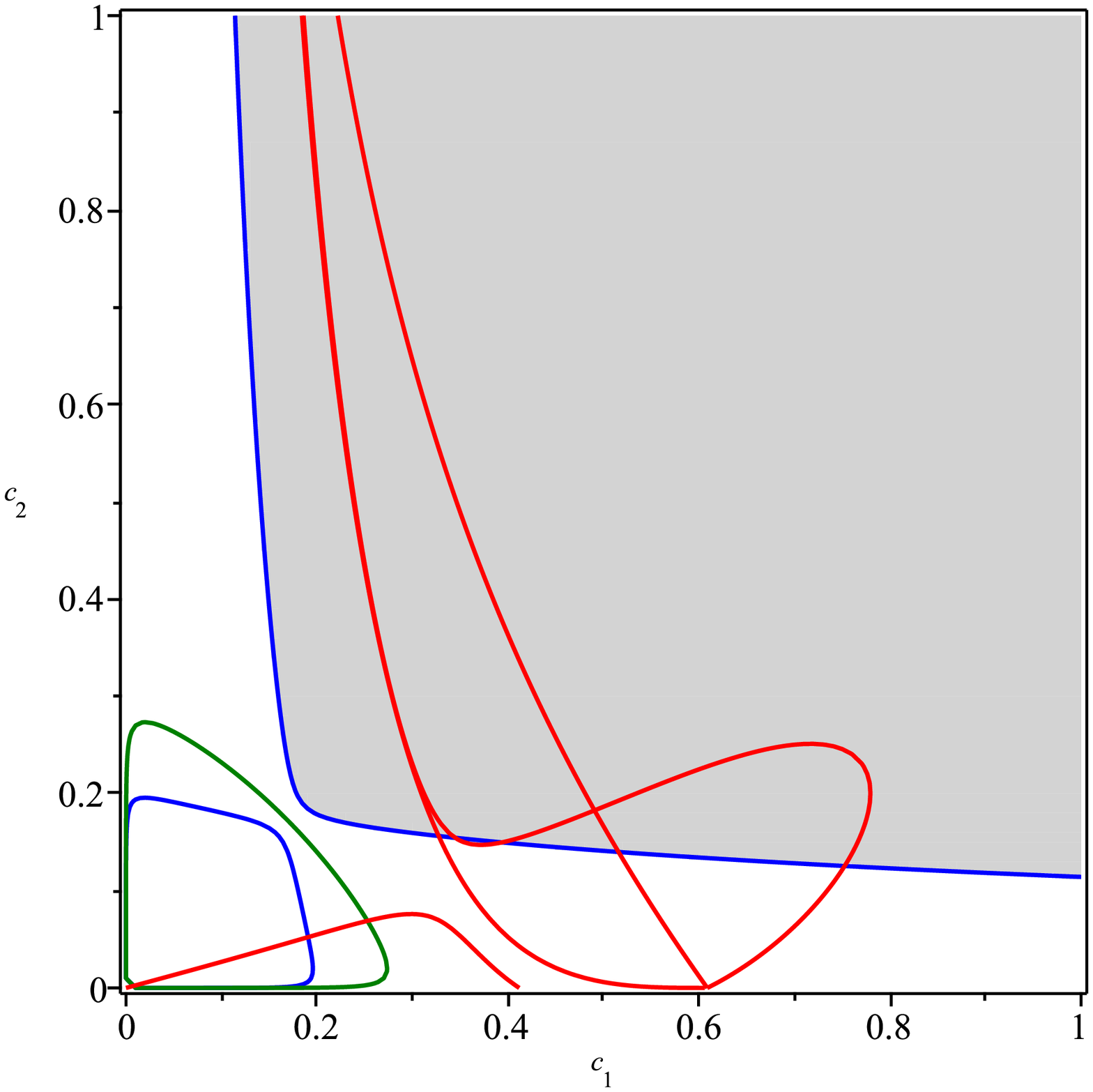}} \\
  \caption{The 2-dimensional cross-sections of the stability regions for $\alpha=1/2$ and $\alpha=1/3$ if we set $k_1=k_2=k$ and fix $k=1/2,1,10$. The curves of $R_1=0$ and $R_3=0$ are marked in blue; the curves of $R_2=0$ and $R_3$ are marked in green; the curves of $A_1=0$, $A_2=0$ and $A_3=0$ are marked in red. The stability regions are colored in light grey.}
 \label{fig:k1k2-fixedk}
\end{figure}

\begin{figure}[htbp]
  \centering
  \subfloat[$\alpha=1/2$, $c_1=1/2$]{\includegraphics[width=0.35\textwidth]{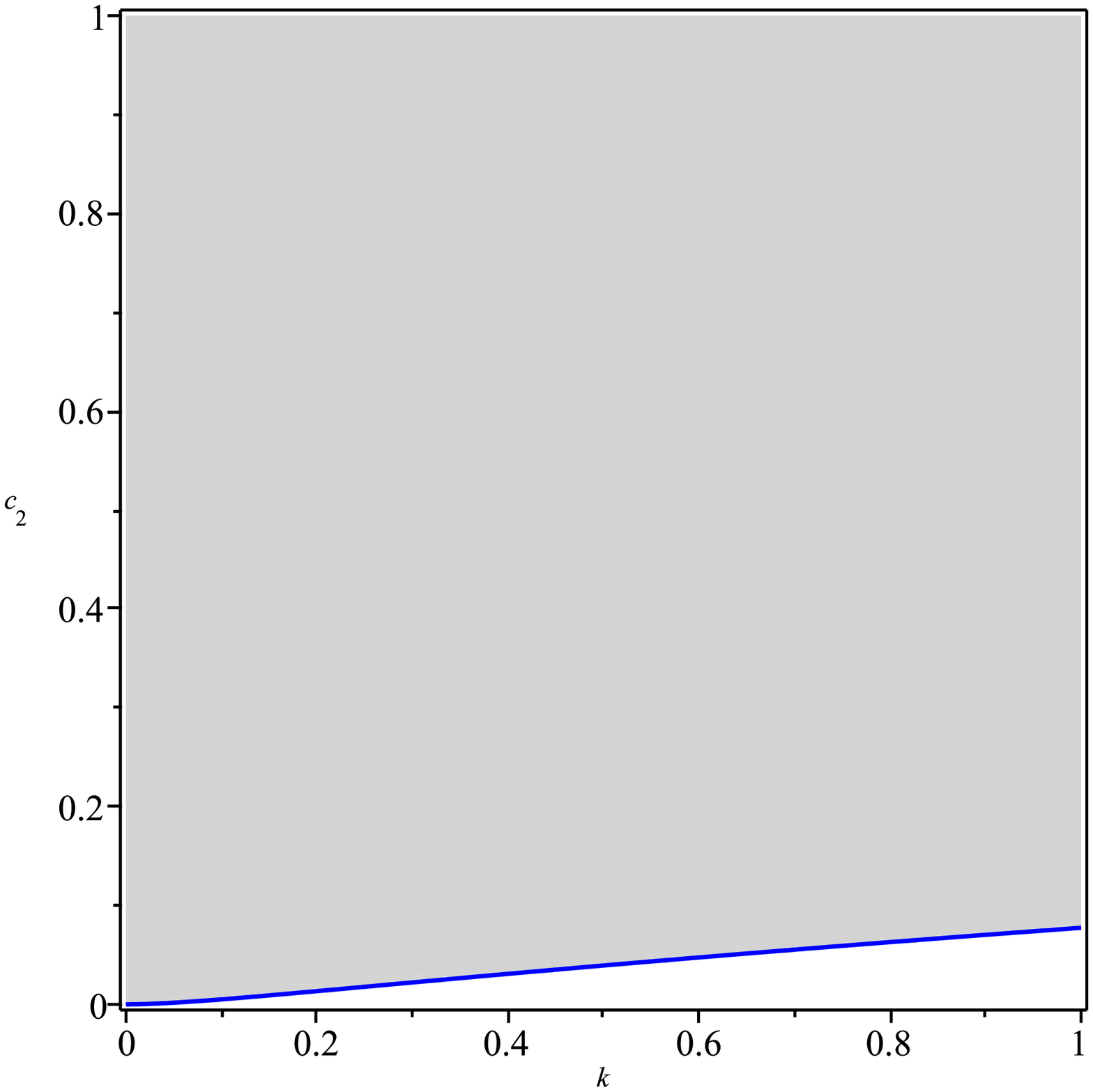}} 
  \subfloat[$\alpha=1/3$, $c_1=1/2$]{\includegraphics[width=0.35\textwidth]{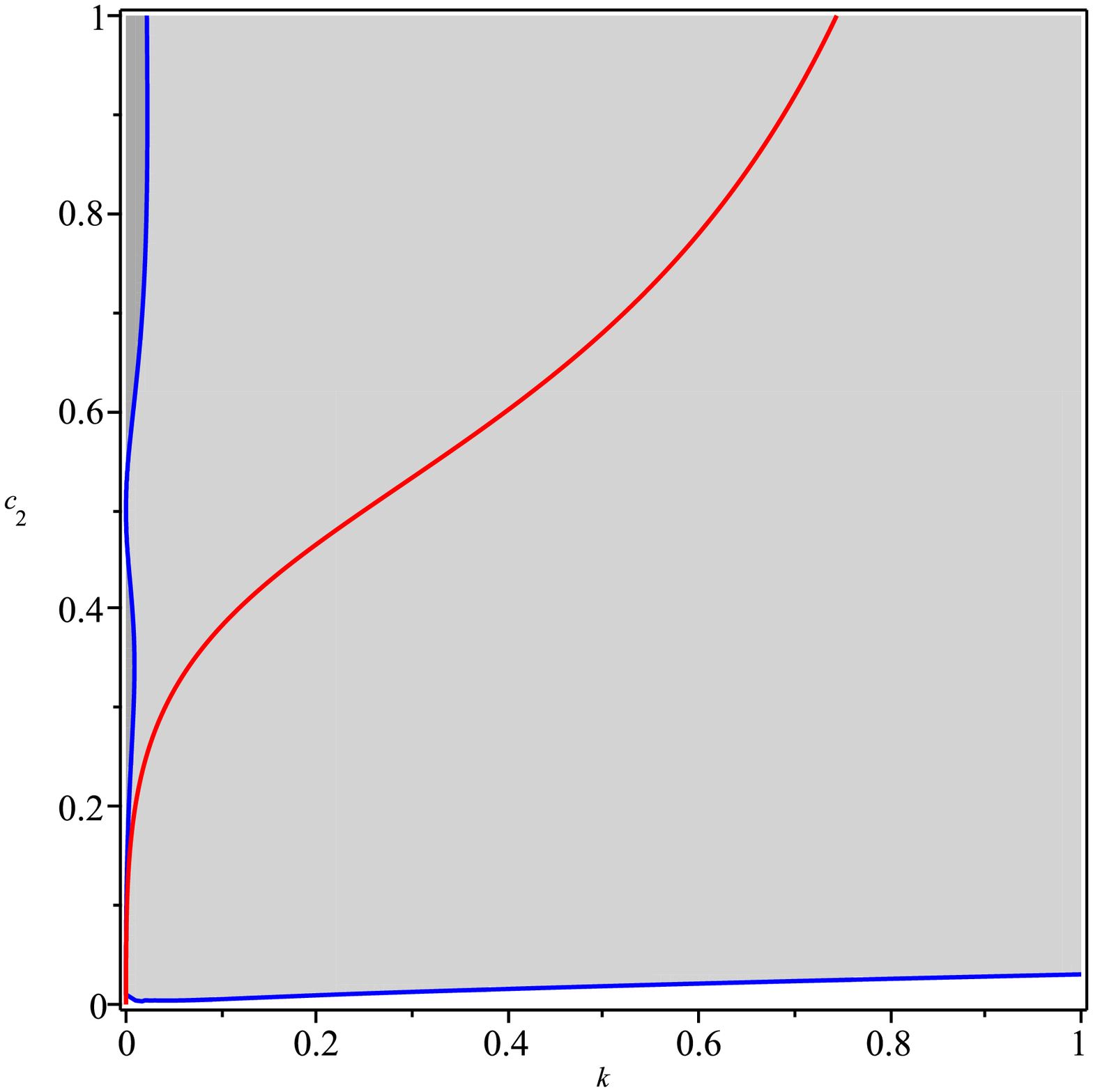}} \\

  \subfloat[$\alpha=1/2$, $c_1=1$]{\includegraphics[width=0.35\textwidth]{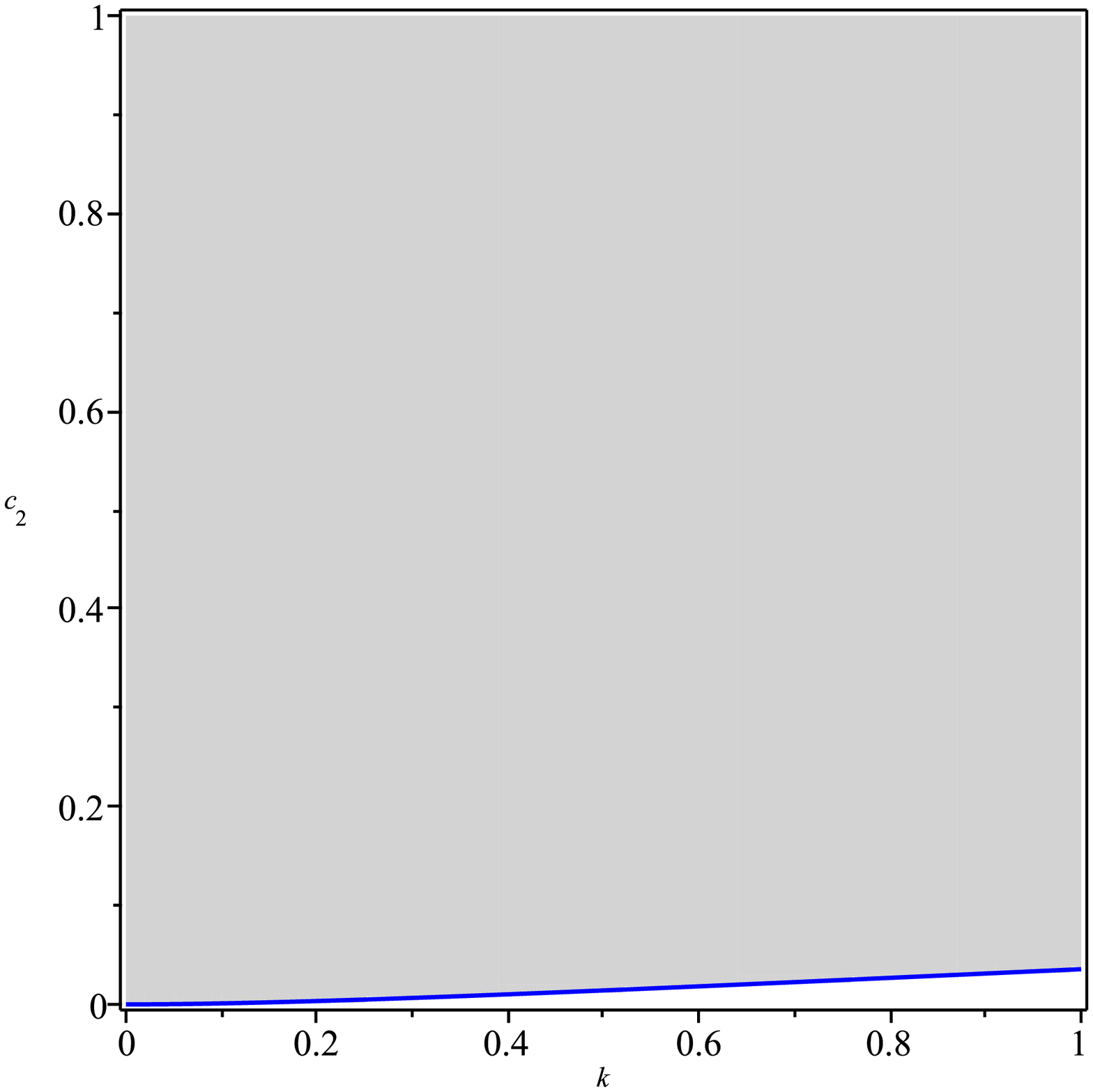}} 
  \subfloat[$\alpha=1/3$, $c_1=1$]{\includegraphics[width=0.35\textwidth]{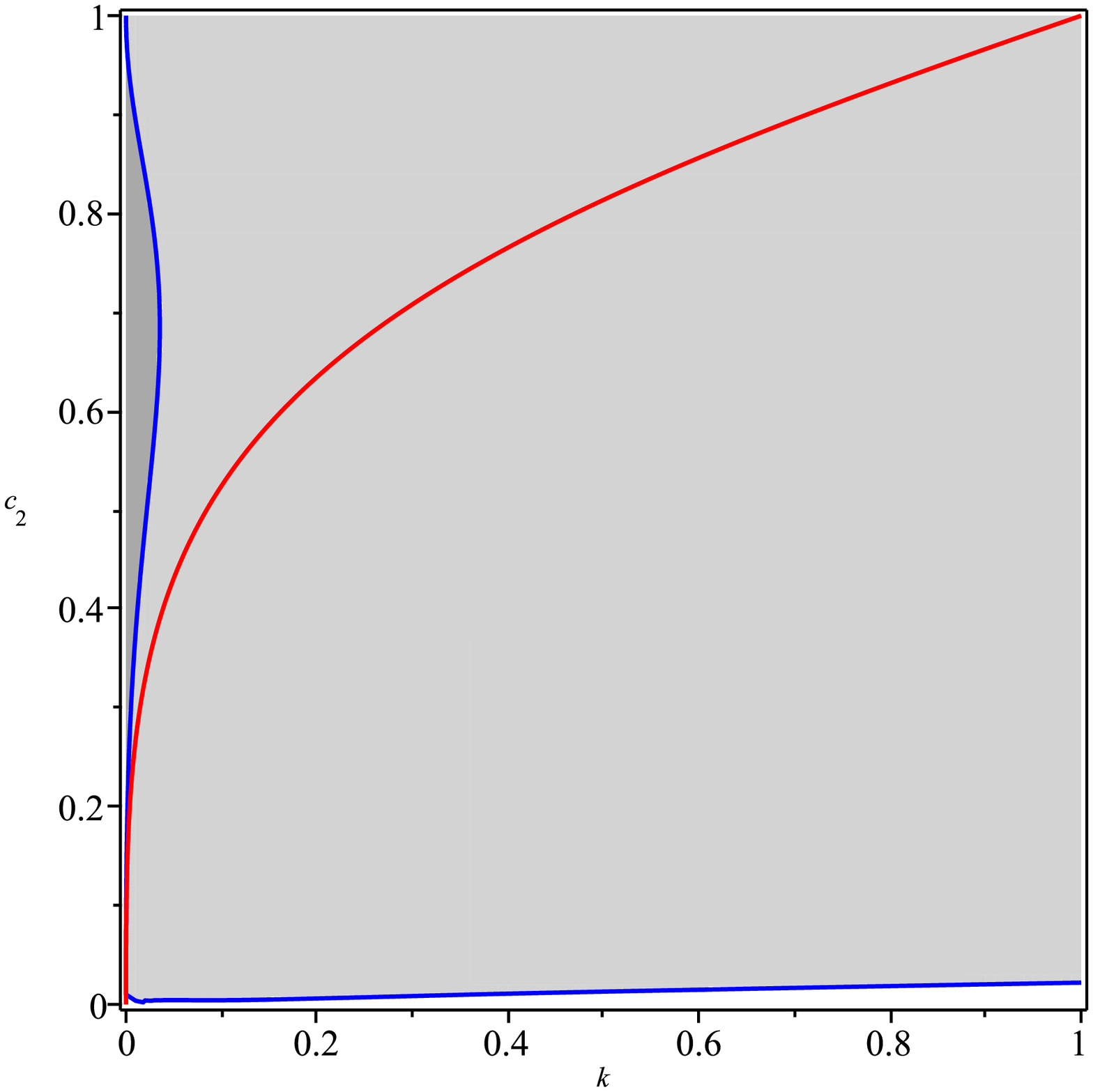}} \\

  \subfloat[$\alpha=1/2$, $c_1=10$]{\includegraphics[width=0.35\textwidth]{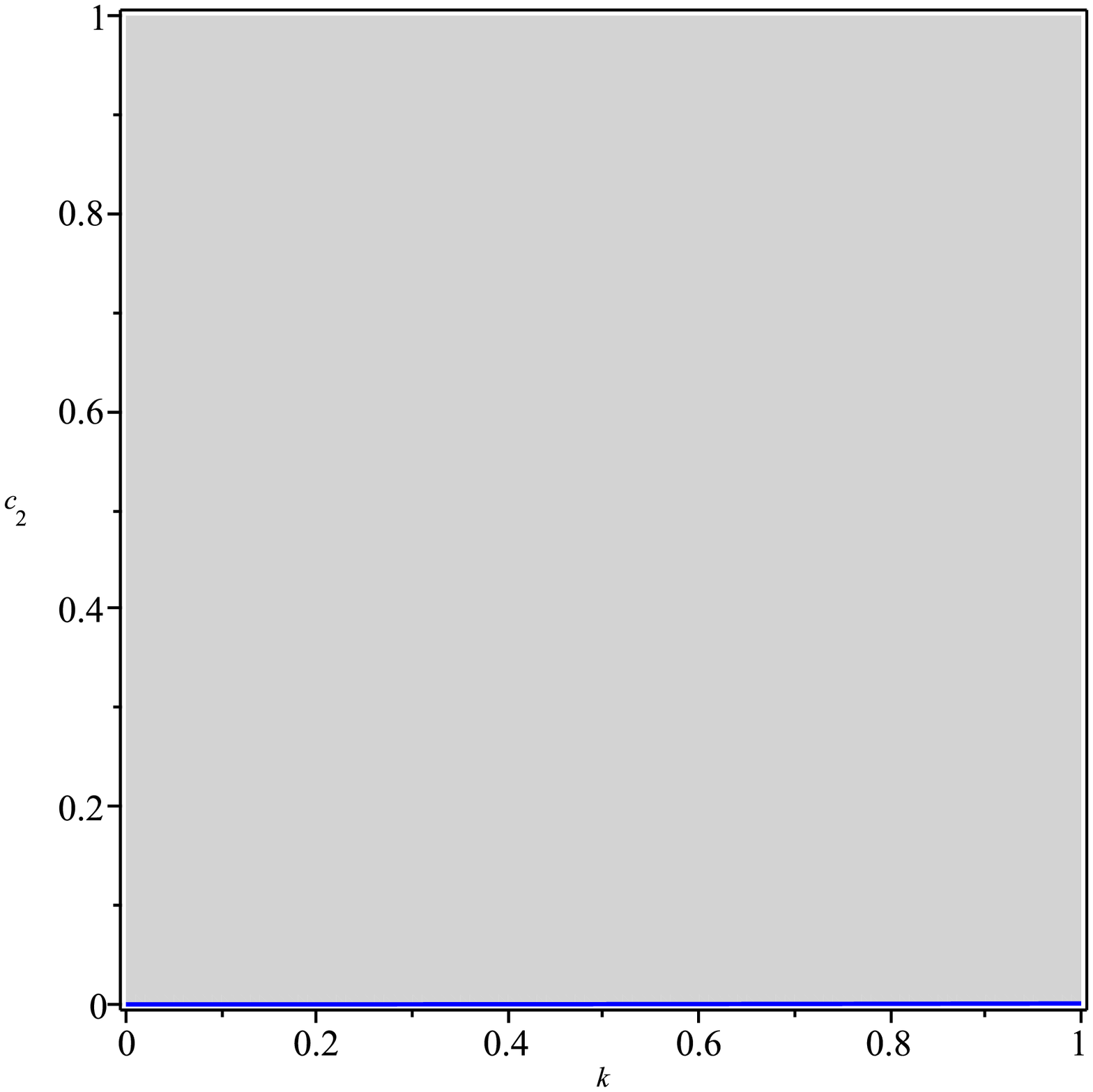}} 
  \subfloat[$\alpha=1/3$, $c_1=10$]{\includegraphics[width=0.35\textwidth]{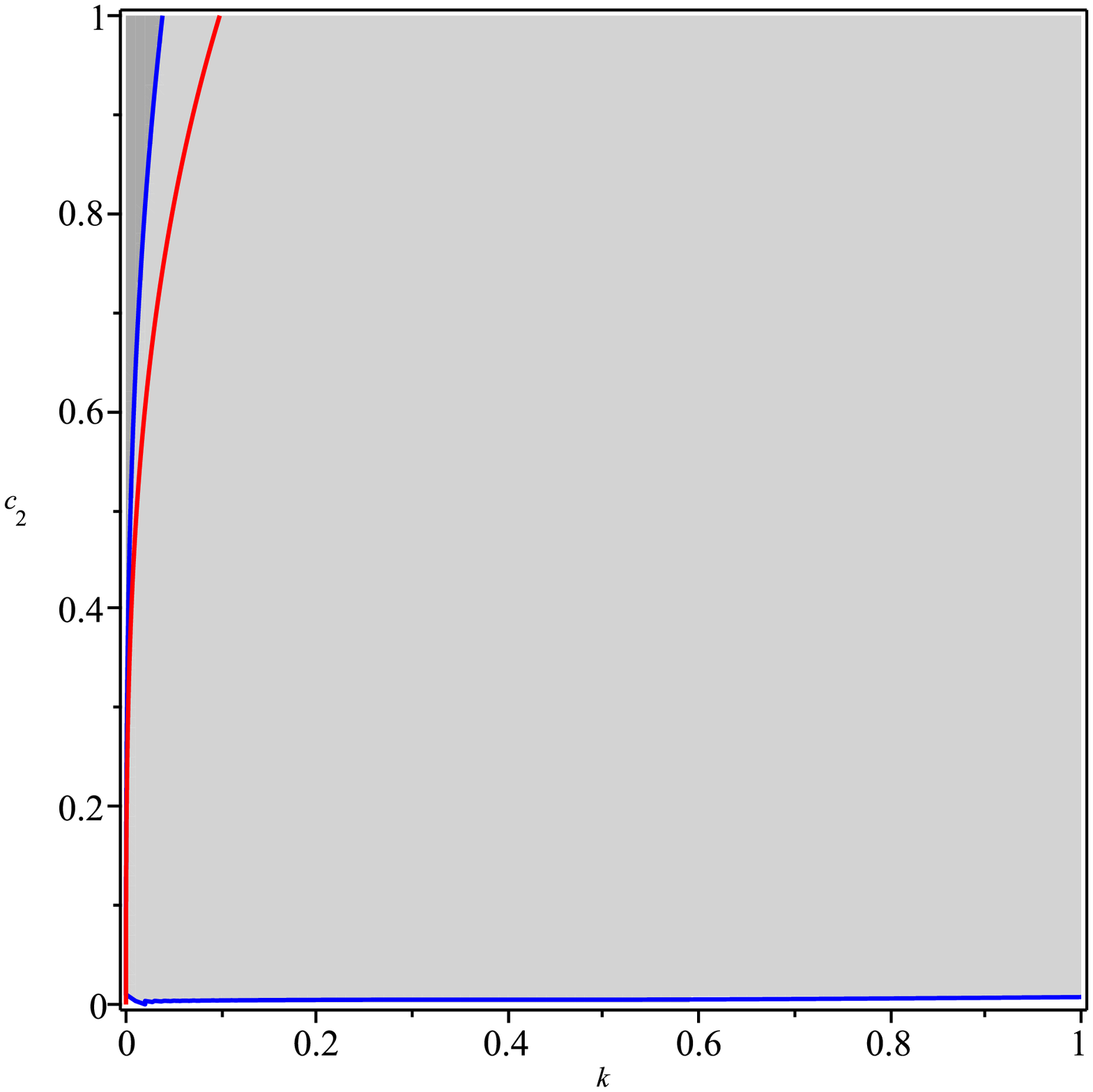}} \\
  \caption{The 2-dimensional cross-sections of the stability regions for $\alpha=1/2$ and $\alpha=1/3$ if we set $k_1=k_2=k$ and fix $c_1=1/2,1,10$. The curves of $R_1=0$ and $R_3=0$ are marked in blue; the curves of $R_2=0$ and $R_3$ are marked in green; the curves of $A_1=0$, $A_2=0$ and $A_3=0$ are marked in red. The regions of $R_1>0$, $R_2>0$ and those of $R_3>0$, $R_4>0$ are colored in light grey, while the regions defined by $R_3<0$, $R_4>0$, $A_1>0$, $A_2<0$, $A_3>0$ are colored in dark grey.}
 \label{fig:k1k2-fixedc1}
\end{figure}

From an economic point of view, the effects on economic variables such as prices and profits of changing the substitutability degree are interesting. In the sequel, we focus on the comparative statics in the special case of identical marginal costs. Let $c_1=c_2=c$. According to \eqref{eq:iter-map}, the equilibrium satisfies that
\begin{equation}\label{eq:general-a}
\left\{
\begin{split}
&-p_2^\beta p_1^{1+\beta} \beta + p_2^{2\beta}c+ (p_1^\beta p_2^\beta	)(1+\beta)c=0,\\
&-p_1^\beta p_2^{1+\beta} \beta + p_1^{2\beta}c+ (p_1^\beta p_2^\beta	)(1+\beta)c=0.
\end{split}
\right.
\end{equation}
Hence, 
$$-p_2^\beta p_1^{1+\beta} \beta + p_2^{2\beta}c=-p_1^\beta p_2^{1+\beta} \beta + p_1^{2\beta}c,$$
which implies that
$$(p_1^{2\beta}-p_2^{2 \beta})c=(p_2-p_1)p_1^\beta p_2^\beta \beta.$$
Without loss of generality, we suppose that $p_1\geq p_2$.
Since $c>0$ and $\beta>0$, we know $(p_1^{2\beta}-p_2^{2 \beta})c\geq 0$ and $(p_2-p_1)p_1^\beta p_2^\beta \beta\leq 0$, which implies $p_1=p_2$. Plugging $p_1=p_2$ into the first equation of \eqref{eq:general-a}, one can solve $p_1=p_2=\frac{c(2+\beta)}{\beta}$. Therefore, at the equilibrium $q_1=q_2=\frac{\beta}{2\, c(2+\beta)}$. As $\beta=\alpha/(1-\alpha)$, we obtain
$$\frac{\partial p_i}{\partial \alpha}= -\frac{2\, c}{\alpha^{2}}<0,~~
\frac{\partial q_i}{\partial \alpha}=\frac{1}{\left(-2+\alpha \right)^{2} c}>0.$$
According to \eqref{eq:profit}, the profits of the two firms would be
$$\Pi_1=\Pi_2=\left(\frac{c(2+\beta)}{\beta}-c\right)\frac{\beta}{2\,c(2+\beta)}=\frac{1}{2+\beta}=1+\frac{1}{\alpha-2}.$$
Hence, for $i=1,2$,
$$\frac{\partial \Pi_i}{\partial \alpha}=-\frac{1}{\left(\alpha-2 \right)^{2}}<0.$$

Recalling the inverse demands \eqref{eq:inv-dem}, for a point $(q_1^*,q_2^*)$ on the indifference curve, we define the consumer surplus of the first product to be
$$CS_1=\int_{0}^{q_1^*}\frac{q_1^{\alpha -1}}{q_1^\alpha+q_2^{*\alpha}} d q_1	
=\frac{1}{\alpha}\int_{0}^{q_1^*}\frac{d (q_1^\alpha+q_2^{*\alpha})}{q_1^\alpha+q_2^{*\alpha}}
=\frac{1}{\alpha}\ln \left[1+\left(\frac{q_1^*}{q_2^*}\right)^\alpha\right].$$
In the case of $c_1=c_2$, the outputs of the two products are equal at the equilibrium. Therefore, we have that $CS_1=CS_2=\frac{1}{\alpha} \ln 2$. Accordingly, the social welfare is 
$$W=CS_1+CS_2+\Pi_1+\Pi_2=\frac{2}{\alpha}\ln 2 +\frac{2}{\alpha -2}+2.$$ 
Then it is known that 
$$\frac{\partial W}{\partial \alpha}= -\frac{2\ln 2}{\alpha^2}-\frac{2}{(\alpha-2)^\alpha}<0.$$

To summarize, in the special case of identical marginal costs, an increase in the substitutability degree $\alpha$ leads to a stable equilibrium with lower prices, higher supplies, lower profits, and lower welfare. In other words, the degree of product differentiation is positively related to the prices of the goods, the profits of the involved companies, and the social welfare, which is consistent with our economic intuition.

\section{Numerical Simulations}

This section provides numerical simulations to illustrate the complex dynamics of the considered Bertrand duopoly model. The first purpose of our simulations is to confirm the main conclusion of Section 5 that increasing the substitutability degree $\alpha$ could destabilize the unique non-vanishing equilibrium. In Figure \ref{fig:bifur_a}, we depict the 1-dimensional bifurcation diagrams with respect to $\alpha$, where we fix the other parameters $k_{1}=k_{2}=1$, $c_{1}=c_{2}=0.2$ and set the initial point to be $(0.56,1.06)$. The bifurcation diagrams against $p_1$ and $p_2$ are given in Figure \ref{fig:bifur_a} $(a,c)$ and $(b,d)$, respectively. It is observed that complex dynamics appear when $\alpha$ becomes large enough. Specifically, there exists one unique stable equilibrium at first, then a stable 2-cycle orbit, and finally a chaotic set as $\alpha$ varies from $0.1$ up to $0.7$. To show the transition clearly, the 1-dimensional bifurcation diagrams are enlarged for $\alpha\in (0.55,0.6)$ in $(c,d)$. One can see that, when $\alpha=0.553372$, a branching point occurs and the unique fixed point bifurcates into a 2-cycle orbit, which, however, is not a period-doubling bifurcation point. This 2-cycle orbit loses its stability through a Neimark-Sacker bifurcation rather than a period-doubling bifurcation at $\alpha=0.577570$.

More details can be found in Figure \ref{fig:pp-a}, where we plot the phase portraits for $k_1=k_2=1$ and $c_1=c_2=0.2$ with the initial point $(0.56,1.06)$. From Figure \ref{fig:pp-a} (a), we observe that, after the occurrence of a Neimark-Sacker bifurcation, the 2-cycle orbit ($P_{21}(0.464194,0.607384)$ and $P_{21}(0.607384,0.464194)$) becomes unstable and bifurcates into two invariant closed orbits when $\alpha=0.58$; the unique equilibrium $E_{1}(0.492557,0.492557)$ goes to $E_{1new}(0.489655,0.489655)$ when $\alpha=0.58$. Furthermore, all points on the diagonal line $x=y$ converge to $E_{1new}$. The two invariant closed orbits marked in blue are stable and points converge to them from inside and outside. Figure \ref{fig:pp-a} (b) depicts the phase portrait when $\alpha=0.59$ and the other parameters are set to be the same as (a). From (b), one can discover chaotic attractors with symmetry. 
The above observations show that an increase in the substitutability degree $\alpha$ leads to the emergence of instability, complex dynamics, and even chaos in the considered model.

%
%
%

\begin{figure}[htbp]
  \centering
  \subfloat[against $p_1$]{\includegraphics[width=0.45\textwidth]{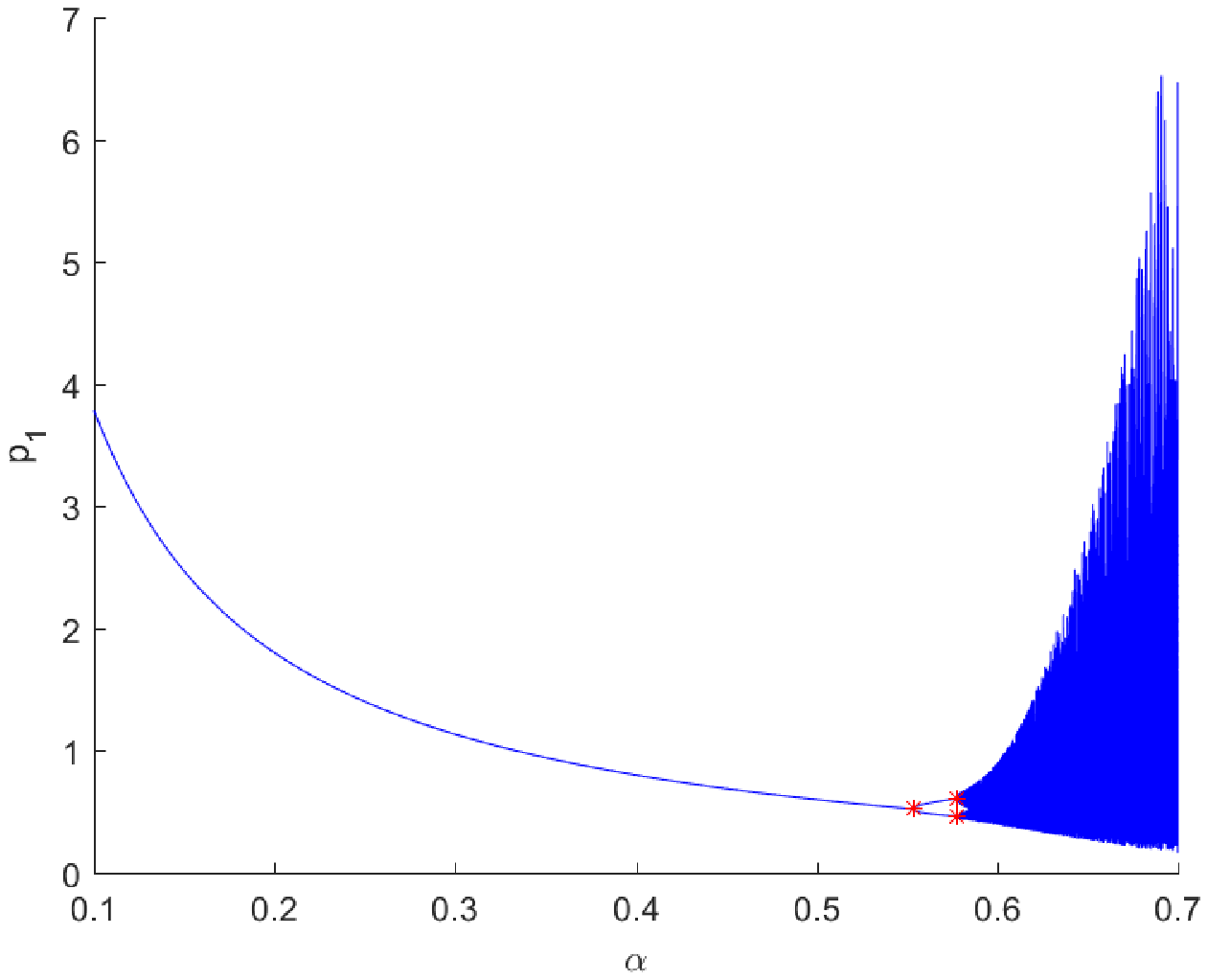}} 
  \subfloat[against $p_2$]{\includegraphics[width=0.45\textwidth]{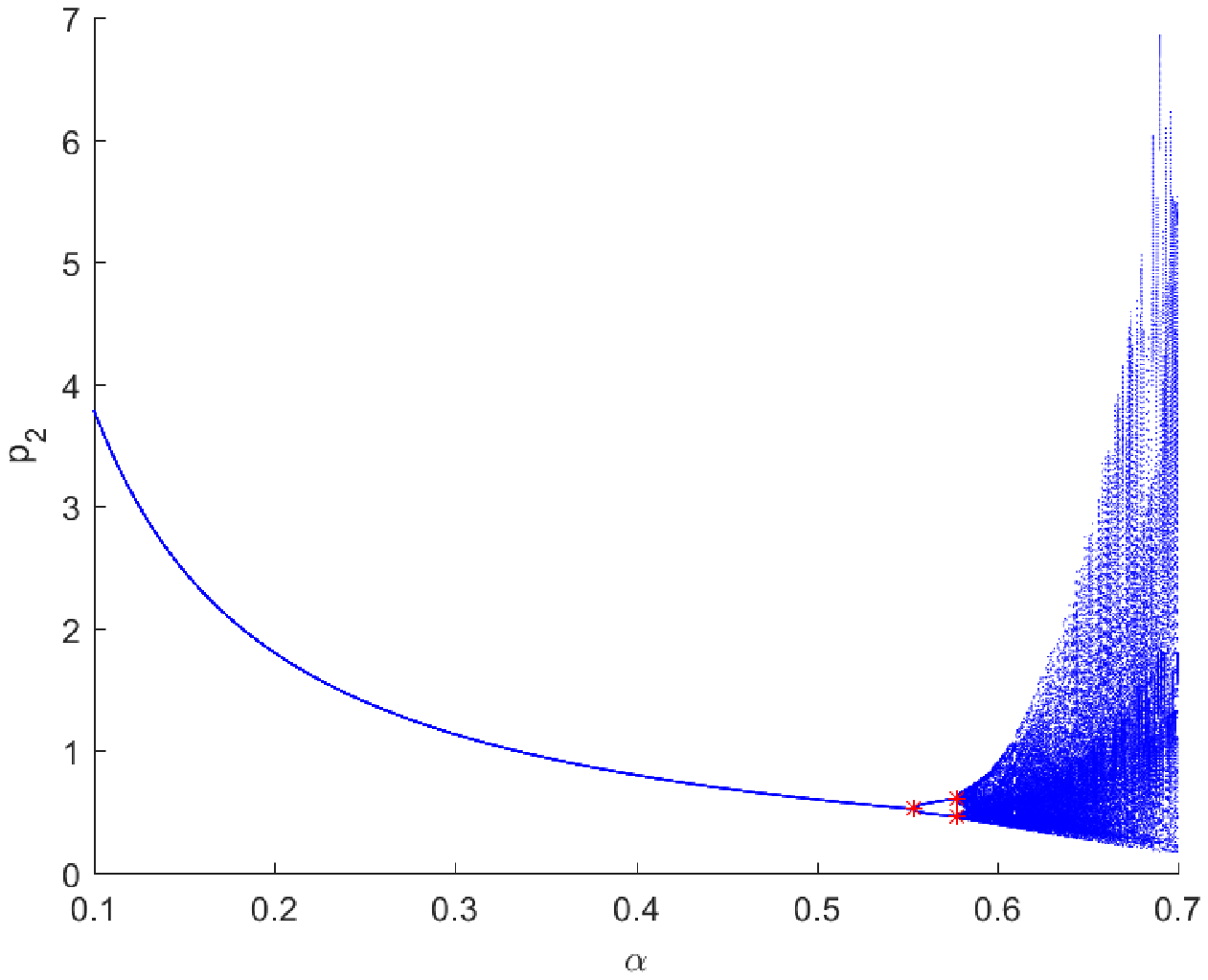}} \\

  \subfloat[against $p_1$ and enlarged for $\alpha\in (0.55,0.6)$]{\includegraphics[width=0.45\textwidth]{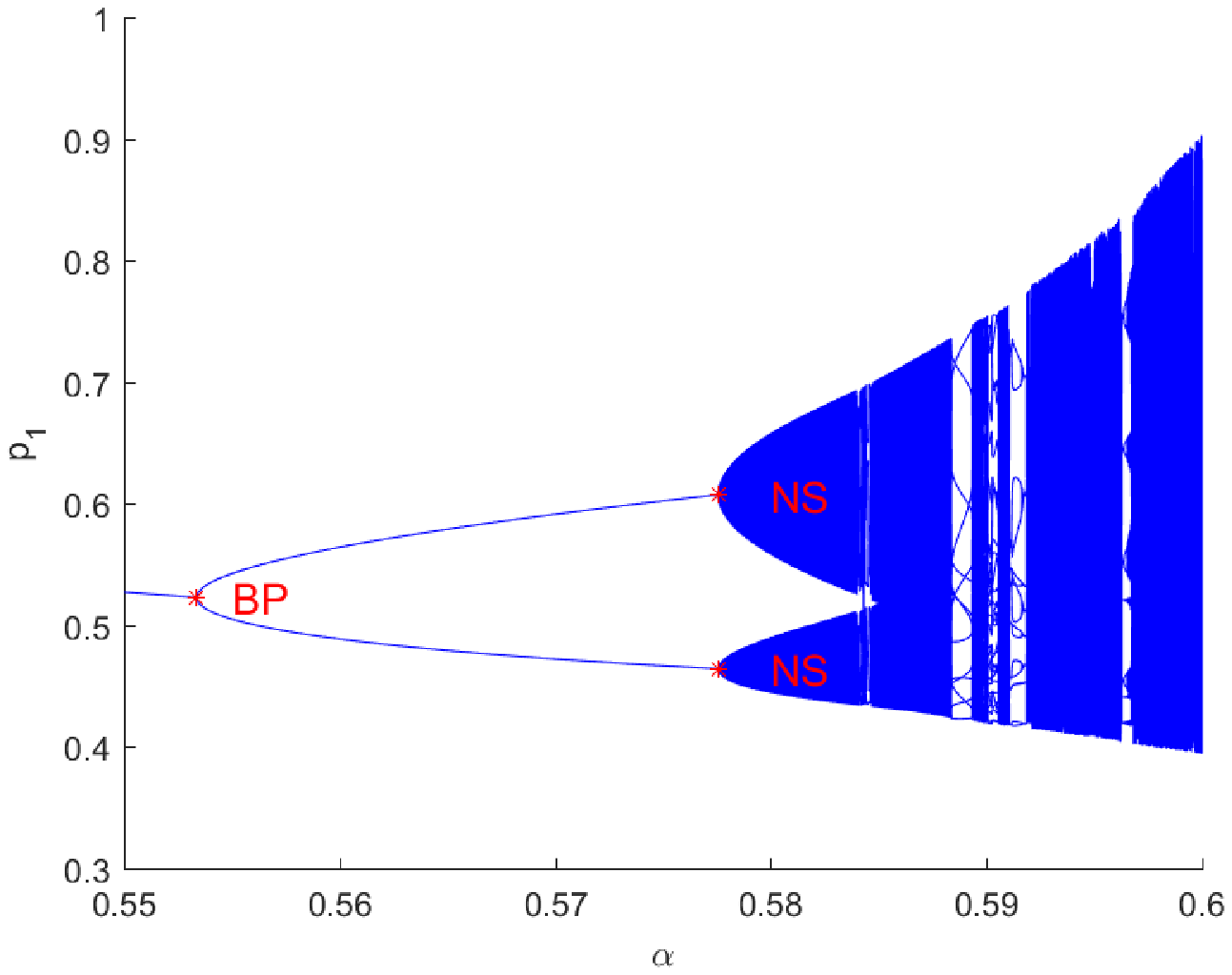}} 
  \subfloat[against $p_2$ and enlarged for $\alpha\in (0.55,0.6)$]{\includegraphics[width=0.45\textwidth]{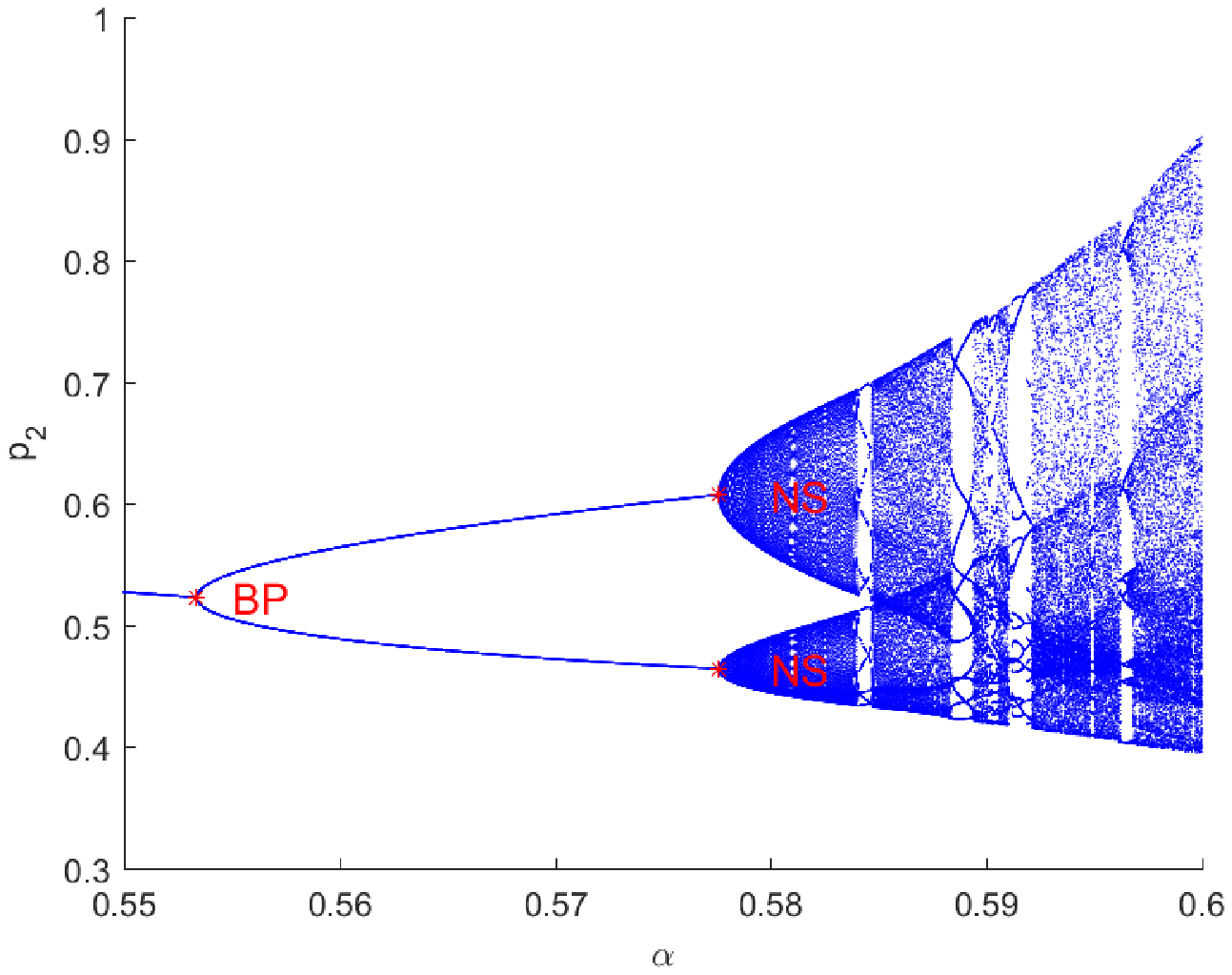}} \\

  \caption{The 1-dimensional bifurcation diagrams with respect to $\alpha$ if we fix $k_{1}=k_{2}=1$, $c_{1}=c_{2}=0.2$ and set the initial point to be $(0.56,1.06)$. }
 \label{fig:bifur_a}
\end{figure}

\begin{figure}[htbp]
  \centering
  \subfloat[$\alpha=0.58$]{\includegraphics[width=0.45\textwidth]{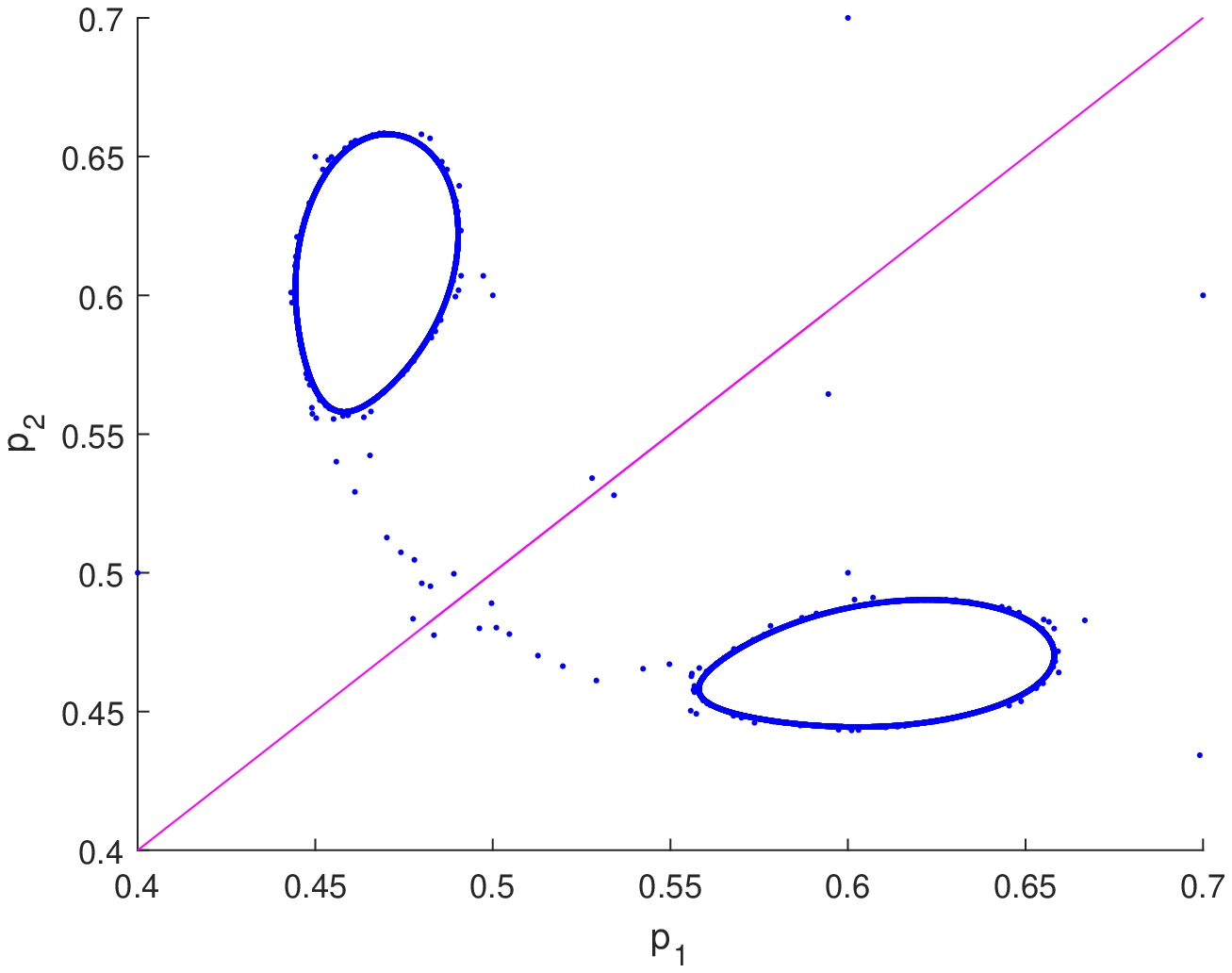}} 
  \subfloat[$\alpha=0.59$]{\includegraphics[width=0.45\textwidth]{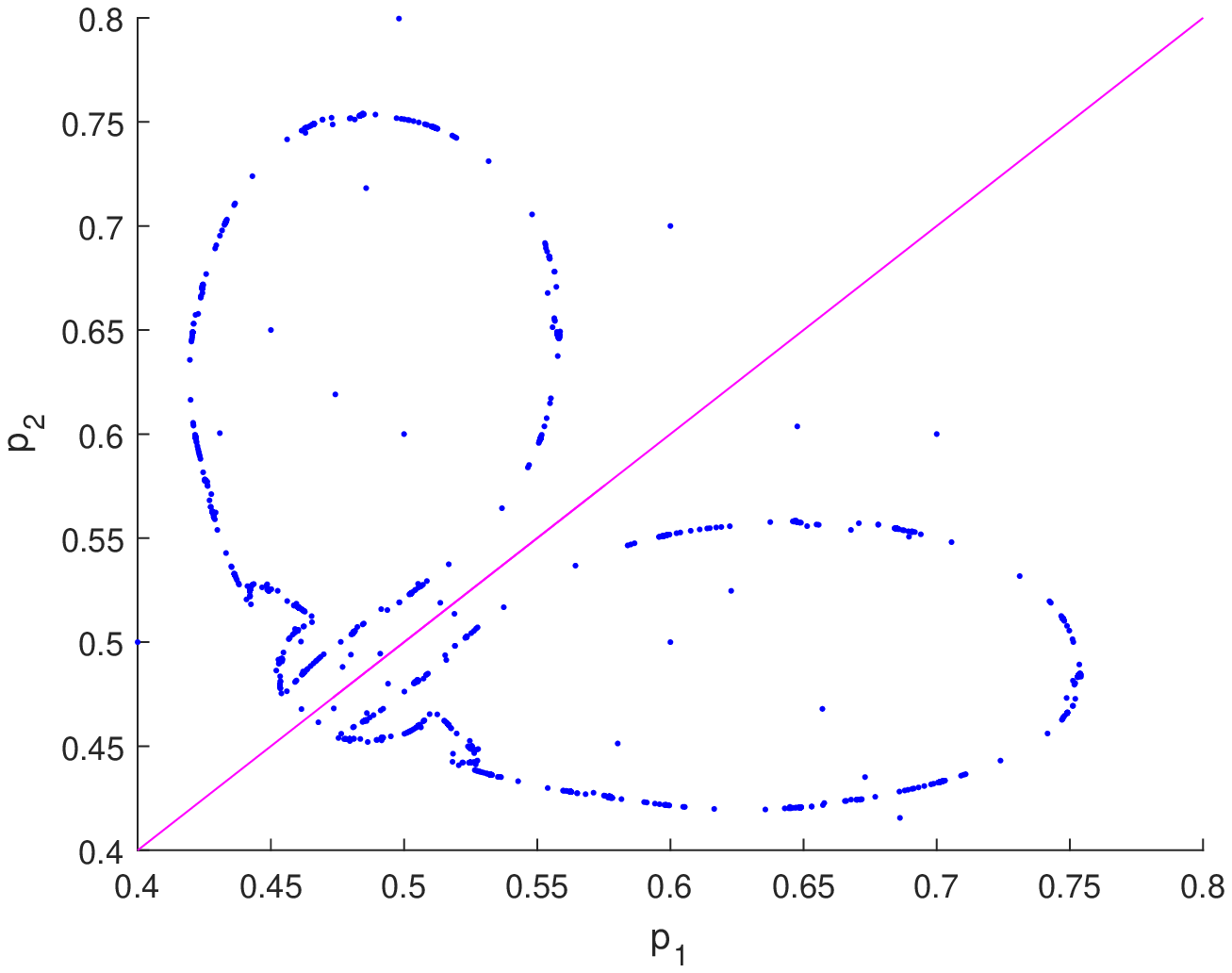}} \\

  \caption{Phase portraits for $k_1=k_2=1$ and $c_1=c_2=0.2$ with the initial point $(0.56,1.06)$. }
 \label{fig:pp-a}
\end{figure}

%


To illustrate the influence of other parameters, several 2-dimensional bifurcation diagrams are computed and displayed in the sequel. Figure \ref{fig:k1k2_0.5_1} depicts the 2-dimensional bifurcation diagram of map \eqref{eq:map-12} ($\alpha=1/2$) with respect to $k_1$ and $k_2$ if we fix $c_1=0.3$, $c_2=0.4$ and set the initial point to be $(0.5,0.8)$. We detect periodic orbits with distinct orders and mark the corresponding parameter points in different colors in Figure \ref{fig:k1k2_0.5_1}. It should be mentioned that the parameter points where there exist periodic orbits with orders more than $25$ are marked in light yellow as well. Two different routes from the unique stable equilibrium to complex dynamics can be observed. For example, if we fix $k_2=7.5$ and change the value of $k_1$ from $0.0$ to $10.0$, the dynamics of the system start from one unique stable equilibrium (the dark blue region), then transition to a stable 2-cycle orbit (the light blue region) and finally to invariant closed orbits as well as chaos (the light yellow region). This is similar to the route displayed in Figure \ref{fig:bifur_a}, where the stable 2-cycle loses its stability through a Neimark-Sacker bifurcation. The other route can be discovered, e.g., if we fix $k_2=2.5$ and keep $k_1$ as a free parameter. Then it is observed that the unique stable equilibrium loses its stability through a cascade of period-doubling bifurcations. 

In Figure \ref{fig:k1k2_0.3_1}, we plot the 2-dimensional bifurcation diagram of map \eqref{eq:map-13} ($\alpha=1/3$) with respect to $k_1$ and $k_2$ if fixing $c_1=0.1$, $c_2=0.15$ and setting the initial point to be $(0.6,0.9)$. Similar to Figure  \ref{fig:k1k2_0.5_1}, the aforementioned two routes from local stability to complex dynamics can also be observed in Figure \ref{fig:k1k2_0.3_1}.

\begin{figure}[htbp]
  \centering
{\includegraphics[width=0.65\textwidth]{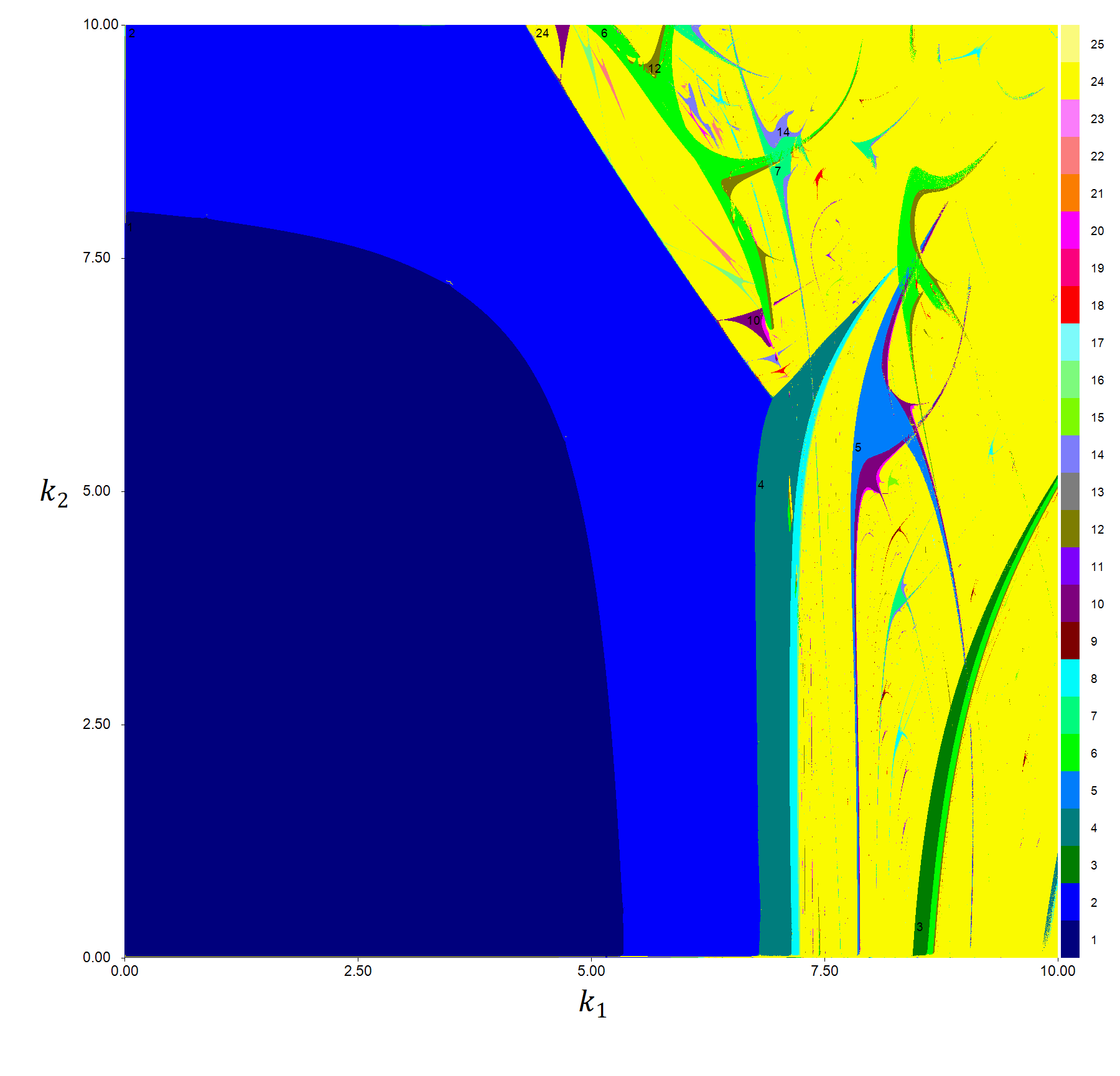}} 
  \caption{The 2-dimensional bifurcation diagram of map \eqref{eq:map-12} ($\alpha=1/2$) with respect to $k_1$ and $k_2$ if we fix $c_1=0.3$, $c_2=0.4$ and set the initial point to be $(0.5,0.8)$.}
 \label{fig:k1k2_0.5_1}
\end{figure}

\begin{figure}[htbp]
  \centering
{\includegraphics[width=0.65\textwidth]{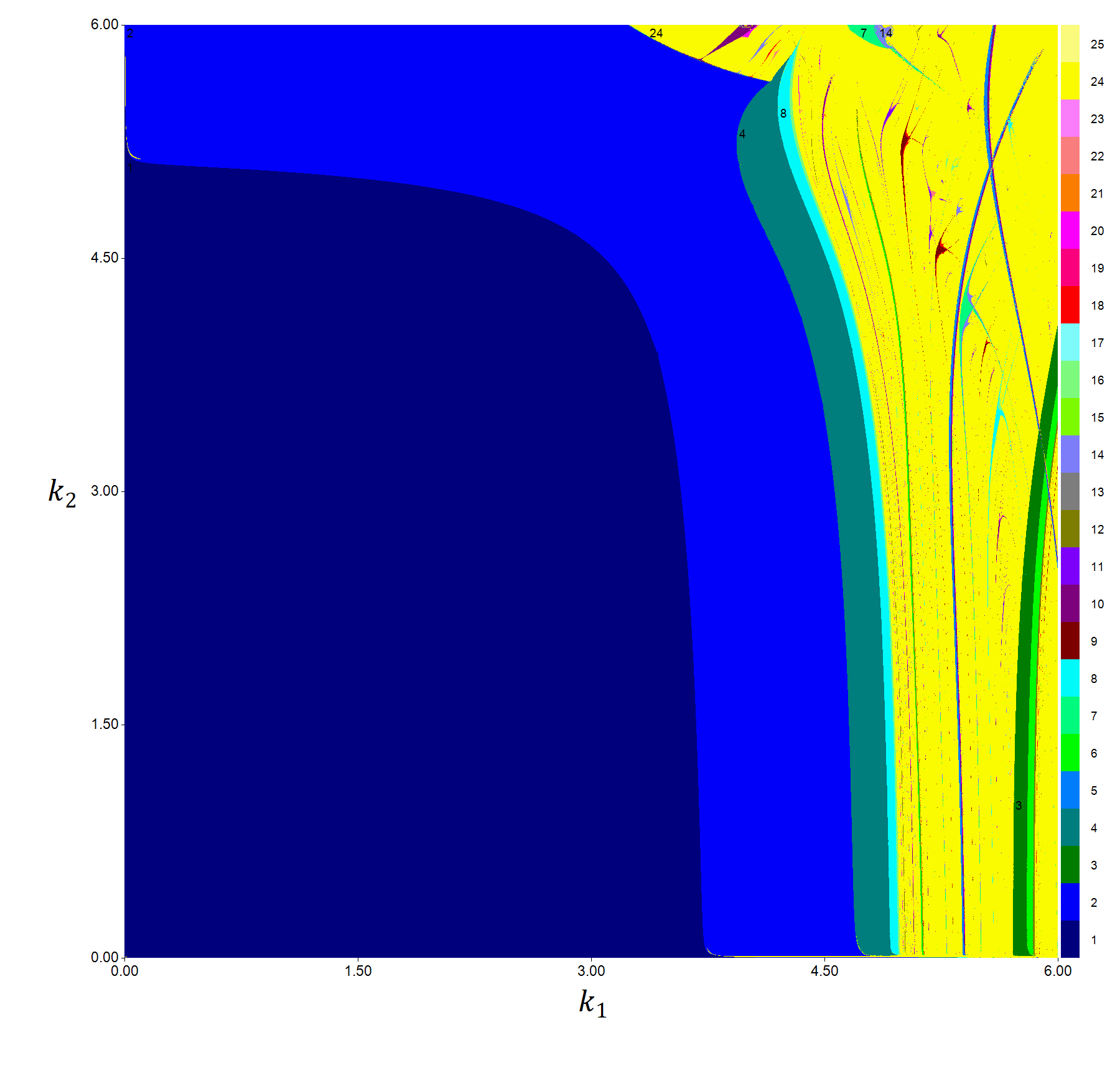}} 
  \caption{The 2-dimensional bifurcation diagram of map \eqref{eq:map-13} ($\alpha=1/3$) with respect to $k_1$ and $k_2$ if we fix $c_1=0.1$, $c_2=0.15$ and set the initial point to be $(0.6,0.9)$.}
 \label{fig:k1k2_0.3_1}
\end{figure}


The 2-dimensional bifurcation diagrams with respect to $c_1$ and $c_2$ for $\alpha=1/2$ and $\alpha=1/3$ are displayed in Figures \ref{fig:c1c2_0.5_2} and \ref{fig:c1c2_0.3_2}, respectively. One can see that complicated dynamic phenomena take place if one of the cost parameters $c_1,c_2$ is small enough. Similarly, we find the above two routes to chaotic behavior, i.e., through a cascade of period-doubling bifurcation and through a Neimark-Sacker bifurcation on a 2-cycle orbit, which have already been discovered by Ahmed et al.\ \cite{Ahmed2015O}. However, from Figure \ref{fig:c1c2_0.5_2}, we also find the existence of a Neimark-Sacker bifurcation directly on the unique equilibrium, which is a new result that has not been observed by Ahmed et al.\ \cite{Ahmed2015O} yet. Specifically, Figure \ref{fig:c1c2_0.5_2} shows that, if we fix $c_1=0.9$ and decrease the value of $c_2$ from $1.0$ to $0.0$, the dynamics of the system directly transition from the unique stable equilibrium (the dark blue region) to invariant closed orbits (the light yellow region). In this case, the behavior of the market suddenly changes from an ordered state to a disordered state at some critical point, which can hardly be learned by even rational players.

\begin{figure}[htbp]
  \centering
  {\includegraphics[width=0.65\textwidth]{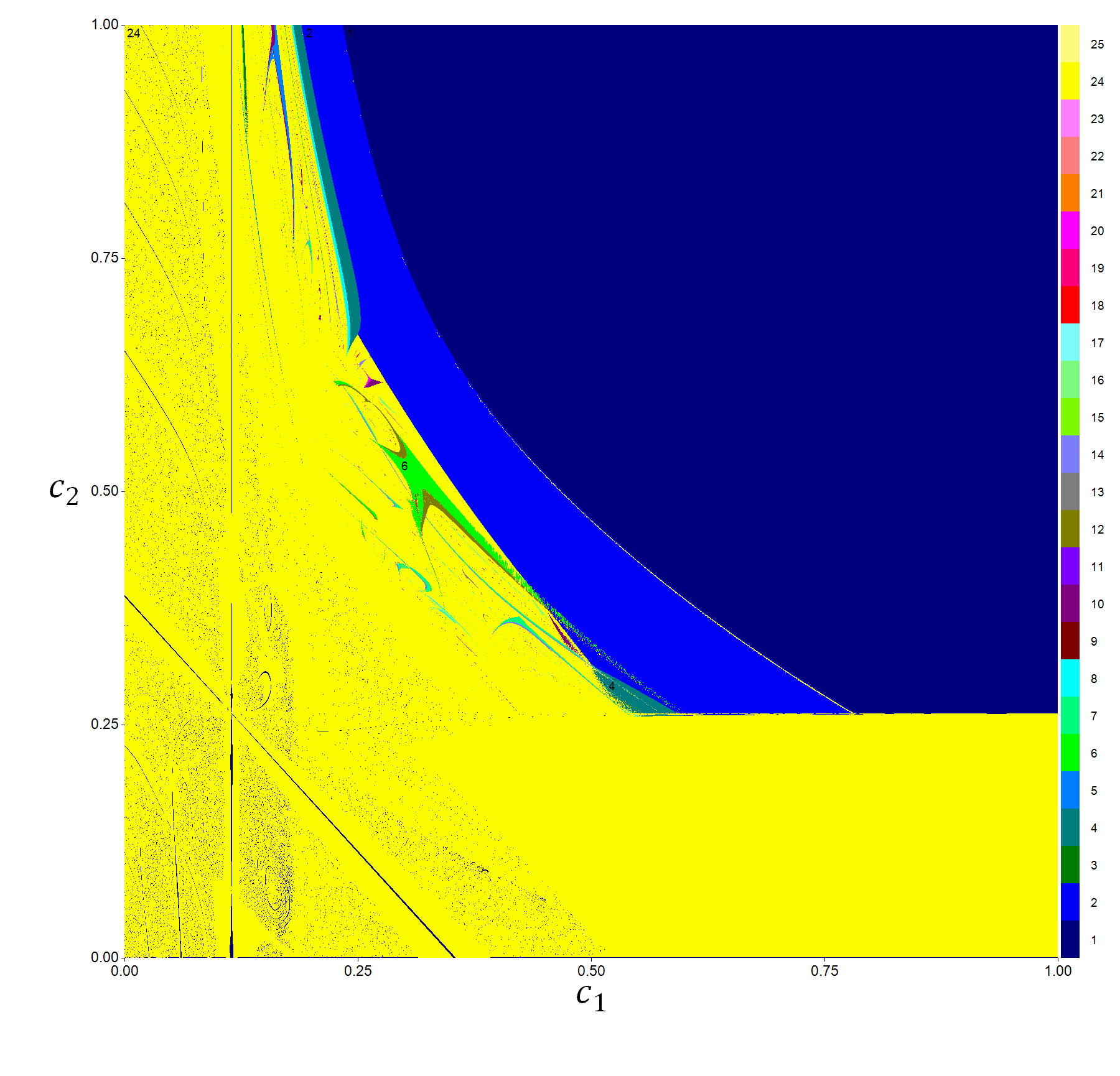}} 
  \caption{The 2-dimensional bifurcation diagram of map \eqref{eq:map-12} ($\alpha=1/2$) with respect to $c_1$ and $c_2$ if we fix $k_1=6$, $k_2=12$ and set the initial point to be $(0.5,0.8)$.}
 \label{fig:c1c2_0.5_2}
\end{figure}

\begin{figure}[htbp]
  \centering
{\includegraphics[width=0.65\textwidth]{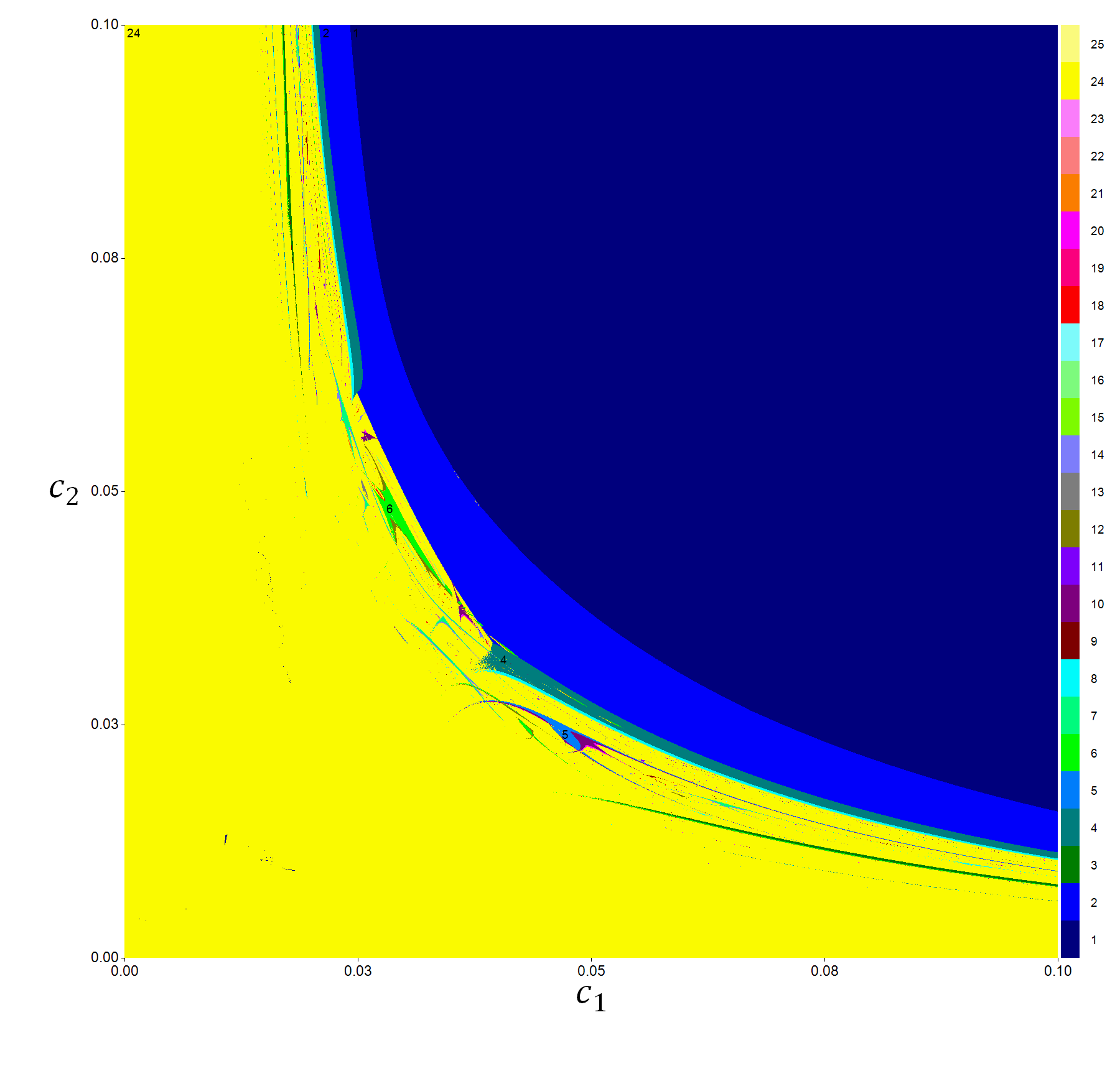}} 
  \caption{The 2-dimensional bifurcation diagram of map \eqref{eq:map-13} ($\alpha=1/3$) with respect to $c_1$ and $c_2$ if we fix $k_1=0.3$, $k_2=0.6$ and set initial point to be $(0.6,0.9)$.}
 \label{fig:c1c2_0.3_2}
\end{figure}


\section{Concluding Remarks}

In this paper, we investigated the local stability, bifurcations, and comparative statics of a dynamic Bertrand duopoly game with differentiated products. This duopoly is assumed to possess two boundedly rational players adopting a gradient adjustment mechanism and a continuum of identical consumers with a CES utility function. Moreover, the cost functions are supposed to be linear. It should be mentioned that the nonlinearity of the resulting demand function derived from the underlying utility permits us to extend the applications of Bertrand games to more realistic economies, compared to the widely used Bertrand models with linear demands.

The considered game was first explored by Ahmed et al.\ \cite{Ahmed2015O}, where only numerical simulations are employed to investigate the dynamic behavior and it was observed that the Nash equilibrium loses its stability through a period-doubling bifurcation as the speed of adjustment increases. In our study, however, we re-investigated this game using several tools based on symbolic computations such as the triangular decomposition method (refer to, e.g., \cite{Li2010D}) and the PCAD method (refer to, e.g., \cite{Collins1991P}). The results of symbolic computations are exact, and thus provide theoretical foundations for the systematic analysis of economic models.

For simplicity, our work mainly focused on two specific degrees of product substitutability, namely $\alpha=1/2$ and $\alpha=1/3$. In both cases, we proved the uniqueness of the non-vanishing equilibrium using the algebraic approach of detecting the multiplicity of equilibria proposed by the first author and his co-worker \cite{Li2014C}. We introduce several tools based on symbolic computations and used them to obtain the rigorous conditions for the local stability of the unique non-vanishing equilibrium for the first time. In the special case that the two firms have identical marginal costs, we proved that the model can lose its stability only through a period-doubling bifurcation. From an economic point of view, the most interesting finding was that an increase in the substitutability degree or a decrease in the product differentiation leads to the destabilization of the Bertrand model. This is because a price war, which might destabilize the equilibrium, can take place if the substitutability degree is large enough. We should mention that our finding is in contrast with that by Agliari et al.\ \cite{Agliari2016N} and that by Fanti and Gori \cite{Fanti2012T}. This contradiction contributes to the literature on the connection between Cournot and Bertrand oligopolies and may help reveal the essential difference between them. Moreover, we conducted the comparative statics in the special case of identical marginal costs. The resulting conclusion was that lower degrees of product differentiation mean lower prices, higher supplies, lower profits, and lower social welfare, which is consistent with our economic intuition. 

Numerical simulations were provided in the end, through which complex dynamics such as periodic orbits and chaos can be observed. The simulations confirmed that an increase in the substitutability degree $\alpha$ leads to the emergence of instability, complex dynamics, and even chaos in the considered model. Two-dimensional bifurcation diagrams were also provided to show different possible routes to chaotic behavior, e.g., through a cascade of period-doubling bifurcation and through a Neimark-Sacker bifurcation on a 2-cycle orbit. Furthermore, we discovered the existence of a Neimark-Sacker bifurcation directly on the equilibrium, which is a new finding and has not yet been discovered by Ahmed et al.\ \cite{Ahmed2015O}.

%
%
%
%
%

\section*{Appendix}

\begin{align*}
\begin{autobreak}
R_{1} = 
15552\,c_1^{10} c_2^{6}
+62208\,c_1^{9} c_2^{7}
+93312\,c_1^{8} c_2^{8}
+62208\,c_1^{7} c_2^{9}
+15552\,c_1^{6} c_2^{10}
+73728\,c_1^{11} c_2^{3} k
+327168\,c_1^{10} c_2^{4} k
+576576\,c_1^{9} c_2^{5} k
+541440\,c_1^{8} c_2^{6} k
+436608\,c_1^{7} c_2^{7} k
+541440\,c_1^{6} c_2^{8} k
+576576\,c_1^{5} c_2^{9} k
+327168\,c_1^{4} c_2^{10} k
+73728\,c_1^{3} c_2^{11} k
+32768\,c_1^{11} c_2\,k^{2}
+94208\,c_1^{10} c_2^{2} k^{2}
+284160\,c_1^{9} c_2^{3} k^{2}
+1163712\,c_1^{8} c_2^{4} k^{2}
+2855520\,c_1^{7} c_2^{5} k^{2}
+3825168\,c_1^{6} c_2^{6} k^{2}
+2855520\,c_1^{5} c_2^{7} k^{2}
+1163712\,c_1^{4} c_2^{8} k^{2}
+284160\,c_1^{3} c_2^{9} k^{2}
+94208\,c_1^{2} c_2^{10} k^{2}
+32768\,c_1c_2^{11} k^{2}
+77824\,c_1^{9} c_2\,k^{3}
+ 359936\,c_1^{8} c_2^{2} k^{3}
+644608\,c_1^{7} c_2^{3} k^{3}
+610976\,c_1^{6} c_2^{4} k^{3}
+494368\,c_1^{5} c_2^{5} k^{3}
+610976\,c_1^{4} c_2^{6} k^{3}
+644608\,c_1^{3} c_2^{7} k^{3}
+359936\,c_1^{2} c_2^{8} k^{3}
+77824\,c_1c_2^{9} k^{3}
-4096  c_1^{8} k^{4}
-12288\,c_1^{7} c_2\,k^{4}
+4544\,c_1^{6} c_2^{2} k^{4}
+70360\,c_1^{5} c_2^{3} k^{4}
+114600\,c_1^{4} c_2^{4} k^{4}
+70360\,c_1^{3} c_2^{5} k^{4}
+4544\,c_1^{2} c_2^{6} k^{4}
-12288\,c_1c_2^{7} k^{4}
-4096\,c_2^{8} k^{4}
-1024\,c_1^{5} c_2\,k^{5}
-3232\,c_1^{4} c_2^{2} k^{5}
-4488\,c_1^{3} c_2^{3} k^{5}
-3232\,c_1^{2} c_2^{4} k^{5}
-1024\,c_1c_2^{5} k^{5}
+32\,c_1^{3} c_2\,k^{6}
+61\,c_1^{2} c_2^{2} k^{6}
+32\,c_1c_2^{3} k^{6},
\end{autobreak}\\

\begin{autobreak}
R_{2} = 
1152\,c_1^{8} c_2^{2}
+5832\,c_1^{7} c_2^{3}
+12960\,c_1^{6} c_2^{4}
+16560\,c_1^{5} c_2^{5}
+12960\,c_1^{4} c_2^{6}
+5832\,c_1^{3} c_2^{7}
+1152\,c_1^{2} c_2^{8}
+1024\,c_1^{8} k
+3584\,c_1^{7} c_2\,k
+5920\,c_1^{6} c_2^{2} k
+6224\,c_1^{5} c_2^{3} k 
+5836\,c_1^{4} c_2^{4} k
+6224\,c_1^{3} c_2^{5} k
+5920\,c_1^{2} c_2^{6} k
+3584\,c_1c_2^{7} k
+1024\,c_2^{8} k
+512\,c_1^{5} c_2\,k^{2}
+1616\,c_1^{4} c_2^{2} k^{2}
+2244\,c_1^{3} c_2^{3} k^{2}
+1616\,c_1^{2} c_2^{4} k^{2}
+512\,c_1c_2^{5} k^{2}
- 32\,c_1^{3} c_2\,k^{3}
-61\,c_1^{2} c_2^{2} k^{3}
-32\,c_1c_2^{3} k^{3}, 
\end{autobreak}\\

\begin{autobreak}
R_3 = 
-209715200000\,c_1^{12} c_2^{8}
+838860800000\,c_1^{11} c_2^{9}
-1258291200000\,c_1^{10} c_2^{10}
+838860800000\,c_1^{9} c_2^{11}
-209715200000\,c_1^{8} c_2^{12}
+1160950579200\,c_1^{13} c_2^{5} k
-5170397184000\,c_1^{12}  c_2^{6} k
+9284105011200\,c_1^{11} c_2^{7} k
-9178054656000\,c_1^{10} c_2^{8} k
+7806792499200\,c_1^{9} c_2^{9} k
-9178054656000\,c_1^{8} c_2^{10} k
+9284105011200\,c_1^{7} c_2^{11} k
-5170397184000\,c_1^{6} c_2^{12} k
+ 1160950579200\,c_1^{5} c_2^{13} k
+626913312768\,c_1^{13} c_2^{3} k^{2}
-1827529703424\,c_1^{12} c_2^{4} k^{2}
+6377496477696\,c_1^{11} c_2^{5} k^{2}
-24562717922304\,c_1^{10} c_2^{6} k^{2}
+56911413825536\,c_1^{9} c_2^{7} k^{2}
- 74841436780544\,c_1^{8} c_2^{8} k^{2}
+56911413825536\,c_1^{7} c_2^{9} k^{2}
-24562717922304\,c_1^{6} c_2^{10} k^{2}
+6377496477696\,c_1^{5} c_2^{11} k^{2}
-1827529703424\,c_1^{4} c_2^{12} k^{2}
+626913312768\,c_1^{3} c_2^{13} k^{2} 
-117546246144\,c_1^{12} c_2^{2} k^{3}
+2268751389696\,c_1^{11} c_2^{3} k^{3}
-8446241806848\,c_1^{10} c_2^{4} k^{3}
+13848228389376\,c_1^{9} c_2^{5} k^{3}
-12871123435008\,c_1^{8} c_2^{6} k^{3}
+10570707526656\,c_1^{7} c_2^{7} k^{3} 
-12871123435008\,c_1^{6} c_2^{8} k^{3}
+13848228389376\,c_1^{5} c_2^{9} k^{3}
-8446241806848\,c_1^{4} c_2^{10} k^{3}
+2268751389696\,c_1^{3} c_2^{11} k^{3}
-117546246144\,c_1^{2} c_2^{12} k^{3}
+7346640384\,c_1^{11} c_2\,k^{4}
+ 23872802112\,c_1^{10} c_2^{2} k^{4}
-79144786368\,c_1^{9} c_2^{3} k^{4}
-389232360000\,c_1^{8} c_2^{4} k^{4}
+1762366805056\,c_1^{7} c_2^{5} k^{4}
-2639431381760\,c_1^{6} c_2^{6} k^{4}
+1762366805056\,c_1^{5} c_2^{7} k^{4}
- 389232360000\,c_1^{4} c_2^{8} k^{4}
-79144786368\,c_1^{3} c_2^{9} k^{4}
+23872802112\,c_1^{2} c_2^{10} k^{4}
+7346640384\,c_1c_2^{11} k^{4}
-153055008\,c_1^{10} k^{5}
+444048480\,c_1^{9} c_2\,k^{5}
-2281361760\,c_1^{8} c_2^{2} k^{5}
- 6359031360\,c_1^{7} c_2^{3} k^{5}
+33853070112\,c_1^{6} c_2^{4} k^{5}
-51945109632\,c_1^{5} c_2^{5} k^{5}
+33853070112\,c_1^{4} c_2^{6} k^{5}
-6359031360\,c_1^{3} c_2^{7} k^{5}
-2281361760\,c_1^{2} c_2^{8} k^{5}
+444048480\,c_1c_2^{9} k ^{5}
-153055008\,c_2^{10} k^{5}
+36636624\,c_1^{7} c_2\,k^{6}
-65578896\,c_1^{6} c_2^{2} k^{6}
+239834412\,c_1^{5} c_2^{3} k^{6}
-377249916\,c_1^{4} c_2^{4} k^{6}
+239834412\,c_1^{3} c_2^{5} k^{6}
-65578896\,c_1^{2} c_2^{6} k^{6}
+36636624  c_1c_2^{7} k^{6}
-669222\,c_1^{5} c_2\,k^{7}
+1023534\,c_1^{4} c_2^{2} k^{7}
-951468\,c_1^{3} c_2^{3} k^{7}
+1023534\,c_1^{2} c_2^{4} k^{7}
-669222\,c_1c_2^{5} k^{7}
+2187\,c_1^{3} c_2\,k^{8}
-4031\,c_1^{2} c_2^{2} k^{8}
+2187\,c_1c_2^{3} k^{8},
\end{autobreak}\\

\begin{autobreak}
R_4 = 
17714700\,c_1^{10} c_2^{2}
-84798900\,c_1^{9} c_2^{3}
+166819500\,c_1^{8} c_2^{4}
-187523100\,c_1^{7} c_2^{5}
+175575600\,c_1^{6} c_2^{6}
-187523100\,c_1^{5} c_2^{7}
+166819500\,c_1^{4} c_2^{8}
-84798900\,c_1^{3} c_2^{9}
+17714700\,c_1^{2}  c_2^{10}
+19131876\,c_1^{10} k
-55506060\,c_1^{9} c_2\,k
+70441812\,c_1^{8} c_2^{2} k
-70683840\,c_1^{7} c_2^{3} k
+106503012\,c_1^{6} c_2^{4} k
-136123200\,c_1^{5} c_2^{5} k
+106503012\,c_1^{4} c_2^{6} k
-70683840\,c_1^{3} c_2^{7} k
+ 70441812\,c_1^{2} c_2^{8} k
-55506060\,c_1c_2^{9} k
+19131876\,c_2^{10} k
-9159156\,c_1^{7} c_2\,k^{2}
+23480604\,c_1^{6} c_2^{2} k^{2}
-24625107\,c_1^{5} c_2^{3} k^{2}
+19286271\,c_1^{4} c_2^{4} k^{2}
-24625107\,c_1^{3} c_2^{5} k^{2}
+ 23480604\,c_1^{2} c_2^{6} k^{2}
-9159156\,c_1c_2^{7} k^{2}
+334611\,c_1^{5} c_2\,k^{3}
-511767\,c_1^{4} c_2^{2} k^{3}
+475734\,c_1^{3} c_2^{3} k^{3}
-511767\,c_1^{2} c_2^{4} k^{3}
+334611\,c_1c_2^{5} k^{3}
-2187\,c_1^{3} c_2\,k^{4}
+4031\,c_1^{2}  c_2^{2} k^{4}
-2187\,c_1c_2^{3} k^{4},
\end{autobreak}\\

\begin{autobreak}
A_{1} = 
243\,c_1^{2}
+352\,c_1c_2
-9\,k, 
\end{autobreak}\\

\begin{autobreak}
A_{2} = 
-8000\,c_1^{5} c_2^{3}
+19683\,c_1^{6} k
-17496\,c_1^{5} c_2\,k
+3024\,c_1^{4} c_2^{2} k
+1728\,c_1^{3} c_2^{3} k
-2187\,c_1^{4} k^{2}
+3564\,c_1^{3} c_2\,k^{2}
-432\,c_1^{2} c_2^{2} k^{2}
+81\,c_1^{2} k^{3}
+36\,c_1c_2\,k^{3}
-k^{4},
\end{autobreak}\\

\begin{autobreak}
A_{3} = 
12754584\,c_1^{7}
-12171384\,c_1^{6} c_2
+3708504\,c_1^{5} c_2^{2}
+84096\,c_1^{4} c_2^{3}
+2519424\,c_1^{3} c_2^{4}
-72171\,c_1^{5} k
-3576744\,c_1^{4} c_2\,k
+5126856\,c_1^{3} c_2^{2} k
-629856\,c_1^{2} c_2^{3} k
-25272\,c_1^{3} k^{2} 
+98966\,c_1^{2} c_2\,k^{2}
+52488\,c_1c_2^{2} k^{2}
+387\,c_1\,k^{3}
-1458\,c_2\,k^{3}.
\end{autobreak}

\end{align*}

\section*{Acknowledgements}
The authors wish to thank Dr.\ Li Su for the beneficial discussions.
The authors are grateful to the anonymous referees for their helpful comments.

This work has been supported by Philosophy and Social Science Foundation of Guangdong (Grant No.\ GD21CLJ01), Natural Science Foundation of Anhui Province (Grant No.\ 2008085QA09), University Natural Science Research Project of Anhui Province (Grant No.\ KJ2021A0482), Major Research and Cultivation Project of Dongguan City University (Grant No.\ 2021YZDYB04Z).

\bibliographystyle{abbrv}
\bibliography{ref.bib}

%
%



%
%
%
%
%
%
%
%
%
%
%
%
%
%
%
%
%

\end{CJK}
\end{document}